\newcommand{\sys}{\mathsf{Bolt\textnormal{-}Dumbo\textnormal{\ }Transformer}}
\newcommand{\fACS}{\mathsf{f\textnormal{-}ACS}}
\newcommand{\Bolt}{\mathsf{Bolt}}
\newcommand{\Dumbo}{\mathsf{Dumbo}}
\newcommand{\Trans}{\mathsf{Transformer}}
\newcommand{\BDT}{\mathsf{BDT}}
\newcommand{\nwabc}{$\mathsf{nw\textnormal{-}ABC}$}
\newcommand{\Pessimistic}{\mathsf{Pessimistic}}
\newtheorem{claim}{Claim}
\newcommand{\blocks}{\mathsf{blocks}}
\newcommand{\block}{\mathsf{block}}
\newcommand{\prf}{\mathsf{Proof}}
\newcommand{\tx}{\mathsf{tx}}
\newcommand{\out}{\mathsf{log}}
\newcommand{\payload}{\mathsf{TXs}}
\newcommand{\dumbo}{\mathsf{Dumbo}}
\newcommand{\trans}{\mathsf{Transformer}}
\newcommand{\txs}{\mathsf{txs}}
\newcommand{\abandon}{\mathsf{abandon}}
\newcommand{\maxview}{\mathsf{maxPace}}
\newcommand{\pace}{{p}}
\newcommand{\id}{\mathsf{id}}
\newcommand{\tobc}{\mathsf{Bolt}}
\newcommand{\tcvba}{\mathsf{tcv\textnormal{-}BA}}
\newcommand{\acs}{\mathsf{ACS}}
\newcommand{\gethelp}{\mathsf{CallHelp}}
\newcommand{\help}{\mathsf{Help}}
\newcommand{\buf}{\mathsf{buf}}
\newcommand{\gap}{\mathsf{gap}}
\newcommand{\tip}{\mathsf{tip}}
\newcommand{\TPKE}{\mathsf{TPKE}}
\newcommand{\Enc}{\mathsf{Enc}}
\newcommand{\DecShare}{\mathsf{DecShare}}
\newcommand{\Dec}{\mathsf{Dec}}
\newcommand{\notarized}{\mathsf{pending}}
\newcommand{\bigO}{\mathcal{O}}
\newcommand{\node}{\mathcal{P}}
\newcommand{\hash}{\mathcal{H}}
\newcommand{\TSIG}{\mathsf{TSIG}}
\newcommand{\ABA}{\mathsf{ABBA}}
\newcommand{\RBC}{\mathsf{RBC}}
\newcommand{\ACS}{\mathsf{ACS}}
\newcommand{\Combine}{\mathsf{Combine}}
\newcommand{\SignShare}{\mathsf{SignShare}}
\newcommand{\VrfyShare}{\mathsf{VrfyShare}}
\newcommand{\Vrfy}{\mathsf{Vrfy}}
\newcommand{\N}{\mathbb{N}}
\newcommand{\est}{\mathsf{est}}
\newcommand{\coin}{\mathsf{Coin}}
\newcommand{\getcoin}{\mathsf{GetCoin}}
\newcommand{\view}{\textsc{PaceSync}}
\newcommand{\hlp}{\textsc{Help}}
\newcommand{\ghlp}{\textsc{CallHelp}}
\newcommand{\val}{\textsc{Value}}
\newcommand{\proposal}{\textsc{proposal}}
\newcommand{\vote}{\textsc{Vote}}
\newcommand{\prbc}{\mathsf{PRBC}}
\newcommand{\rbc}{\mathsf{RBC}}
\newcommand{\Verify}{\mathsf{Verify}}
\newcommand{\LogVerify}{\mathsf{Bolt.verify}}
\newcommand{\Extract}{\mathsf{Bolt.extract}}
\newcommand{\ignore}[1]{}
\newcommand{\rev}[1]{\textcolor{black}{#1}}
\algrenewcommand\algorithmicindent{1.0em}
\algrenewcommand\algorithmiccomment[1]{\hfill\hskip3em{\color{orange} $\rhd$ #1}}
\begin{document}
	\fancyhead{}
\title{Bolt-Dumbo Transformer:   Asynchronous  Consensus   As Fast As the Pipelined BFT} 

\author{Yuan Lu} 
\authornote{Authors are listed alphabetically. Yuan Lu and Zhenliang Lu contributed equally. A preliminary version of this paper will appear at {\sf ACM CCS 2022}.}
\affiliation{\institution{Institute of Software\and Chinese Academy of Sciences}}
\email{luyuan@iscas.ac.cn}

\author{Zhenliang Lu}\authornotemark[1]
\affiliation{\institution{School of Computer Science \and The University of Sydney}}
\email{zhlu9620@uni.sydney.edu.au}

\author{Qiang Tang}\authornotemark[1]
\affiliation{\institution{School of Computer Science \and The University of Sydney}}
\email{qiang.tang@sydney.edu.au}

\begin{abstract}
An urgent demand of deploying BFT consensus (e.g., atomic broadcast) over the Internet is   raised for implementing  (permissioned) blockchain services. The deterministic (partial) synchronous protocols can be simple and fast in good network conditions, but are subject to denial-of-service (or even safety vulnerability) when synchrony assumption fails.  Asynchronous protocols, on the contrary, are robust against the adversarial network, but are substantially more complicated and slower for the inherent use of randomness.

Facing the issues, optimistic asynchronous atomic broadcast (Kursawe-Shoup, 2002; Ramasamy-Cachin, 2005) was proposed to improve the normal-case performance of the slow asynchronous consensus. They run a deterministic fastlane if the network condition remains good, and can   fall back to a fully asynchronous protocol via a pace-synchronization mechanism (analog to view-change with asynchronous securities) if the fastlane fails. Unfortunately,  existing pace-synchronization directly uses a heavy tool of asynchronous multi-valued validated Byzantine agreement (MVBA). When such fallback frequently occurs in the fluctuating wide-area network setting, the benefits of adding fastlane can be eliminated.

We present $\sys$ (BDT), a generic framework for \emph{practical} optimistic asynchronous atomic broadcast. At the core of BDT, we set forth a new fastlane abstraction that is simple and fast, while  preparing honest parties to gracefully face potential fastlane failures caused by malicious leader or bad network. This enables a highly efficient pace-synchronization to handle fallback. The resulting design reduces a cumbersome MVBA  to a variant of the conceptually simplest  \emph{binary} agreement only. Besides  detailed security analyses, we also give concrete instantiations of our framework and implement them. Extensive experiments  demonstrate that BDT can enjoy both the low latency of deterministic protocols (e.g. 2-chain version of HotStuff) and  the   robustness of state-of-the-art asynchronous protocols in practice.

\end{abstract}

\begin{CCSXML}
<ccs2012>
<concept>
<concept_id>10002978.10003029.10011703</concept_id>
<concept_desc>Security and privacy~Usability in security and privacy</concept_desc>
<concept_significance>500</concept_significance>
</concept>
</ccs2012>
\end{CCSXML}

\ccsdesc{Security and privacy~Systems security; Distributed systems security}\ccsdesc{Computer systems organization~Reliability}

\keywords{Byzantine-fault tolerance, asynchronous consensus, optimsitic path}

\maketitle


\section{Introduction}
\label{sec:intro}



The explosive popularity of decentralization \cite{bitcoin, buterin2014next}   
creates an unprecedented demand of deploying robust Byzantine fault tolerant (BFT) consensus on the global Internet.  
These consensus protocols were conventionally abstracted as BFT atomic broadcast ($\mathsf{ABC}$) to replicate an ever-growing linearized log of transactions
among $n$ parties \cite{cachin2017blockchain}.
Informally, $\mathsf{ABC}$ ensures {\em Safety} and {\em Liveness}   despite  that an adversary controls the communication network (e.g.,  delay messages) and corrupt some participating parties (e.g., $n/3$).  Safety ensures all honest parties to eventually  output the same log of transactions, and liveness guarantees that any transaction inputted by some honest party eventually appears in honest parties' logs.


\ignore{
\begin{itemize}
	\item {\em Safety}.  The honest parties can  output the same linearly-ordered log of transactions;
	\item {\em Liveness}. Any transaction inputted by some honest party can  appear in each honest party's output log eventually; sometimes, one might consider a higher efficient liveness that the transactions shall  output reasonably quickly \cite{cachin01,ittai19,pass2017sleepy}, e.g., in a polynomial amount of rounds.\footnote{It can even be necessary to require output  within a polynomial number of rounds (in the security/system parameters), if using computationally-secure cryptographic primitives such as digital signature and hash function.}
\end{itemize}
}

\medskip
\noindent
{\bf A desideratum for robust BFT in the absence of   synchrony}. The dynamic nature of   Internet poses new fundamental challenges for implementing  secure yet still highly efficient BFT consensus protocols. 
Traditionally, most practical BFT   protocols   were studied for the in-house scenarios where participating parties are geographically close and well connected. 
Unsurprisingly, their securities rely on some form of assumptions about the network conditions. For example,  classic synchrony assumption needs all messages to deliver within a known delay, and its weaker variant called partial synchrony \cite{partialsync} (a.k.a. eventual synchrony) 
assumes that after an unknown global stabilization time (GST), all messages can be delivered synchronously.
%
%
%
Unfortunately, these    synchrony assumptions may not always hold in the wide-area network (WAN), because of fluctuating bandwidth, unreliable links, substantial delays, and even network attacks.
What's worse, in an   asynchronous network \cite{attiya2004distributed},
such (partially)  synchronous protocols \cite{pbft,tendermint,chan2020streamlet,sbft,bft-smart,thunderella,amir2010prime,700,700aublin,veronese2009spin,aublin2013rbft} will grind to a halt (i.e., suffers from the inherent {\em loss of liveness} \cite{FLP85,honeybadger}), and Bitcoin might even have a {\em safety} issue of potential double-spending \cite{saad2021revisiting} when the adversary can arbitrarily schedule message deliveries.
That said, when the network is adversarial,  
relying on synchrony could lead to fatal vulnerabilities.
 
It becomes a {\em sine qua non}  to consider robust BFT consensus that can thrive in   the unstable or even adversarial Internet  for mission-critical applications (e.g.,  financial  services or cyber-physical systems).
Noticeably, the class of fully asynchronous protocols  \cite{cachin01,honeybadger,beat,guo2020dumbo,yang2021dispersedledger}  can ensure safety and liveness simultaneously without any form of network synchrony, and thus become the  arguably most  robust candidates for implementing mission-critical applications.
\ignore{
\begin{table}[]
		\centering
		\caption{Securities of the (partial) synchronous and asynchronous protocols in the asynchronous network, i.e., when the network is the adversary}
		\label{tab:securities}	
	\resizebox{0.5\textwidth}{!}{%
		\begin{tabular}{r|l|c|c}
			\hline\hline
			\multirow{2}{*}{\textbf{\begin{tabular}[c]{@{}r@{}}Assumption\\about Network\end{tabular}}} & \multirow{2}{*}{\textbf{\begin{tabular}[c]{@{}l@{}}Exemplary \\ Protocols\end{tabular}}} & \multicolumn{2}{c}{\textbf{\begin{tabular}[c]{@{}c@{}}Securities\\ in async. network\end{tabular}}} \\ \cline{3-4} \rule{0pt}{10pt}
			&                                                                                          & \textbf{Safety}                                  & \textbf{Liveness}                                 \\ \hline\rule{0pt}{10pt}
			{\em Synchrony}                                                                               & \begin{tabular}[c]{@{}l@{}}Sync-HotStuff \cite{sync-hotstuff}, Sync-Streamlet \cite{chan2020streamlet}\end{tabular}                  & X                                                & X                                                 \\ \hline\rule{0pt}{14pt}
			\begin{tabular}[c]{@{}r@{}}{\em Partial-Synchrony}\\ \end{tabular}   & \begin{tabular}[c]{@{}l@{}}PBFT \cite{pbft}, Prime \cite{amir2010prime}, SPIN \cite{veronese2009spin}, RBFT \cite{aublin2013rbft},\\ HotStuff \cite{hotstuff}, Streamlet \cite{chan2020streamlet}\end{tabular}       & $\checkmark$                                                 & X                                     \\ \hline \rule{0pt}{15pt}
			{\em Asynchrony}                                                                             & \begin{tabular}[c]{@{}l@{}}HoneyBadger, BEAT\\ Dumbo, VABA, {\bf This Work}\end{tabular}      & $\checkmark$                                     & $\checkmark$                                      \\ \hline\hline
		\end{tabular}
	}
\end{table}
}

\medskip
\noindent
{\bf Fully asynchronous BFT? Robustness with a high price!}
Nevertheless, the higher security assurance of asynchronous BFT consensus does not come for free: the seminal FLP ``impossibility'' \cite{FLP85}  states that no {\em deterministic} protocol can ensure both safety and liveness in an asynchronous network.
So    asynchronous $\mathsf{ABC}$   must run {\em randomized} subroutines to circumvent the ``impossibility'', which already hints its complexity. 
%
%
%
%
%
%
%
%
Indeed, few   asynchronous   protocols have been   deployed in practice during the past decades due to large complexities, 
until the recent  $\mathsf{HoneyBadgerBFT}$  \cite{honeybadger} and  $\dumbo$ protocols \cite{guo2020dumbo} (and very recent their improved variants \cite{yang2021dispersedledger,guo2022speeding}) provide novel paths to {\em practical} asynchronous ABC in terms of realizing optimal linear communication cost per output transaction. 

%

Despite those  recent progresses, 
the actual performance of state-of-the-art randomized asynchronous consensus  is still far worse than the  deterministic (partial) synchronous ones (e.g., $\mathsf{HotStuff}$  \cite{yin2018hotstuff-full}\footnote{Remark that \cite{yin2018hotstuff-full} gave  a 3-chain $\mathsf{HotStuff}$ protocol along with a 2-chain variant. Throughout the paper, we let $\mathsf{HotStuff}$ refer to the 2-chain version (with minor difference to fix the view-change issue) for the lower latency of the 2-chain version.}), especially regarding the critical latency metric. 
For example, in the same WAN deployment environments consisting of $n$=16, 64, 100 Amazon  EC2 instances across the globe,
$\mathsf{HotStuff}$ is dozens of times faster than the state of the art $\dumbo$ protocol \cite{guo2020dumbo}. 
Even worse,
the inferior latency performance of  asynchronous protocols 
 stems from the fact that all parties  generate some common randomness (e.g.,  ``common coin'' \cite{canetti1993fast,cachin00}),
and multiple repetitions are necessary to ensure the parties to coincidentally output with an overwhelming probability.
Even if in one of the fastest existing asynchronous protocols such as $\dumbo$ \cite{guo2020dumbo} and its improved version $\mathsf{Speeding}$-$\dumbo$ \cite{guo2022speeding}, they still cost about a dozen of   rounds  on average.
While for their (partial) synchronous counterparts,
only a very small number of rounds are required in the optimistic cases when the underlying  communication network is   luckily synchronous \cite{abraham2021good}, e.g., 5 in the two-chain $\mathsf{HotStuff}$ and 3 in PBFT. 

\ignore{
\footnote{One might   suggest to choose a conservative timeout parameter to make (partially) synchronous BFT survive in the adversarial network. Besides that they cannot be secure in asynchronous network, it might also bring serious performance degradation in latency, e.g., the exaggeratedly slow   Bitcoin.
	Actually, a large number of ``robust'' (partially) synchronous protocols such as Prime \cite{amir2010prime}, SPIN \cite{veronese2009spin} and RBFT \cite{aublin2013rbft}   are still  subject to this robustness-latency trade-off, let alone their main motivation is to mitigate the impact of corrupted parties instead of lifting robustness against adversarial network. 
}
}


The above issues correspond to a fundamental ``dilemma'' lying in the  design space of BFT consensus protocols suitable for the open Internet: the cutting-edge (partially) synchronous deterministic protocols can optimistically work very fast,  but lack  liveness guarantee in adversarial networks;
on the contrary, the fully asynchronous randomized protocols  are robust even in malicious networks, but  suffer from poor latency performance in the normal case. 
Facing that, a natural question arises:  

\begin{center}
	{\em Can we    design a BFT consensus  achieving  the best of both synchronous and asynchronous paradigms, such that it (i)  is   ``as fast as'' the state-of-the-art    deterministic  BFT consensus on the normal Internet with fluctuations and (ii)    performs nearly same to the existing performant    asynchronous BFT consensus  even if in a worst-case asynchronous network?
	}
\end{center}
 
%
%
%


\subsection{Our contributions}

We answer  the aforementioned question affirmatively, by presenting the first {\em practical} and {\em generic} framework for   optimistic  asynchronous  atomic broadcast  called $\sys$ (or $\BDT$ for short). Here, optimistic  asynchronous  atomic broadcast \cite{kursawe05,cachin05,gelashvili2021jolteon} refers to an asynchronous consensus that has a deterministic fastlane that might luckily progress in the benign network environment (optimistic case) without randomized execution.

\begin{figure}[h] 
	\vspace{-0.2cm}
	\begin{center}
		\includegraphics[width=8.3cm]{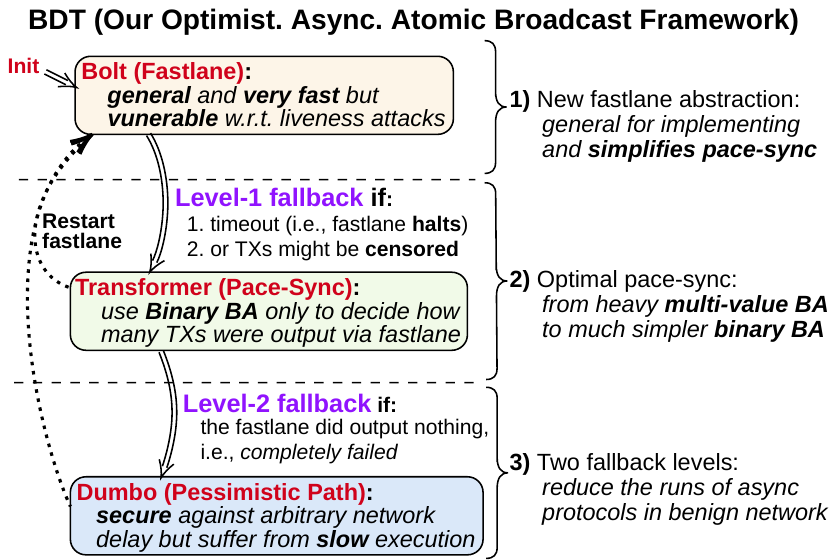}
		\vspace{-0.2cm}
		\caption{The overview of $\sys$.}	\label{fig:BDT}
	\end{center}
	\vspace{-0.3cm}
\end{figure}


At a very high-level, $\BDT$   has three phases as shown in Fig. \ref{fig:BDT}:
\begin{itemize}[leftmargin=0.7cm]
	\item {\em Fastlane} (nickname $\mathsf{Bolt}$): It initially runs a deterministic protocol as  fastlane to quickly progress in the optimistic case that synchrony assumption holds.
	\item {\em Pace-synchronization} (called pace-sync for short or nickname $\trans$): If the fastlane fails to progress in time, a  fast {pace-synchronization} mechanism (analog to view-change with asynchronous securities)  is triggered to make all honest parties agree on from where and how to restart (directly restart $\Bolt$ or enter pessimistic path).
	\item {\em Pessimistic path} (nickname $\Dumbo$ \footnote{Here we call the pessimistic path $\Dumbo$ and call the concrete ABC protocol in \cite{guo2020dumbo} Dumbo-BFT. The latter could be an instantiation for the former (as in our experiments), but other better ABC protocol could also instantiate the former pessimistic path.}): In case the fastlane was completely failed to make progress, the honest parties enter a {pessimistic path} of asynchronous BFT to ensure liveness even in the worst case. 
\end{itemize}

{$\BDT$ is featured with guaranteed {\em liveness} and safety even in a hostile asynchronous network} with optimal tolerance against $n/3$ byzantine parties.
More importantly, as depicted in Fig. \ref{fig:changing-net}, it indeed realizes the best of its both paths, i.e., it is as fast as deterministic protocols (such as the state-of-the-art pipeline BFT protocols, e.g.,   2-chain   $\mathsf{HotStuff}$) in ``good'' synchronous periods; and also as robust as asynchronous protocols in the ``bad'' network. 

\ignore{
\begin{itemize}
	\item In the normal cases where  the network is mostly synchronous  and only encounters short-term network fluctuations,
	$\BDT$ can be as fast as the state-of-the-art pipeline BFT protocols (e.g.,   2-chain   $\mathsf{HotStuff}$) and   performs much faster than existing asynchronous   protocols.
	\item In the worse cases where    long-lasting   asynchronous periods exist, $\BDT$ can   perform as robustly as the underlying asynchronous protocol to closely track the actual network bandwidth and delay, despite the synchronous protocols might suffer from denial-of-service in the environment.
\end{itemize}
}

\begin{figure}[h] 
	\vspace{-0.2cm}
	\begin{center}
		\includegraphics[width=8cm]{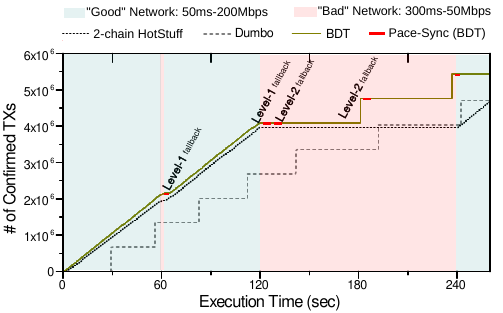}\\
		\vspace{-0.3cm}
		\caption{Simulated executions of $\mathsf{BDT}$, 2-chain $\mathsf{HotStuff}$ and $\dumbo$-{\sf BFT} under fluctuating network ($n$=64). Bad network happens twice: one lasts 2 seconds and one lasts 120 seconds. See Section \ref{evaluation} for the details on simulation setup.}	\label{fig:changing-net}
	\end{center}
	\vspace{-0.3cm}
\end{figure}

\smallskip
\noindent
{\bf Technical overview}.
Different from pioneering studies  \cite{kursawe05,cachin05,700,700aublin} that only demonstrated theoretic feasibility and had questionable practicability because of complex and slow {\em asynchronous} pace-synchronization (cf. Section \ref{sec:prior} for detailed discussions on their efficiency bottleneck),
$\BDT$ makes  several technical contributions   to harvest the best of both paths in practice. In greater detail,



	\smallskip
		\noindent\underline{\smash{\em A new fastlane abstraction \rev{better prepared for failures}}}.
%
\rev{
To simplify the   complicated pace-sync, 
we propose a new fastlane abstraction of notarizable weak atomic broadcast ($\mathsf{nw\textnormal{-}ABC}$ for short)
to prepare    honest parties in a graceful condition  when facing potential fastlane failures.} 
Notably, $\mathsf{nw\textnormal{-}ABC}$ realizes $\mathsf{ABC}$   in the optimistic case, and only ensures ``notarizability'' otherwise:  any output block is with a quorum proof to  attest that sufficient honest parties have  received a previous block (along with valid  proof). Such an $\mathsf{nw\textnormal{-}ABC}$ can be easily constructed to be very fast, e.g.,   from a sequence of simple (provable) multicasts;
and more importantly, the notarizability (as we will carefully analyze) guarantees that any two honest parties will be at neighboring blocks when entering pace-sync, \rev{thus enables us to leverage simpler binary agreement to replace the cumbersome full-fledged asynchronous atomic broadcast or  multi-value agreement  used in prior art \cite{700aublin,700,kursawe05,cachin05}. }

\medskip
	\noindent\underline{\smash{\em  Cheapest possible pace-synchronization}}. More importantly, with the preparation of $\mathsf{nw\textnormal{-}ABC}$, 
	$\Trans$
 reduces pace-sync to a problem that we call {\em two-consecutive-valued Byzantine agreement} ($\tcvba$), which is essentially an  asynchronous {\em binary} Byzantine agreement ($\ABA$). In contrast, prior art  \cite{kursawe05,cachin05} leveraged cumbersome multi-valued agreement ($\mathsf{MVBA}$) for pace-sync (cf. Sec. \ref{sec:prior} for a careful review).  
	%
	%
	  $\Trans$ thus improves the communication complexity of pace-sync   by an $\bigO(n)$ factor, and is essentially optimal for  pace-sync, because the pace-sync problem can be viewed as a version of asynchronous consensus, and $\ABA$ is the arguably simplest asynchronous consensus. 
 In practice, $\Trans$  attains a minimal overhead   similar to the fastlane latency. Further care is needed for invoking $\mathsf{nw\textnormal{-}ABC}$ to ensure the safety (see next section).


%

\medskip
\noindent\underline{\smash{\em Avoiding pessimistic path whenever we can}}. To further exploit  the benefits brought by   fast $\Trans$, we add a simple check after pace-sync to create two-level fallbacks: 
	if pace-sync reveals that  the fastlane still made some output, it immediately restarts another fastlane  without running the actual pessimistic path.
	%
	This is in contrast with previous works \cite{kursawe05,cachin05} where the slow pessimistic path will always   run after each pace-sync, which is often unnecessarily costly if there are only short-term network fluctuations. Remark that the  earlier studies cannot effectively adopt our two-level fallback tactic, because their heavy pace-sync  might   bring extra cost \rev{and it may even nullify the advantages of the fastlane} in case of frequent fallbacks.
	
	
	\medskip 
	\noindent\underline{\smash{\em Generic framework enabling flexible instantiations}}.	$\mathsf{BDT}$  is generic, as it  enables  flexible choices of the underlying building blocks for all three phases. For example, we present two exemplary fastlane instantiations, resulting in two $\mathsf{BDT}$ implementations that   favor  latency and throughput, respectively, so one can instantiate $\mathsf{BDT}$ according to the actual application scenarios. Also,  $\Trans$ can be constructed around any asynchronous binary agreement, thus having the potential of using any more efficient $\ABA$ to further reduce the fallback overhead  \rev{(e.g., by adopting the  recent progress from Crain \cite{crain2020two}, Das et al. \cite{das2021practical} and Zhang et al. \cite{zhang2022pace})}. Similarly,  though currently we use  $\mathsf{Dumbo}$-BFT as the pessimistic path, this can be   replaced by more efficient recent designs \cite{guo2022speeding}.

 \medskip
\noindent
{\bf Extensive  evaluations}.
To demonstrate the practical performance of $\BDT$,
we  implement the  framework using $\Dumbo$-{\sf BFT} \cite{guo2020dumbo} as the exemplary   pessimistic path. 
%
%
We compare   two typical $\BDT$ implementations to   $\dumbo$ and $\mathsf{HotStuff}$, and conduct extensive experiments in   real-world/simulated environments. 

We first deploy all protocols  in the same real-world WAN environment consisting of up to 100 Amazon EC2 c5.large instances     across the globe. 
Some highlighting   experimental results could be found in Table \ref{table:summary}, which are: (i) 
the $\BDT$ implementation based on sequential multicasts can  attain a basic latency about only 0.44 second (i.e., nearly same to 2-chain HotStuff's and less than 3\% of $\dumbo$'s latency),
even if we intentionally raise frequent pace synchronizations after every 50 optimistic blocks; 
(ii) in the worst case that we intentionally make fastlane to always fail,
the throughput of $\BDT$ remains 90\% of $\dumbo$ BFT's and is close to 20,000 transaction per second, 
even if we do let $\BDT$ wait as long as 2.5 seconds to timeout before invoking pace-sync.
Those demonstrate  $\BDT$  can be ``as fast as'' the deterministic protocols in normal cases and can  maintain  robust performance in  the worst case.

To understand  more scenarios in-between, we then conduct  evaluations in a controlled test environment that can simulate fluctuating network (i.e., switching between ``good'' and ``bad'' networks).
In the simulated good network   (i.e., 50 ms packet delay and 200 Mbps peer-to-peer link), $\BDT$ is almost as fast as 2-chain HotStuff; while in the simulated bad network (i.e., 300 ms packet delay and 50 Mbps peer-to-peer link), $\BDT$ can closely track  the performance of underlying pessimistic asynchronous protocol, though HotStuff might grind to a halt due to inappropriately chosen timeout parameter.

See Section \ref{evaluation} for  more detailed experiment  setup and   results.

\begin{table}[htp]
	\vspace{-0.15cm}
	\captionsetup{font={normalsize}}
	\caption{$\mathsf{BDT}$, 2-chain $\mathsf{HotStuff}$ and $\dumbo$-{\sf BFT}  running over {\bf 100}   EC2 c5.large servers across {\bf 16} regions in {\bf 5} continents}
	\vspace{-0.45cm}
	\label{tab:experiement-highlight}
	\resizebox{0.495\textwidth}{!}{%
		
		\begin{tabular}{c|c|c|c|c|c}
			
			 \hline\rule{0pt}{18pt} 
			\multirow{2}{*}{}                                                  & \multicolumn{3}{c|}{\textbf{\begin{tabular}[c]{@{}c@{}}Good-Case v.s. HotStuff\\ (fastlanes always complete)\end{tabular}}}                   & \multicolumn{2}{c}{\textbf{\begin{tabular}[c]{@{}c@{}}Worst-Case  v.s. Dumbo\\ (fastlanes always fail)\end{tabular}}} \\ \cline{2-6} \rule{0pt}{15pt}
			& \begin{tabular}[c]{@{}c@{}}\hspace{-0.15cm}$\mathsf{BDT}$-$\mathsf{sCAST}$$^{**}$\end{tabular} & \begin{tabular}[c]{@{}c@{}}\hspace{-0.15cm}$\mathsf{BDT}$-$\mathsf{sRBC}$$^\dagger$\end{tabular} &\hspace{-0.15cm} $\mathsf{HotStuff}$$^\S$ & \begin{tabular}[c]{@{}c@{}}\hspace{-0.15cm}$\mathsf{BDT}$-$\mathsf{Timeout}$$^\ddagger$\end{tabular}                       & \hspace{-0.15cm}$\mathsf{DumboBFT}$                      \\ \hline\rule{0pt}{10pt}
			\hspace{-0.25cm}Basic latency    (sec)      & 0.44                                                     & 0.67                                                      & 0.42     & 21.95                                                                               & 16.36                      \\ \hline\rule{0pt}{10pt}
			\begin{tabular}[c]{@{}c@{}}\hspace{-0.25cm}Throughput   (tx/sec)$^{*}$ \end{tabular} & 9,253                                                     & 18,234                                                     & 10,805    & 18,806                                                                               & 21,242                      \\ \hline 
		\end{tabular}
	}%
	
	{
		\footnotesize
		\begin{itemize}[leftmargin=0.4cm]
			\item[*] Each transaction has 250 bytes to approximate the basic Bitcoin transaction.
			\item[**] $\mathsf{BDT}$-$\mathsf{sCAST}$ is $\mathsf{BDT}$ with using $\mathsf{Bolt}$-$\mathsf{sCAST}$, where $\mathsf{Bolt}$-$\mathsf{sCAST}$ is the   fastlane instantiation built from pipelined multicasts.
			\item[$^\dagger$] $\mathsf{BDT}$-$\mathsf{sRBC}$  is $\mathsf{BDT}$ with using $\mathsf{Bolt}$-$\mathsf{sRBC}$, where $\mathsf{Bolt}$-$\mathsf{sRBC}$ is the   fastlane instantiation built from sequential reliable broadcasts.
			\item[$^\ddagger$]  $\mathsf{BDT}$-Timeout idles for 2.5 sec in the  fastlane,  and then runs Pace-Sync+$\dumbo$.
			\item[$^\S$] Reasons why our evaluations for HotStuff   different from \cite{yin2018hotstuff-full}: (i)  we   tested  among 16 AWS regions instead of   one AWS region; (ii) we let the  implementation  to   agree  on 250-byte tx instead of 32-byte tx hash; (iii)   we use a single-process   network layer written in Python, differing from    multi-processing network layer in \cite{yin2018hotstuff-full}.
		\end{itemize}
	}

	\label{table:summary}
	\vspace{-0.2cm}
\end{table}

\section{Efficiency Bottleneck of Prior Art and Our Solution in a Nutshell}\label{sec:prior}


\noindent
{\bf Efficiency obstacles in prior art}.
As briefly mentioned, pioneering works of Kursawe-Shoup \cite{kursawe05} (KS02)  and a following improvement of Ramasamy-Cachin \cite{cachin05} (RC05)
initiated the study of {\em optimistic asynchronous atomic broadcast} by adding a deterministic fastlane to fully asynchronous atomic broadcast, and they adopted multi-valued {\em validated} Byzantine agreement ($\mathsf{MVBA}$) to facilitate fallback once the fastlane fails to progress.

Nevertheless, these prior studies are theoretical in   asynchronous networks, as they rely on heavy $\mathsf{MVBA}$ or even heavier full-fledged state-machine replication for fallback.
Serious efficiency hurdle remains in such cumbersome fallback, thus failing to harvest the best of   both paths in practice. 
Let us first overview the remaining   hurdles and  design challenges.

\ignore{
	
	From a high-level, such a ``hybrid" structure proceeds as follows:
	(i) when the network stays in good conditions, 
	the deterministic optimistic path can succeed to offer high performance similar to the cutting-edge (partially) synchronous protocols, and (ii) when the optimistic path fails to progress (e.g., within a chosen time parameter), it can enable parties to jointly fallback into the robust pessimistic path with randomized executions, so it can tolerate arbitrarily delayed networks to restore liveness and recover to the optimistic case as soon as possible.
	
	More interestingly,    this ``hybrid'' paradigm for asynchronous BFT might  bring more practical merits. Recall that when deploying deterministic (partially) synchronous protocols in the open Internet, the engineers have to conservatively choose the timeout parameter (or to adaptively increase it) 
	in order to ensure the synchrony assumption to hold. This in turn leaves a chance to degrade the performance. Now, the fallback mechanisms ensures that liveness can be safeguarded by the asynchronous protocols in the worst case, so engineers can aggressively set the timeout parameter according to their optimistic estimation about benign network conditions for good days only,  making the optimistic path to be unchained and run in full horse-power to be even responsive.
}

\smallskip
\noindent
\underline{\smash{\em Challenge  and effiency bottleneck lying in pace-synchronization}}.
As Fig. \ref{fig:ks02} illustrates, the fastlane of KS02 and RC05
  directly employs a sequence of some broadcast primitives (the output of which is also called a block for brevity). 
If a party does not receive a block within a period (defined by a timeout parameter), then it  requests   fallback by informing other parties about the index of the block that it just received. 
When the honest parties receive a sufficient number of fallback requests (e.g., $2f+1$ in the presence of $f$ faulty parties), 
they execute the {\em pace-synchronization} mechanism    to decide where to continue   the pessimistic path.

%
Since different honest parties may have different progress in the fastlane when they decide to fall back, e.g., some are now at block 5, some at block 10, thus {pace-synchronization} needs to ensure: (i) all honest parties can eventually enter the pessimistic path from the same block; and (ii) all the ``mess-ups'' (e.g., missing blocks) left by the fastlane can be properly handled. Both requirements should be satisfied in an asynchronous network! These requirements hint that all the parties may need to {\em agree} on a block index that is proposed by some honest party, 
otherwise they might decide to sync up to some blocks that were never delivered. Unfortunately, directly implementing such a functionality 
requires one-shot asynchronous (multi-valued) Byzantine agreement with strong validity
(that means the output must be from some honest party), which is   infeasible because of inherent exponential communication \cite{fitzi2003efficient}.

\begin{figure}[h]
	\vspace{-0.4cm}
	\begin{center}
		\includegraphics[width=8.5cm]{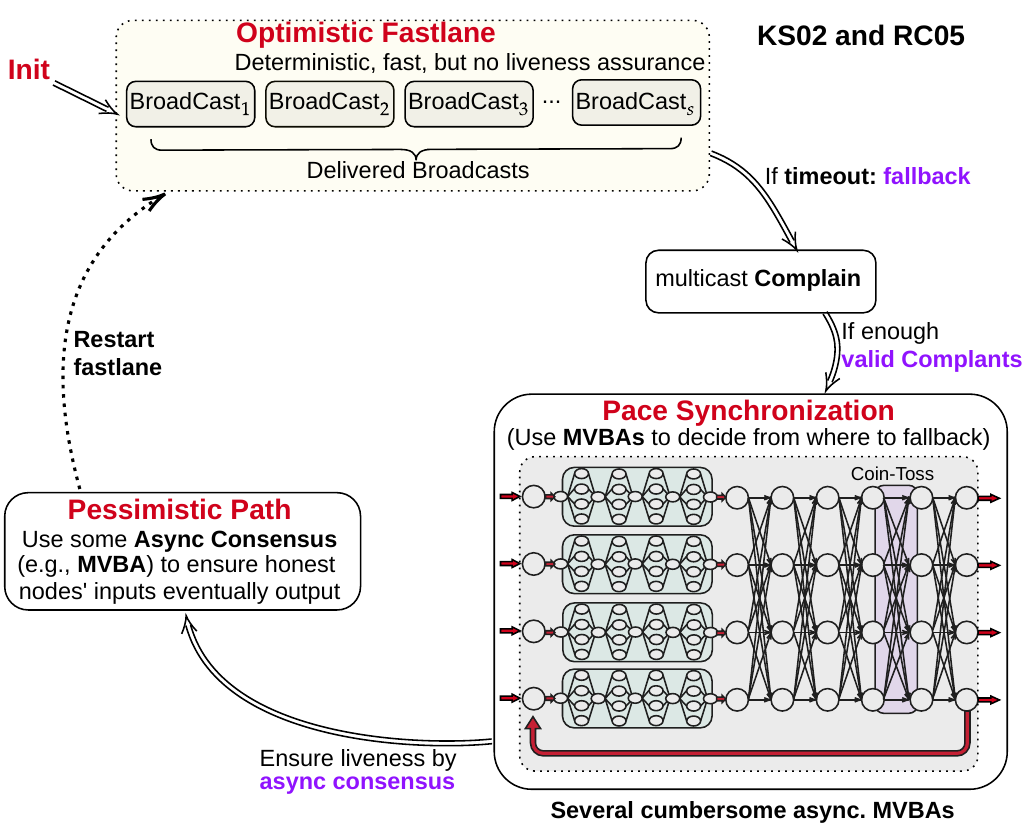}\\
		\vspace{-0.2cm}
		\caption{Execution flow of KS02 \cite{kursawe05} and RC05 \cite{cachin05}. Both rely on cumbersome asynchronous MVBA  to do pace-sync.}	\label{fig:ks02}
	\end{center}
	\vspace{-0.3cm}
\end{figure}


As depicted in Fig. \ref{fig:ks02}, both KS02 and RC05 smartly implement  {pace-synchronization} through asynchronous multi-valued {\em validated} Byzantine agreement ($\mathsf{MVBA}$) to get around the infeasible strong validity.  An $\mathsf{MVBA}$ is a weaker  and implementable form of asynchronous multi-valued Byzantine agreement, the output of which is allowed to be from a malicious party but has to satisfy a predefined predicate. Still, $\mathsf{MVBA}$ is a  cumbersome building block (and can even construct full-fledged asynchronous atomic broadcast directly \cite{cachin01}). 
What's worse, KS02 and RC05 invoke this heavy primitive for both pace-synchronization and pessimistic path, causing at least  $\bigO(n^3)$-bit communication and dozens of rounds. Although we may reduce the $\bigO(n^3)$ communication to $\bigO(n^2)$ by some very recent results (e.g.,  Dumbo-MVBA   \cite{lu2020dumbo}), however, they remain costly in practice due to a large number of extra execution rounds and additional computing costs (e.g., erasure encoding/decoding).

\smallskip
\noindent
\underline{\smash{\em \rev{Slow pace-sync remains in a more general framework  \cite{700aublin}}}}. \rev{Later, \\Aublin et al. \cite{700aublin} studied a   more general framework that is flexible to assemble optimistic fastlanes and  full-fledged BFT   protocols,
	as long as   the underlying modules all satisfy a defined $\mathsf{Abstract}$ functionality.
	To facilitate fallback  when the fastlane fails due to network asynchrony or corruptions, 
	\cite{700aublin} used a stronger version of $\mathsf{Abstract}$ variant with guaranteed liveness (called $\mathsf{Backup}$).
	$\mathsf{Backup}$ can guarantee all parties to output exact $k$ common transactions \cite{700aublin}, 
	so   it can handle fallback by first finishing pace-sync, then deciding some output transactions (i.e., running as the pessimistic path), and finally restarting the fastlane.
}

\rev{ Aublin et al. \cite{700aublin} also pointed out that $\mathsf{Backup}$ (with guaranteed progress) can be obtained from full-fledged BFT protocols. 
	For example, \cite{700aublin} gave exemplary $\mathsf{Backup}$ instantiations based on PBFT \cite{castro2002practical} and Aardvark \cite{clement2009making} in the partially synchronous setting. 
	This indicated another feasible way to implement asynchronous fallback, i.e., implement $\mathsf{Backup}$ by full-fledged  asynchronous BFT protocols.}

\rev{Unfortunately, when $\mathsf{Backup}$ is implemented via full-fledged  asynchronous BFT, it would be  as heavy as $\mathsf{MVBA}$ (or even heavier), since  most existing   performant asynchronous BFT protocols are either constructed from $\mathsf{MVBA}$ \cite{guo2020dumbo,guo2022speeding} or have implicit $\mathsf{MVBA}$ \cite{gelashvili2021jolteon}. 
	That said, though the framework presented in \cite{700aublin} is more general than KS02 and RC05, 
	it is not better than KS02 and RC05 with respect to the efficiency of pace-sync  (and thus has the same efficiency bottleneck lying in pace-sync).}

\rev{	
In contrast, we identify an extra simple property    (not covered by $\mathsf{Abstract}$ \cite{700aublin}), so the new fastlane abstraction (1) enables us to utilize a much simpler asynchronous pace-synchronization, and (2) still can be easily obtained with highly efficient instantiations.}

\ignore{
	Intuitively, when the fastlane cannot make  progress in time, 
	a fallback mechanism  must be activated to ensure all honest parties to: 
	(i) make an  agreement on the need of fallback since   enough parties have sensed the    fastlane failure;  
	(ii)   reach a consensus on the current fastlane progress such that all parties can realize common states from which to continue; 
	then (iii)  continue the execution by running some randomized  asynchronous protocols  in case   the fastlane completely fails, or re-enter the fastlane otherwise. The  fallback mechanism   must be safe and live in   fully {\em asynchronous} networks to ensure robust liveness of the entire protocol.
	
	====
	
	Earlier studies \cite{kursawe05, cachin05} demonstrated the theoretic feasibility of implementing the needed asynchronous fallback through   using  asynchronous multi-valued Byzantine agreement with external validity  (MVBA) \cite{cachin01}, which is an advanced   asynchronous agreement building block that can directly imply almost all flavors of asynchronous BFT protocols from complex   atomic broadcasts to   primitive   binary agreements \cite{cachin01,guo2020dumbo,cachin2002asynchronous}.
	%
}


\smallskip
\noindent
\underline{\smash{\em Consequences of slow pace-synchronization.}}   
The inefficient pace-sync   severely harms the practical effectiveness of adding fastlane.
In particular, when  the network   may  fluctuate as in the real-world Internet, the pace-sync   phase  might be triggered frequently, 
and its high cost   might eliminate the benefits of adding fastlane.

\begin{figure}[h] 
	\vspace{-0.4cm}
	\begin{center}
		\includegraphics[width=8cm]{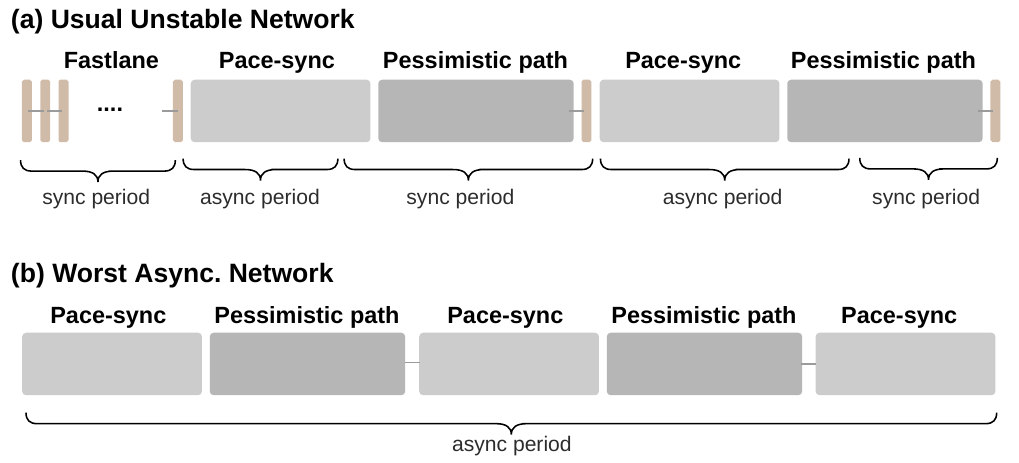}\\
		\vspace{-0.4cm}	\caption{Consequence of slow fallback in KS02/RC05 in fluctuating networks. The length of each phase denotes   latency.}	\label{fig:bottleneck}
	\end{center}
	\vspace{-0.4cm}
\end{figure}

To see the issue,  
consider the heavy pace-sync  of existing work   that is as slow as the asynchronous pessimistic path
and dozens of times slower than the fastlane.\footnote{Actual situation might be much worse in RC02 \cite{kursawe05} because several more MVBA invocations with much larger inputs are executed in the pace-sync.}
As Fig. \ref{fig:bottleneck} (a) exemplifies,
although the network stays  in good conditions for the majority of  time, the overall average latency of the protocol is still way larger than its  fastlane. One  slow fallback could ``waste'' the gain of dozens of optimistic blocks,  and it essentially renders the optimistic fastlane ineffective. In the extreme case shown in Fig. \ref{fig:bottleneck} (b), the fallback  is always triggered because the fastlane leaders are facing adaptive denial-of-service attack,  
it even doubles the cost of simply running the pessimistic asynchronous protocol alone.

%
%

It follows that in the wide-area Internet,    {\em inefficient pace synchronization} in previous theoretical protocols   likely eliminates the potential benefits of    optimistic fastlane, and thus their applicability is limited.  
So a fundamental practical challenge remains to minimize the overhead of pace-sync, such that we can harvest the  best of both paths  in optimistic asynchronous atomic broadcast.
\ignore{
	\footnote{Different from the challenging fallback to the pessimistic path, making all parties to restart the fastlane nearly simultaneously can be   trivial  if the underlying network is indeed  synchronous.
		Imagine  the next simple ``synchronizer'' to handle this trivial task:  each party   multicasts a ``{\textsc{TryBolt}}'' message to announce the termination of its pessimistic path; in addition, it waits for $f+1$ ``{\textsc{TryBolt}}''     to    multicast ``{\textsc{StartBolt}}''; it also waits for $n-2f$ ``{\textsc{StartBolt}}'' and multicasts the message if not sending it yet; so everyone can       receive $n-f$ ``{\textsc{StartBolt}}'' messages nearly at the same time to  re-enter the fastlane.
		We omit the    trivial process throughout the paper.}
}

\medskip
\noindent
{\bf Our Solution in a Nutshell}.
Now we walk through how we overcome the above challenge and reduce the   complex pace-sync  problem to only a variant of asynchronous binary agreement.


	\smallskip
	\noindent \underline{\smash{\em First ingredient: a  new abstraction of the fastlane}}.
	We put forth a new simple  fastlane abstraction called {\em notarizable weak atomic broadcast} (\nwabc, with nickname $\Bolt$). In the optimistic case, it performs as a full-fledged atomic broadcast  protocol and can output  a block per $\tau$ clock ticks. But if the synchrony assumption fails to hold, it won't have liveness nor exact agreement, only a  {\em notarizability} property can be ensured: whenever any party outputs a block at position $j$ with   a valid quorum proof, at least $f+1$ honest parties already output   at the position $j-1$, cf. Fig. \ref{fig:nw-abc}. 
	%
	%
	%
	

		\smallskip
	\noindent \underline{\smash{\em How ``notarizability'' better prepares   honest parties?}} To see how      notarizability simplifies pace-sync, 
	let us examine the pattern of the honest parties' fastlane outputs before entering pace-sync. 

	Suppose all honest parties have quit the fastlane,
	exchanged their fallback requests (containing their latest block index and the corresponding quorum proof),
	received such $2f+1$ fallback requests, and thus entered the pace synchronization.
	At the time, let $s$ to be the largest index of all fastlane  blocks with valid proofs.  

	We can   make two easy claims: 
	(i) no honest party can see a valid fallback request with an index equal or larger than $s+1$;
	(ii) all honest parties must see some fallback request with an index equal or larger than $s-1$. 
	%
	%
	If (i) does not hold, following notarizability, at least one party can produce a proof for block $s+1$, which contradicts the definition of $s$. 
	While for (ii), since block $s$ is with a valid proof, at least $f+1$ honest parties received block $s-1$ with valid proof.
	So for any party waits for $2f+1$ fallback requests, it must see at least one fallback sent from some of these $f+1$ honest parties, thus seeing $s-1$; otherwise, there would be $3f+2$ parties. 
	
	The above two   claims    narrow the range of the honest parties' fallback positions  to   $\{s-1,s\}$, i.e., {\em two  unknown   consecutive integers}. 
	
	

	\begin{figure}[h] 
		\vspace{-0.45cm}
		\captionsetup{font={normalsize}}
		\begin{center}
			\includegraphics[width=8.5cm]{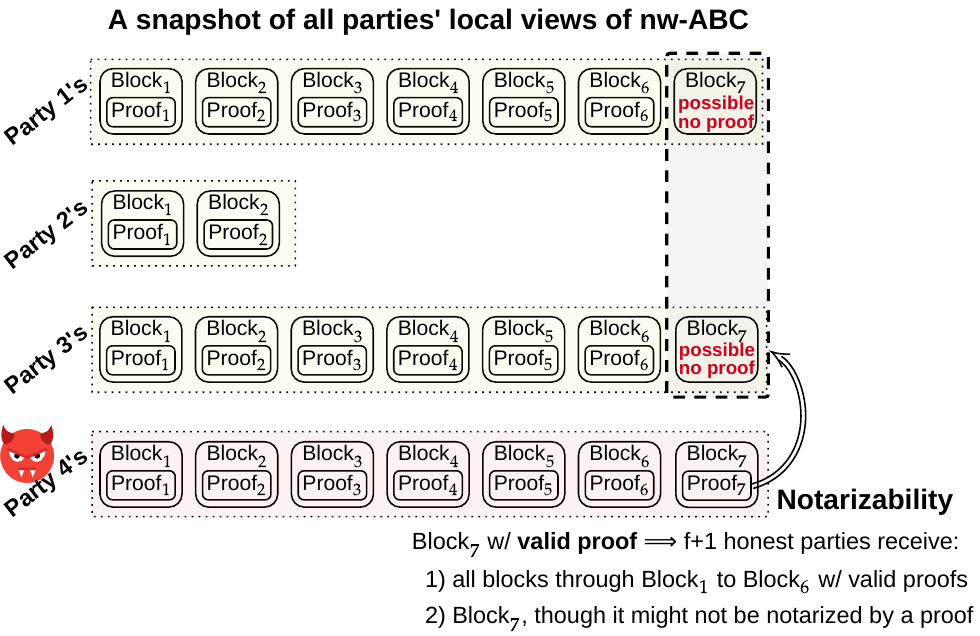}\\
			\vspace{-0.45cm}
			\caption{  {Notarizability}   of   fastlane abstraction ($\mathsf{nw\textnormal{-}ABC}$).}	\label{fig:nw-abc}
		\end{center}
		\vspace{-0.5cm}
	\end{figure}
	
	\smallskip
	\noindent \underline{\smash{\em Second ingredient: async. agreement for   consecutive values}}.
	Pace-sync now is reduced to pick one  value  of two  unknown consecutive integers  $\{s-1,s\}$. To handle the problem, we further define  {\em two-consecutive-valued Byzantine agreement} ($\tcvba$), which can be easily implemented  from any asynchronous {\em binary} Byzantine agreement (cf. Section \ref{sec:fast} for the concrete construction).
	
	%
	%

\ignore{
	 {\em A general practical framework and further benefits}.
	With the two abstractions, we give a generic framework of \BDT: we start with an \nwabc, when sufficient honest parties get ``stuck'', they multicast their latest block index with the ``certificate'' and start $\tcvba$. Once they agree how to proceed and synchronize the tail, either they run \Dumbo, or restart \Bolt. Thanks to the {\em notarizability} property of \nwabc\/, when \Trans\/ is invoked, indeed enough honest  players will enter, and have one of the two indices as input thus the simple  $\tcvba$ can be used. 
	Detailed security analysis requires more efforts and we prove  that the  \BDT\/  protocol    
	can attain  safety and liveness in the fully asynchronous setting, without global clock or any timing assumptions.
}
	\smallskip
	\noindent\underline{\smash{\em Final piece of the puzzle: adding ``safe-buffer'' to the fastlane.}} When $\tcvba$ outputs $u$,  all honest parties can sync up to block $u$ accordingly. Because no matter $u$ is $s$ or $s-1$, the $u$-th fastlane block is with a valid quorum proof, so it can be retrieved due to notarizability (cf. Fig. \ref{fig:nw-abc}).  
	Nevertheless, a subtle issue remains: $\tcvba$ cannot guarantee $u=s$, and thus $u$ could be $s-1$. 
	This is because an asynchronous adversary can always delay the messages of the honest parties that input $s$ to make them seemingly crashed.
	That means, if a party outputs a fastlane block immediately when seeing its   proof,
	it faces a threat that the pace-sync  returns a   smaller index and revokes the latest output fastlane  block.
	This can even temporarily violate safety requirement, if   other parties output a different block after pace-sync.
	To solve the issue,
	we introduce a ``safe buffer'' to let the newest fastlane block
	 be {\em pending}, i.e., not output until one more fastlane block with valid proof is received.
	

	\ignore{
	There are several  further benefits brought about by the efficient fallback: (i) the time-out parameter now can be set aggressively, e.g., using the estimation in good days only. Doing this could quickly replace a slow (or even malicious) leader. 
	(ii). To better exploit the efficiency of our \Trans\/, we add a simple check to minimize the actual usage of \Dumbo: except when all honest parties  made no progress at all in the fastlane and thus have to execute the pessimistic asynchronous protocol to move forward, we let them switch back to the fastlane immediately.
	This is in contrast with previous works of \cite{kursawe05,cachin05} that the pessimistic protocol will always be run at the end of fallback, which is often unnecessarily costly. (iii) We may also execute \Trans\/ once in a while (without triggering the pessimistic path), to give each party a chance to be fastlane leader to propose blocks. 
	}
\ignore{
	
	When the optimistic path fails and has to fallback into the pessimistic path, different honest parties might have their own ``views'' on how many messages already delivered during the optimistic epoch, and it becomes necessary to make these honest parties to agree on a consistent ``view'' before falling back.
	To this end, all prior studies \cite{cachin05,kursawe05} require correct parties to ``multicast'' their ``views'' along with ``proofs'' attesting their reach of the corresponding ``views'', and then activate an MVBA instance with passing a vector of $n-f$ valid ``view-proof'' pairs as input \cite{cachin05, kursawe05},	which at least causes  $\bigO(n^3)$ communications (by any existing MVBA constructions \cite{cachin01, ittai19}). 
	Instead, we develop a novel path fundamentally different from \cite{cachin05} and \cite{kursawe05} 
	to avoid the redundancy of invoking MVBA to agree on a vector of $n-f$  ``view-proof'' pairs but directly try to agree on a ``view'' number (that is very closed to the actual highest ``view''), which brings an improvement of $\bigO(n)$ factor in communication.

	\item {\em Practical pessimistic path from fast asynchronous common subset.} 
	The recent elegant results of asynchronous atomic broadcast \cite{honeybadger, beat} can be naturally leveraged as the core component of the pessimistic path. However, these studies center around the asynchronous common subset (ACS) due to Ben-Or {\em et al.}, which suffers from $\bigO(\log n)$ running time and $\bigO(n^3)$ messages per instance. 
	We develop fast ACS (or $\fACS$ in short), which is a novel trade-off option for ACS in the style of Ben-Or's. In particular, the critical running time and the number of messages can be asymptotically reduced. While the ``price'' of above asymptotic improvements is just the slightly amplified bandwidth usage by a constant factor $\kappa$, where $\kappa$ is a security parameter (e.g. $\kappa = 21$).

	The new $\fACS$ design offers a novel trade-off option for the pessimistic path of our protocol, such that it can consume less messages and progress much faster than HBBFT \cite{honeybadger} and any variants \cite{beat} (without asymptotically scarifying communication complexity). Especially in some prevalent cases that the system load is small (and/or the bandwidth of network is abundant), the new trade-off option of $\fACS$ will allow the pessimistic path of our hybrid protocol to better perform in practice.
	We also remark that $\fACS$ and its novel trade-off   could be of independent interests.

	\item {\em Optimistic path in the lens of blockchain.}
	The first optimistic path design was relying on consecutive reliable broadcasts \cite{kursawe05}, which incurs $\bigO(n^2)$ messages per output. Later improvement focused on designing the optimistic path from leader-driven consistent broadcasts \cite{cachin05} to reduce message complexity to $\bigO(n)$.
	To some extent, our optimistic path is similar to \cite{cachin05} with a few changes to incorporate the recent pipelining idea \cite{hotstuff, sync-hotstuff, pala, casper} through the lens of blockchain to further improve the concrete performance.
	
	An optimistic  epoch in our protocol is driven by a leader to proceed in slots. In each slot, the (correct) leader shall decide a block that contains $n-f$ signatures that are produced by distinct parties for the block of the immediately precedent slot.	
	Moreover, this block-wise pipelining optimistic path essentially provides a couple of nice properties to be compatible with our efficient ``view-change'' design, which are: (i) the signatures contained in each block naturally becomes ``proof'' attesting a particular ``view'' (i.e., the delivery of the immediately precedent block); (ii) there are at least $n-2f$ correct parties stay at the highest ``view'' or the second-to-highest ``view''. 
	
	We for the first time well abstract the above simple design (which is essentially block-wise pipelining without leader rotation) to highlight the desired properties of the optimistic path of ``hybrid'' asynchronous atomic broadcast. In particular, though our formulation is particularly tailored in block-wise setting, it can be slightly tuned to capture the similar prior art such as \cite{cachin05}).
}


\ignore{
	====
	To bound the rollback of optimistic blocks,
	prior studies let the honest parties sign and   broadcast their current ``paces'' in the optimistic path, 
	then each party can wait $n-f$ such pace-signature pairs and take them as input to invoke an MVBA instance. 
	A set of $n-f$ valid pace-signature pairs would   be chosen by   MVBA and returned to all honest parties.
	Since each signed pace is also attached with a small proof attesting that at least $n-2f$ honest parties indeed output its precedent blocks,  
	the parties can choose the maximum pace out of the output set of MVBA minus one as the final common pace of optimistic path to sync up, 
	which would simply bound the rollback of optimistic path because the MVBA result contains at least one honestly signed pace that is at most one block earlier than the latest block outputted by the optimistic path.

	{\em Our methodology}. We realize that the above ``pace'' synchronization  of using a whole set of $n-f$  pace-signature pairs as MVBA input is unnecessary.
	At the core of the $\sys$ framework,
	we provide a highly efficient alternative ``pace'' synchronization mechanism  as fast as binary agreement.
	
	Our main observation lies in the fact that: there are at least $n-2f$ honest parties that do output the latest two optimistic path blocks,
	so once all parties receive $n-f$ valid pace-signature pairs,
	they can simply compute the maximum among these paces. Thanks to the ``notarizability'' of our fastlane abstraction, the honest parties would get either the same value or two consecutive integers,
	and our task of ``pace'' synchronization is therefore reduced to make the honest parties choose one common integer out of those two unknown consecutive values. 
	To this end, we propose and construct a novel asynchronous Byzantine agreement primitive to solve the specific issue of agreeing between two uncertain consecutive values. The construction has  message patterns and execution flows exactly same to the state-of-the-art binary agreement protocol in \cite{guo2020dumbo} 
	and only has three different lines in pseudocode for checking  slightly updated ``if-else'' conditions.

	For the   efficiency of this new pace synchronization,
	we now can adopt some more aggressive  optimistic tactics as follows.
	First, different from \cite{kursawe05} that keeps on running optimistic path until it becomes problematic, we can proactively and even frequently raise ``pace'' synchronizations once the optimistic path has made enough progresses, 
	which is important to ensure ``fairness'' when the optimistic path is leader-based. 
	Second, different from \cite{kursawe05,cachin05} where the pessimistic path is always executed after ``pace'' synchronization, we now can invoke the pessimistic path only if the optimistic path is completely failed, i.e., even if the optimistic path seemingly goes problematic during the course of executing, 
	we would keep on trying the optimistic path again after ``pace'' synchronization as long as it remains to return some minimal output (e.g.,   one single block).
	
	====
}


%



	


\section{Other Related Work}
In the past decades,   asynchronous BFT protocols are mostly theoretical results \cite{rabin1983randomized,ben1983another,canetti1993fast,bracha1987asynchronous,abraham2008almost,patra2009simple,patra2011error,ben2003resilient,benor,correia2006consensus}, 
until several recent progresses such as HoneyBadgerBFT \cite{honeybadger}, BEAT \cite{beat}, $\dumbo$ protocols \cite{guo2020dumbo,lu2020dumbo,guo2022speeding}, VABA \cite{ittai19},  DAG-based asynchronous protocols \cite{danezis2022narwhal,dag}, \rev{and DispersedLedger \cite{yang2021dispersedledger}}. 
Nevertheless, they still have  a   latency much larger than that of good-case   partially synchronous protocols. 
\rev{Besides the earlier   discussed optimistic asynchronous consensus \cite{kursawe05,cachin05} and more general framework \cite{700aublin}}, Spiegelman recently \cite{spiegelman2020search}  used
VABA \cite{ittai19} to instantiate  
pace-sync in optimistic asynchronous atomic broadcast. However,  it is still inefficient, especially when  fallbacks   frequently occur.
\rev{    $\BDT$ framework presents a generic and   efficient solution to add a deterministic fastlane to   most existing asynchronous consensus protocols (except the DAG-based protocols). 
	For example, it is compatible with   two very recent results of DispersedLedger    \cite{yang2021dispersedledger} and $\mathsf{Speeding}$-$\dumbo$ \cite{guo2022speeding}, and can directly employ them to instantiate   more efficient pessimistic path.}

\ignore{
Aforementioned KS02 \cite{kursawe05} and RC05 \cite{cachin05} demonstrated  theoretical feasibility of developing optimistic
asynchronous atomic broadcast to reduce the latency of asynchronous protocols in the benign network environments.
But they suffer from slow pace-sync instantiated by $\mathsf{MVBA}$ \cite{cachin01} in the fluctuating networks as earlier discussed.
Recently, Spiegelman \cite{spiegelman2020search}  leveraged
VABA \cite{ittai19} to instantiate  
pace-sync in optimistic asynchronous atomic broadcast. However,  it is still inefficient, especially when  fallbacks   frequently occur.
\rev{
	The general framework proposed by Aublin et al.   
	\cite{700aublin} provided an alternative idea.
	It defines a Backup primitive (i.e., a composable Abstract component with liveness guarantee to ensure the output of a certain number $k$ of transactions)   and thus can (i) decide  a common position representing where the fastlane fails and (ii) output a sequence of transactions. So Backup in   
	\cite{700aublin} can handle both pace-sync and pessimistic path in the optimistic asynchronous atomic broadcast.
	But  
	\cite{700aublin}  Backup was constructed via fully-fledged BFT protocols,
	which actually
	is not cheaper than  $\mathsf{MVBA}$   \cite{kursawe05,cachin05} in the fully asynchronous setting (since the state-of-the-art  asynchronous  BFT protocols are normally constructed from $\mathsf{MVBA}$).}
	In contrast to these prior studies \cite{kursawe05,cachin05,700aublin,spiegelman2020search} using asynchronous $\mathsf{MVBA}$ or heavier asynchronous $\mathsf{SMR}$ for pace-sync, we reduce pace-sync to only asynchronous {\em binary} agreement. \rev{It is possible because of our delicate formulation of a slightly stronger fastlane.}
}

It is well known that partially synchronous protocols \cite{yin2018hotstuff-full,pbft} can be responsive after GST in the absence of failures. Nonetheless,
if some parties are slow or even act maliciously, they might suffer from a worst-case latency related to the  upper bound of network delay. Some recent studies \cite{abraham2021good,abraham2020sync,thunderella,shrestha2020optimality,sbft,momose2020hybrid} also consider synchronous protocols with {\em optimistic responsiveness}, such that when some special conditions were satisfied, they can confirm transactions very quickly (with preserving optimal $n/2$ tolerance). Our protocol is {\em responsive} all the time, because it does not  wait for timeout that is set as large as the upper bound of network delay in all cases. 

Besides,  some literature \cite{blum2019synchronous,blum2021tardigrade,blum2020always,loss2018combining,momosemulti}  
studied how to combine synchronous and asynchronous protocols for stronger and/or flexible security guarantees in varying network environment.  
We instead aim to harvest efficiency from the   deterministic protocols. 

\noindent\underline{\smash{\em Concurrent works}}. 
A concurrent work  \cite{gelashvili2021prepared} considers adding an asynchronous view-change to a variant of HotStuff. Very recently its extended version    \cite{gelashvili2021jolteon} was presented with implementations. They focus on a specific construction of asynchronous fallback tailored for HotStuff by opening up a recent MVBA protocol \cite{ittai19}, thus can have different efficiency trade-offs. On the other hand, they cannot inherit the recent progress of asynchronous BFT protocols to preserve the linear per transaction communication (as we do) in the pessimistic path, or future improvements (since BDT is generic).  Moreover, \cite{gelashvili2021jolteon} essentially still uses an MVBA to handle pace-sync, while we reduce the task to conceptual minimum---a binary agreement, which itself could have more efficient constructions.

\ignore{
In the past decades, most  asynchronous BFT protocols are only theoretical results \cite{rabin1983randomized,ben1983another,canetti1993fast,bracha1987asynchronous,abraham2008almost,patra2009simple,patra2011error,ben2003resilient,benor,correia2006consensus}, and no clear and convincing evidence was ever shown to demonstrate real-world feasibility  until recent studies such as HoneyBadger BFT \cite{honeybadger}, BEAT \cite{beat}, $\dumbo$ BFT \cite{guo2020dumbo,lu2020dumbo} and VABA \cite{ittai19}. They do not rely on timing assumptions and enjoy robustness as well as responsiveness; nevertheless,  to obtain these benefits, they often have large latency (that is usually much larger than partially synchronous protocols under good network conditions) due to dozens of rounds to terminate.

The aforementioned KS02 \cite{kursawe05} and RC05 \cite{cachin05} demonstrated the theoretical feasibility of developing optimistic asynchronous atomic broadcast to reduce the latency of asynchronous protocols in the benign network environments. Very recently, Spiegelman \cite{spiegelman2020search} also studied this issue with using a recent MVBA construction \cite{ittai19} to instantiate the fallback.
However, as earlier discussed,   their frameworks and instantiations are still inefficient. 
In particular, the  fallback mechanisms invoke cumbersome MVBA to clean the mess of failed fastlanes.
In the realistic unstable networks, they may not guarantee the expected benefits of introducing the deterministic fastlanes, when frequent fallbacks are occurring.

It is well known that   partially synchronous protocols  can be responsive after $GST$ in the absence of failures. 
Nonetheless, if some parties are slow or even act maliciously, many of these protocols \cite{pbft,hotstuff} might suffer from a worst-case latency related to the network delay upper bound. Several other studies \cite{abraham2021goodcase,sync-hotstuff,thunderella,shrestha2020optimality,sbft,momose2020hybrid} consider synchronous protocols that achieve {\em optimistic responsiveness}, so the protocols can be responsive when some special conditions were satisfied while preserving   optimal  resilience against $n/2$ corruptions in other worse cases. In contrast, our protocol is {\em responsive} all the time  without relying on any optimistic condition. 

Besides, there is also some literature  \cite{blum2019synchronous,blum2020always,blum2020network,loss2018combining} that focuses on using the hybrid mechanism of synchronous and asynchronous protocols for stronger security guarantees, in particular for higher fault tolerance.  This line of work mainly aims to  design asynchronous protocols that can tolerate up to $n/2$  corruptions in the synchronous cases  and tolerate $n/3$ corruptions  in the asynchronous cases.  Our work instead aims to harvest efficiency from the deterministic protocols in practice.
}




\section{Problem Formulation}
\label{sec:def}



 
\smallskip
\noindent{\bf Transaction}. Without loss of generality, we let a transaction denoted by $\tx$ to represent a string of $|m|$ bits. 

\smallskip
\noindent{\bf Block structure}. 
A block is a tuple in form of
$\block:= \langle epoch, $ $slot,  \payload,  \prf \rangle$, 
where $epoch$ and $slot$ are natural numbers, 
$\payload$ is  a sequence of transactions also known as the payload.  Throughout the paper, we assume  $|\payload|=B$, 
where $B$
be the batch size parameter. The batch size can be chosen to saturate the network's available bandwidth in practice.
$\prf$ is a 
quorum proof
  attesting that at least $f+1$ honest parties indeed vote the $\block$ by signing it.

\begin{figure}[h] 
	\vspace{-0.3cm}
	\captionsetup{font={normalsize}}
	\begin{center}
		\includegraphics[width=7cm]{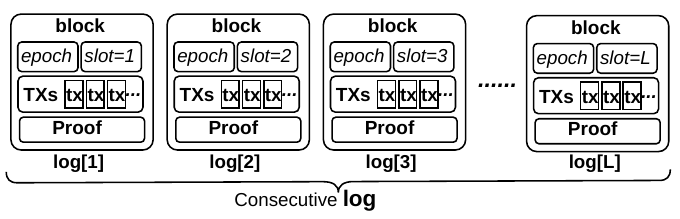}\\
		\vspace{-0.5cm}\caption{Block and  output log due to our terminology.}	\label{fig:block}
	\end{center}
	\vspace{-0.4cm}
\end{figure}

\smallskip
\noindent{\bf Blocks as output $\out$}.
Throughout the paper, a $\out$ (or interchangeably called as $\blocks$) refers to an indexed sequence of blocks. For   $\out$ with length $L:=|\out|$, 
we might use hereunder  notations.
(1) $\out[i]$ denotes the $i$-th block in $\out$. For example, $\out[1]$ is the first block of $\out$, 
	 $\out[-1]$ is the alias of the last block in $\out$,  and $\out[-2]$ represents the  second-to-last  block in $\out$, and so forth.
(2)  $\out.append(\cdot)$  can append some $\block$ to   $\out$. For example, when $\out.append(\cdot)$ takes a $\block \ne \emptyset$ as input, $|\out|$  increases by one and $\out[-1]$ becomes this newly appended $\block$; when $\out.append(\cdot)$ takes a sequence of non-empty blocks $[\block_{x+1}, \dots , \block_{x+k}]$ as input, $|\out|$ would increase by $k$, and $\out[-1]=\block_{x+k}$, $\out[-2]=\block_{x+k-1}$ and so on after the operation; when $\out.append(\cdot)$ takes  an empty block   $\emptyset$ as input, the $append$ operation does nothing.

\smallskip
\noindent{\bf Consecutive output $\out$}.
 An output  $\out$ consisting of $L$ blocks is said to be consecutive if it 
 satisfies: for any two successive blocks $\out[i]$ and $\out[i+1]$ included by $\out$, either of the following two cases is satisfied: (i) 
 $\out[i].epoch = \out[i+1].epoch$ and $\out[i].slot  + 1 = \out[i+1].slot$; or (ii) $\out[i].epoch + 1 = \out[i+1].epoch $ and $\out[i+1].slot = 1$. Without loss of generality,
 we let all   $\out$s to be {\em consecutive}  throughout the paper for presentation simplicity.

\subsection{Modeling the system and threats}
We consider the  standard  asynchronous message-passing  system   with trusted setup,
%
which can be detailed as follows.

	

	\smallskip
	 \noindent{\bf  Known identities and trusted setup}. There are $n$ designated parties, each of which has a unique identity (i.e., $\node_1$ through $\node_n$)  known by everyone else. 
	All involved threshold cryptosystems   are properly set up, so all   parties  can get and only get their own secret keys in addition to   relevant public keys. The setup   can be done by  a trusted dealer or distributed key generation   \cite{pedersen1991threshold,gennaro1999secure,kate2009distributed,gao2021efficient,renavss,kokoris2020asynchronous,abraham2021reaching,das2022practical}. 

\smallskip
	 \noindent{\bf  Byzantine corruptions}.  
	The adversary can  choose up to $f$    parties to   fully control   before the protocol starts. Our instantiations focus on   static corruptions, which is same to all recent {\em practical} asynchronous atomic broadcast \cite{honeybadger,beat,guo2020dumbo,guo2022speeding,yang2021dispersedledger}. 
	Also, no asynchronous BFT can tolerate more than $f = \lfloor (n-1)/3 \rfloor$ Byzantine corruptions.
	Through   the paper, we stick with this optimal resilience. 
	
	\smallskip
	 \noindent{\bf  Fully-meshed reliable asynchronous network}.
	There exists a reliable asynchronous  peer-to-peer channel between any two   parties.
	%
The adversary can arbitrarily delay or reorder messages, 
	 but cannot drop or modify   messages sent among honest parties. 

	\smallskip
	 \noindent{\bf  Computationally-bounded adversary}.
	We  consider  computationally bounded adversary that can   perform  some  probabilistic computing steps
	bounded by   polynomials	in the number of message bits generated by honest parties, which is standard cryptographic practice in the asynchronous network.

	\smallskip
	 \noindent{\bf  Adversary-controlling local ``time''}. 
	It is impossible to implement  global time   in the   asynchronous model. Nevertheless,  we do not require any global wall-clock for securities.
	Same to  \cite{cachin2002asynchronous,kursawe05}, it is still feasible to  let each party keep an adversary-controlling local ``clock'' that elapses at the speed of the actual network delay $\delta$:
	each party sends a  ``tick'' message      to itself via the adversary-controlling network, then whenever  receiving a ``tick'', it increases its local ``time'' by one and resends a new ``tick'' to itself via the adversary.	
	Using the adversary-controlling  ``clock'', each party can     maintain a  timeout mechanism, for example,	let $timer(\tau).start()$ to  denote that a local timer is initialized and  will ``expire''  after $\tau$ clock ticks, 	and let $timer(\tau).restart()$  denote to  reset the timer.	
	
	


\subsection{Security goal: async. atomic broadcast}
 
Our  primary goal is to develop an asynchronous atomic broadcast protocol defined as follows
to attain high robustness against   unstable or even hostile   network environment.
%

\begin{definition}\label{def:abc}
	In {\bf atomic broadcast} ($\mathsf{ABC}$),  
	each party is with  an implicit  queue  of input transactions (i.e., the input backlog)  and  outputs a  $\out$ of blocks.
	Besides the   syntax, the $\mathsf{ABC}$ protocol shall  satisfy the  following properties with all but negligible probability:
	\begin{itemize}[leftmargin=0.5cm]
		\item {\em Total-order}. If an honest party outputs a $\out$, and another honest   party outputs another $\out'$, then $\out[i]=\out'[i]$ for every $i$ that $1 \le i \le \min\{|\out|,|\out'|\}$.
		
		\item {\em Agreement}. If an honest party adds a $\block$ to its $\out$,   all honest parties would eventually add the   $\block$ to their $\out$s.

		\item {\em Liveness} (adapted from \cite{cachin01}). If all honest parties input a transaction $\tx$,  
		$\tx$ would    output   within some   asynchronous rounds (bounded by   polynomials in   security parameters).


	\end{itemize}
\end{definition}

\noindent
{\em Remarks on the definition of $\mathsf{ABC}$.} 	
Throughout the paper, 	we  let safety   refer to the union of total-order and agreement. Besides,
we   insist on the   liveness notion from \cite{cachin01} to ensure  that each input transaction can output reasonably quickly instead of eventually. 
This reasonable aim can   separate  some   studies that have {\em exponentially} large confirmation latency \cite{ben1983another}.
Moreover, the protocol must terminate in polynomial number of   rounds to restrict the computing steps of   adversary in the computationally-secure model  \cite{cachin01,ittai19,pass2017sleepy}, otherwise      cryptographic primitives are potentially insecure.

\subsection{Performance  metrics and preliminaries}

We are particularly interested in   {\em practical} asynchronous protocols, and therefore, 
 consider the following critical efficiency metrics:

\begin{itemize}[leftmargin=0.5cm]
	\item {\em Communication complexity}. We primarily focus on the (average) bits of all messages associated to output each block. 
	Because the communicated bits per block essentially reflects the (amortized) communication per delivered transaction, 
	in particular when   each block   includes   $\mathcal{O}(B)$-sized transactions, where  $B$ is a specified batch-size  parameter.
	
	
	\item {\em Message complexity}. This characterizes     the   number of messages exchanged among  honest parties to produce a block.

	\item {\em Asynchronous  round complexity}. 
	The eventual delivery in   asynchronous network    causes   the protocol execution   independent to ``real time''.
	Nevertheless, it is still needed to characterize the   running time, and 
	a standard way  to do so  is    counting  asynchronous ``rounds'' as in \cite{canetti1993fast,cachin01}.
	%
	

\end{itemize}





\noindent{\bf Cryptographic abstractions}. $\hash$ denotes a collision-resistant hash function. 
$\TSIG$ and $\TPKE$ denote threshold signature  and  threshold encryption, respectively. Established $\TSIG$ is a tuple  of algorithms $(\SignShare_t, \VrfyShare_t,\Combine_t, \Vrfy_t)$,  and throughout the paper, we call the signature share output from $\SignShare_t$ the {\em partial signature}, and   call the output of $\Combine_t$  the {\em full signature}. Established $\TPKE$ consists  three algorithms $(\Enc_t, \DecShare_t, \Dec_t)$. In all notations, the subscript $t$ represents the threshold, cf. some classic literature such as \cite{honeybadger}. The cryptographic security parameter is denoted by $\lambda$,   capturing the bit-length of signatures and hashes.

\medskip
 \noindent
{\bf Building blocks}. We might use the following asynchronous broadcast/consensus protocols in the black-box manner.

\vspace{-0.1cm}
\begin{definition}{\bf Reliable broadcast ($\mathsf{RBC}$)} has a designated sender who aims to send its input to all parties,  
	and satisfies the  next properties  except with negligible probability:
	%
		%
		(i) {\em Validity}. If the sender is honest and inputs $ v $, then all honest parties output $ v $;
		%
	    (ii) {\em Agreement}. The outputs of any two honest parties are same;
		%
		(iii) {\em Totality}. If an honest party outputs $ v $, then all honest parties output $ v $.
\end{definition}

\vspace{-0.2cm}
\begin{definition}{\bf Asynchronous binary  Byzantine agreement ($\ABA$)}  \cite{MMR15,cachin00,canetti1993fast}  has a syntax that  each party inputs and outputs a single bit $b$, where $b$ ranges over $\{0,1\}$, 
and  
shall guarantee the following properties except with negligible probability:
	(i) {\em Validity}. If any honest party outputs $ b $, then at least one honest party takes $ b $  as input;	
	(ii) {\em Agreement}. The outputs of any two honest parties are same;
	(iii) {\em Termination}. If all honest parties activate the protocol with taking a bit as input, then all honest parties would output a bit in the protocol.
\end{definition}


\vspace{-0.2cm}
\begin{definition}{\bf Asynchronous common subset ($\acs$)} \cite{benor} has a syntax that 
each party input a value and output a set of values, where $n$ parties participate in this protocol and up to $f$  corruption parties.
It      satisfies   the next properties   except with negligible probability:
	(i) {\em Validity}. The output set $\bf S$ of an honest party     contains the inputs of at least $n - 2f$ honest parties;
	(ii) {\em Agreement}. The outputs of any two honest parties are same;
	(iii) {\em Termination}. If all honest parties activate the protocol,  then all honest parties would output.
\end{definition}


\ignore{

\smallskip
\begin{definition}  {\bf Strong Provable  reliable broadcast ($\prbc$)}   with   a designated sender, a predicate  $\prbc.\Verify$  and an identifier $\id$ is a broadcast   that satisfies the  next properties  (in presence of $f$       corruptions) except with negligible probability:

\begin{itemize}
	\item {\em Validity}. If all honest parties invoke $\prbc[\id]$ and the sender is honest and inputs a message $ m $, all honest parties  will output $m$   with some string $\sigma$  s.t. $\prbc.\Verify(\id, \sigma)=1$.
	
	\item {\em Agreement}. If two honest parties output $ m $ and $ m' $  in $\prbc[\id]$ respectively, then  $ m = m' $.
	
	\item {\em Totality}. If any (probably malicious) party outputs $ m $ and $\sigma$ in $\prbc[\id]$ s.t. $\prbc.\Verify(\id, \sigma)=1$, then all honest parties eventually output some value attached with a proof $\sigma$  s.t. $\prbc.\Verify(\id, \sigma)=1$ in $\prbc[\id]$.
	
	\item {\em Unforgeability}.  If $f+1$ or more honest parties do not activate $\prbc[\id]$, it computationally infeasible for any   party to produce a string $\sigma$ 
	s.t. $\prbc.\Verify(\id, \sigma)=1$.
\end{itemize}
\end{definition}
 
We remark that strong $\prbc$ adds a unforgeability to the $\prbc$ introduced in \cite{guo2020dumbo} to ensure enough parties have activated  $\prbc[\id]$ when seeing the proof, it can be easily obtained from $\RBC$ by adding one round of threshold signature.

}

%




\section{AllSpark: Fastlane Abstraction and  Two-Consecutive-Value BA}\label{sec:fast}


The simple and efficient pace-synchronization is the crux of making $\mathsf{BDT}$ practical,  
and this becomes possible for two critical ingredients, i.e., a novel fastlane abstraction ($\mathsf{nw\textnormal{-}ABC}$) and a new variant of binary Byzantine agreement ($\mathsf{tcv\textnormal{-}BA}$). Specifically,

\begin{itemize}[leftmargin=0.6cm]
	\item   $\mathsf{nw\textnormal{-}ABC}$ ensures that all parties' fastlane outputs are somewhat weakly consistent, namely, 
	if the $(s)$-th block is the latest block with valid quorum proof,
	then at least $f+1$ honest parties must already output the $(s-1)$-th block   with the valid quorum proof (cf. Fig. \ref{fig:nw-abc}).
	\item Considering the above property of $\mathsf{nw\textnormal{-}ABC}$ fastlane, we  can conclude that: after exchanging timeout requests, all honest parties either know $s$  or $s-1$. 
	We thus lift the conventional binary agreement to a special variant ($\mathsf{tcv\textnormal{-}BA}$) for deciding a common value out of $\{s-1, s\}$, despite that 
	the adversary might input arbitrarily, say  $s-2$ or $s+1$.
\end{itemize}





\vspace{-0.2cm}

\subsection{Abstracting and constructing    the fastlane}


We first put forth  notarizable weak atomic broadcast ($\mathsf{nw\textnormal{-}ABC}$)   to better prepare the fastlane for more efficient pace-sync.

\begin{definition} {{\bf Notarizable weak atomic broadcast}   ($\mathsf{nw\textnormal{-}ABC}$, nicknamed by $\tobc$)}. 
	In the   protocol with an identification $\id$, 
	 each party takes  a  transaction buffer   as input and  outputs a   $\out$ of blocks, where each block $\out[j]$ is in form of $\langle \id, j, \payload_j, \prf_{j} \rangle$. 
	 There also exists two external functions $\LogVerify$  and $\Extract$ taking $\id$, slot $j$ and $\prf_{j}$ as input (whose outputs and functionalities would soon be explained below). We require that $\tobc$   satisfies the following properties except with negligible probability:
	\begin{itemize}[leftmargin=0.6cm]
		\item {\em Total-order}. 
		Same to   atomic broadcast.
		
		\item {\em Notarizability}. If any (probably malicious) party outputs  $\out[j]:=\langle \id, j, \payload_j, \prf_j \rangle$   s.t.   $\LogVerify(\id, j, \prf_j)=1$,  then: there exist at least $f+1$ honest parties, each of which either already outputs $\out[j]$, or  already outputs $\out[j-1]$ and can    invoke  $\Extract$ function with valid $\prf_j$ to
		extract $\out[j]$  from   received protocol scripts.
		
		\item {\em Abandonability}.  An honest party  will not output any block in $\tobc[\id]$ after invoking $\abandon(\id)$. In addition, if $f+1$ honest parties invoke $\abandon(\id)$ before   output $\out[j]$, then no party can output valid $\out[j+1]$.
		
		\item {\em Optimistic liveness}. There exist  a non-empty collection of optimistic conditions  to specify the honesty of   certain   parties, s.t. once an honest party outputs $\out[j]$, it will output   $\out[j+1]$ in $\kappa$ asynchronous rounds, where  $\kappa$ is a   constant.

	\end{itemize}
\end{definition}

Comparing to    $\mathsf{ABC}$, 
$\mathsf{nw}$-$\mathsf{ABC}$ does not have  the exact agreement and liveness properties:
(i) notarizability   compensates the lack of agreement, as it ensures that  whenever a party outputs a block $\out[j]$ at position $j$,   at least $f+1$ honest parties already output  at the position $j-1$, and in addition, $f+1$ honest parties   already receive the protocol scripts carrying the payload of $\out[j]$, so they can   extract the block $\out[j]$ once seeing   valid $\prf_j$; 
(ii) liveness is in an optimistic form, which enables       simple deterministic   implementations of $\mathsf{nw}$-$\mathsf{ABC}$ in the asynchronous setting. 
%

Careful readers might   notice  that the above fastlane abstraction, in particular the notarizability property, share similarities with the popular lock-commit paradigm widely used in   (partially) synchronous byzantine/crash fault tolerant protocols \cite{partialsync,tendermint,pbft,sbft,yin2018hotstuff-full}. 
For example, when any honest party outputs some value (i.e. ``commit''), then at least $f+1$ honest parties shall receive and already vote this output (i.e. ``lock'').
In such a sense, the fastlane can be easily instantiated in many ways through the lens of  (partially) synchronous protocols. 
Unsurprisingly, one candidate is the fastlane  used in KS05 \cite{cachin05}.
Here  we present two more   exemplary  $\tobc$ constructions.

\smallskip
\noindent
\underline{\smash{\em \rev{Comparing with the  $\mathsf{Abstract}$ component in  \cite{700aublin}}}}. 
\rev{As aforementioned, \cite{700aublin} defined $\mathsf{Abstract}$ as a basic component to compose full-fledged BFT consensus with optimistic fastlane.
$\mathsf{Abstract}$ was defined to capture  a very broad array of optimistic conditions (including very optimistic cases such as no fault at all), 
such that a fastlane satisfying $\mathsf{Abstract}$ definition could be designed as simple as possible (with the price that no guarantee of similar progress among honest parties, as we have, before triggering fallback).
For example, \cite{700aublin} presented  $\mathsf{Quorum}$, an implementation of $\mathsf{Abstract}$ that only involves one round trip (with an optimistic condition allowing no fault),
but $\mathsf{Quorum}$ cannot meet the critical notarizability property of $\mathsf{nw}$-$\mathsf{ABC}$ though satisfying $\mathsf{Abstract}$.
Taking $\mathsf{Quorum}$ as example, the  weakening of $\mathsf{Abstract}$ prevents us from using binary agreement to handle some failed $\mathsf{Abstract}$ fastlanes, because the parties cannot reduce the failed position of the fastlane to two consecutive numbers.
This corresponds to the necessity of our stronger $\mathsf{nw}$-$\mathsf{ABC}$ definition in the context of facilitating a simplest possible pace-sync in the asynchronous setting.
}

\smallskip
\noindent
{\bf $\mathsf{Bolt}$ from sequential multicasts}. 
  As shown in   Fig. \ref{fig:multicast}, $\tobc$ can be easily constructed  from pipelined multicasts with using threshold signature, 
  and we call it $\mathsf{Bolt}$-$\mathsf{sCAST}$. The   idea  is as simple as: the leader proposes a batch of transactions via   multicast, 
  then all parties send back their signatures on the proposed batch as their votes,
  once the leader  collects enough votes from distinct parties (i.e., $2f+1$), it uses the votes to form a quorum proof for its precedent proposal,
   and then repeats to multicast a new proposal of transactions (along with the proof). Upon receiving the new proposal and the precedent proof, the parties   output the precedent proposal and the proof (as a block), and then    vote  on the new proposal. Such   execution is   repeated until the abandon interface is invoked.

\vspace{-0.1cm}
\begin{figure}[h]
	\captionsetup{font={normalsize}}
	\begin{footnotesize}
		\vspace{-0.1cm}
		\fbox{%
			\parbox{8.2cm}{				
				
				let $\id$ be the session identification of $\tobc[\id]$, $\buf$ be a FIFO queue of input, $B$ be the batch parameter,  and  $\node_\ell$ be the leader (where $\ell=(\id\mod n) +1$)
				
				\vspace{0.1cm}
				$\node_i$ initializes $s=1$, $\sigma_0=\bot$  and runs the protocol in  consecutive slot number $s$ as:
				
				\begin{itemize}[leftmargin=0.3cm]			
					\item	\textbf{Broadcast}. if $\node_i$ is the leader $\node_\ell$
					\begin{itemize}[leftmargin=0.25cm]
						
						
						\item	if $s > 1$ then: 
						\begin{itemize}[leftmargin=0.25cm]
							\item wait for $2f+1$ $\vote(\id, s-1,\sigma_{s-1,i})$ from distinct parties $\node_i$, where $\sigma_{s-1,i}$ is the valid partial signature signed by $\node_i$ for $\langle \id, s-1, \hash(\payload_{s-1}) \rangle$ 
							
							\item compute $\sigma_{s-1}$,   the   full-signature for $\langle \id, s-1, \hash(\payload_{s-1}) \rangle$,
							by aggregating the $2f+1$ received valid partial signatures
							

						\end{itemize}
						
						
						\item multicast $\proposal(\id, s, \payload_s,\sigma_{s-1})$, where $\payload_s \leftarrow \buf[:B]$ 
						
					\end{itemize}			
					
					\item	 \textbf{Commit and Vote}. upon receiving    $\proposal(\id, s, \payload_s,\sigma_{s-1})$ from   $\node_\ell$ 
					\begin{itemize}[leftmargin=0.3cm]
						
						
						\item	if $s > 1$ then:
						
						\begin{itemize}[leftmargin=0.25cm]
							\item  proceed only if $\sigma_{s-1}$ is valid full signature that aggregates $2f+1$ partial signatures for $\langle \id, s-1, \hash(\payload_{s-1}) \rangle$,  otherwise abort
							
							
							\item  output $\block$:=$ (\id, s-1, \payload_{s-1}, \prf_{s-1})$, where $\prf_{s-1}:=\langle \hash(\payload_{s-1}), \sigma_{s-1} \rangle$

						\end{itemize}			
						
						\item  send $\vote(\id, s, \sigma_{s,i})$ to the leader $\node_\ell$, where $\sigma_{s,i}$ is the partial signature for $\langle \id, s,  \hash(\payload_s) \rangle$, then let $ s\leftarrow s+1$
						
						
					\end{itemize}
					
					\item \textbf{Abandon}. upon $\abandon(\id)$ is invoked then:   abort the above execution
					
				\end{itemize}
				\vspace{-0.1cm}
			}
		}
		\vspace{-0.3cm}
		\caption{$\tobc$ from sequential multicasts ($\mathsf{Bolt}$-$\mathsf{sCAST}$). The external functions are presented in Fig. \ref{fig:external}.}
		
		\label{fig:multicast}
	\end{footnotesize}
	 \vspace{-0.2cm}
\end{figure}

\ignore{
\begin{lemma}	
	The algorithm in Figure \ref{fig:multicast} satisfies the {\em total-order}, {\em notarizability}, {\em abandonability} and {\em optimistic liveness} properties   of $\tobc$  except with negligible probability.
	\label{abc1}
\end{lemma}

{\em Proof}: Here we prove the three properties one by one:

{\em For total-order}: First, we prove at same position, for any two honest parties $\node_i$ and $\node_j$ return $\block_i$ and $\block_j$, respectively, then $\block_i = \block_j$. It is clear that if the honest party $\node_i$ outputs $\block_i$, then at least $f+1$ honest parties did vote for $\block_i$ because $\TSIG.\Vrfy_{2f+1}$ passes verification. So did  $f+1$ honest parties vote for $\block_j$. 
That means at least one honest party votes for both blocks, so $\block_i = \block_j$.

{\em For notarizability}: Suppose a party $\node_i$ outputs   $\blocks[j]:=\langle \id, j, \payload_j, \prf_j \rangle$, it means at least $f+1$ honest parties   vote  for $\block[j]$, according to  the pseudocode,  at least those same $f+1$ honest parties already output $\blocks[j-1]$ and  received the $\payload_j$, hence,  those honest parties can further use the valid $\prf_j$ to extract $\blocks[j]$ from the received protocol messages.

{\em For abandonability}: it is immediate to see from the pseudocode of the abandon interface.

{\em For optimistic liveness}: suppose that the optimistic condition is that the leader is honest, then any honest party would output $\block[1]$ in three asynchronous rounds after entering the protocol
and   would output     $\out[j+1]$ within two asynchronous rounds after outputting $\out[j]$ (for all $j\ge1$).
$\hfill\square$ 

}

 \smallskip
\noindent
{\bf $\mathsf{Bolt}$ from sequential reliable broadcast}.
As shown in Fig. \ref{fig:prbc}, we can also use sequential $\rbc$ instances   to implement $\tobc$.
In the implementation, a designated fastlane leader can reliably broadcast its proposed transaction batches one by one.
For each party receives a batch from some $\rbc$, it signs the   batch and  $\rbc$'s identifier, and multicasts the signature as vote, 
then wait for $2f+1$ valid votes to form a quorum proof, such that the batch and the proof assemble  an output block, and the party  proceeds into the next $\rbc$.
Note that a $\RBC$ implementation \cite{honeybadger} can use the technique of verifiable information dispersal \cite{cachin2005asynchronous} for communication efficiency as well as balancing network workload, such that the leader's bandwidth usage is at the same order of other parties'.
In contrast, $\mathsf{Bolt}$-$\mathsf{sCAST}$ might cause the  leader's bandwidth usage   $n$ times more than the other parties', 
unless an additional mempool layer is implemented to further decouple the dissemination of transactions from $\Bolt$.


\begin{figure}[h]
	\vspace{-0.15cm}
	\captionsetup{font={normalsize}}
	\begin{footnotesize}		
		\fbox{%
			\parbox{8.2cm}{				
				let $\id$ be the session identification of $\tobc[\id]$, $\buf$ be a FIFO queue of input, $B$ be the batch parameter,  and  $\node_\ell$ be the leader (where $\ell=(\id\mod n) +1$)
				
				\vspace{0.1cm}
				$\node_i$ initializes $s=1$, $\sigma_0=\bot$ and runs the protocol in  consecutive slot number $s$ as:
				
				\begin{itemize}[leftmargin=0.3cm]
					
					\item	\textbf{Broadcast}. if $\node_i$ is the leader $\node_\ell$,   activates $\rbc[\left \langle \id,s \right \rangle]$ with input $\payload_s \leftarrow \buf[:B]$; else activates $\rbc[\left \langle \id,s \right \rangle]$ as non-leader party
					
					\item  \textbf{Vote}.  upon $\rbc[\left \langle \id,s \right \rangle]$ returns $\payload_s$ 

					\begin{itemize}[leftmargin=0.25cm]
						
						\item  send $\vote(\id, s, \sigma_{s,i})$ to all, where $\sigma_{s,i}$ is the partial signature for $\langle \id, s,  \hash(\payload_s) \rangle$
						
						
					\end{itemize}			
					
					\item  \textbf{Commit}.  upon receiving $2f+1$  $\vote(\id, s,\sigma_{s,i})$ from distinct parties $\node_i$, where
					$\sigma_{s,i}$ is the valid partial-signature signed by $\node_i$ for $\langle \id, s, \hash(\payload_{s}) \rangle$
					
					\begin{itemize}[leftmargin=0.25cm]
						
						\item  output $\block$:=$ ( \id, s, \payload_{s}, \prf_{s})$, where $\prf_{s}:=\langle \hash(\payload_{s}), \sigma_{s} \rangle$ and $\sigma_{s}$ is the valid full signature that aggregates the $2f+1$ partial signatures for $\langle \id, s, \hash(\payload_{s}) \rangle$,  then let $ s\leftarrow s+1$ 
						
					\end{itemize}			
					
					\item \textbf{Abandon}. upon $\abandon(\id)$ is invoked then:   abort the above execution
					
				\end{itemize}
                \vspace{-0.1cm}	 
			}
		}
		\vspace{-0.3cm}
		\caption{$\tobc$ from sequential $\rbc$s ($\mathsf{Bolt}$-$\mathsf{sRBC}$). The external functions are presented in Fig. \ref{fig:external}.}
		\label{fig:prbc}
	\end{footnotesize}
	\vspace{-0.4cm}
\end{figure}

\begin{figure}[h]
	\vspace{-0.2cm}
	\captionsetup{font={normalsize}}
	\begin{footnotesize}		
		\fbox{%
			\parbox{8.2cm}{				
				{\color{orange}// Validate $\prf_{s}$ to check whether at least $f+1$ honest parties   output the $s$-th $\block$ or can extract it (according to the next $\Extract$ function)}
				
				\textbf{external function} $\LogVerify(\id, s, \prf_{s})$:  
				
				~~~~  \hspace{1em}parse $\prf_{s}$ as $\langle h_s, \sigma_s\rangle$
				
				~~~~  \hspace{1em}return $\TSIG.\Vrfy_{2f+1}(\langle \id, s, h_s\rangle, \sigma_s)$ 
				
				\dotfill
				
				{\color{orange}// Leverage the valid $\prf_{s}$ to extract the $s$-th $\block$ from some received protocol messages (though the block was not output yet).}
				
				\textbf{external function} $\Extract(\id, s, \prf_{s})$: 
				
				~~\hspace{0.8em} if $\LogVerify(\id, s, \prf_{s})=1$, then parse $\prf_{s}$ as $\langle h_s, \sigma_s\rangle$
				
				~~~~ \hspace{1em}if $\payload_{s}$ was received during executing $\tobc$ s.t. $ h_s=\hash(\payload_{s})$, then: 
				
				~~~~~~~~\hspace{2em}return $\block$:=$ ( \id, s, \payload_{s}, \prf_{s})$, where $\prf_{s}:=\langle h_s, \sigma_s\rangle$
				
				~~~~ \hspace{1em}return $\block$:=$ ( \id, s, \bot, \bot)$ 
				\vspace{-0.1cm}
			}
		}
		\vspace{-0.3cm}
		\caption{Invocable external functions for $\tobc$ instantiations}
		\label{fig:external}
	\end{footnotesize}
	\vspace{-0.2cm}
\end{figure}

\smallskip
\noindent
{\bf Analysis of the $\mathsf{Bolt}$ constructions}.
The security  analyses of $\mathsf{Bolt}$-$\mathsf{sCAST}$ and $\mathsf{Bolt}$-$\mathsf{sRBC}$ are simple by nature (cf. Appendix \ref{append:bolt}).
Their  complexities   can be easily counted as well (cf.  Appendix \ref{sec:complex}).


\subsection{Two-consecutive-value BA}
\label{sec:tcvba}

Another critical ingredient  is a variant of binary agreement that 
can help the honest parties to choose one common integer out of two unknown but consecutive numbers. 
%
Essentially, $\tcvba$ extends the conventional binary  agreement and can   be formalized as follows.

\begin{definition}  {\bf Two-consecutive-value Byzantine agreement}  ($\tcvba$) satisfies   termination, agreement and validity  (same to those of asynchronous binary agreement) with overwhelming probability, if all honest parties input a value in $\{v, v+1\}$ where $v \in \N$.
\end{definition}

\begin{figure}[h]
	\captionsetup{font={normalsize}}
	\begin{footnotesize}
		\centering
		\fbox{%
			\parbox{8.2cm}{%
				For each party $\node_{i}$, make the following modifications to the $\ABA$ code in Alg. 7 of \cite{guo2020dumbo} (originally from \cite{MMR15} but with some adaptions to use Ethan MacBrough's suggestion  \cite{ababug} to fix the potential liveness issues of \cite{MMR15}):
				
				\vspace{0.1cm}
				Replace line 13-23 of  Algorithm 7 in \cite{guo2020dumbo} with the next instructions:			
				\begin{itemize}[leftmargin=0.3cm]
					\item $c\leftarrow \coin_r.\getcoin()$
					\begin{itemize}[leftmargin=0.25cm]
						\item if $S_r= \{v\}$ then:									
						
						\begin{itemize}[leftmargin=0.25cm]
							\item {\color{orange} if $v\%2 = c\%2$ }
							\begin{itemize}[leftmargin=0.25cm]
								\item  if $decided$ = false then: output $v$; $decided$ = true
								\item   else (i.e, $decided$ = true) then: halt						
							\end{itemize}									
							\item $\est_{r+1} \leftarrow v$
						\end{itemize}
						
						\item if $S_r= \{v_1, v_2\}$ then:
						\begin{itemize}[leftmargin=0.25cm]
							\item {\color{orange}if $v_1\%2 = c\%2$, then $\est_{r+1} \leftarrow v_1$}
							\item {\color{orange}else (i.e, $v_2\%2 = c\%2$), then $\est_{r+1} \leftarrow v_2$}
						\end{itemize}
						
					\end{itemize}
				\end{itemize}
				\vspace{-0.1cm}	 
			}
		}
		\vspace{-0.3cm}
		\caption{ $\tcvba$ protocol. Lines different to Alg. 7 in  \cite{guo2020dumbo} are    in {\color{orange} orange texts}.}
		\label{fig:tcvba}
		
	\end{footnotesize}
	\vspace{-0.2cm}
\end{figure}

To squeeze extreme performance of pace synchronization, we   give a non-black-box construction $\tcvba$ that only has to revise three lines of code of the practical $\ABA$ construction  adapted from \cite{MMR15}. This non-black-box construction basically reuses the protocol pseudocode 
except several if-else  checking (see Fig. \ref{fig:tcvba}) and hence has the same performance of this widely adopted   $\ABA$ protocol.

In addition, $\tcvba$ can be constructed    from any  $\ABA$  with only one more    ``multicast'' round, cf. Figure \ref{fig:blackaba}. This black-box construction  provides us a convenient way to inherit any potential improvements of   underlying $\ABA$ primitives \cite{crain2020two, das2021practical}.

\vspace{-0.1cm}
\begin{figure}[h]
	\vspace{-0.1cm}
	\begin{footnotesize}
		\centering
		\fbox{%
			\parbox{8.2cm}{%
				Let $\ABA$  be any asynchronous binary agreement, then   party $\node_{i}$ executes:				
				
				\vspace{0.1cm}Upon receiving input $R$ then: 
				\begin{itemize}[leftmargin=0.3cm]
					\item multicast $\val(\id,R)$
				
					\item { upon receiving $\val(\id,R')$ from $f+1$ parties containing the same $R'$}
				\begin{itemize}[leftmargin=0.25cm]
					\item  if $\val(\id,R')$ has not been sent before, then: multicast $\val(\id,R')$				
				\end{itemize}
					
					\item wait for receiving $2f+1$ $\val(\id,v)$ messages from distinct parties carrying the same $v$, and activate $\ABA[\id]$ with   $v\%2$ as input
					
					\item wait for $\ABA[\id]$ returns $b$:
					\begin{itemize}
						\item if $v\%2 = b$, then: return $v$
						\item else: wait for receiving $f+1$ $\val(\id,v')$ messages from distinct parties containing the same $v'$ such that $v'\%2=b$, then return $v'$
					\end{itemize}
					
				\end{itemize}
				\vspace{-0.1cm}	 
			}
		}
	\vspace{-0.3cm}
		\caption{ $\tcvba$ protocol built from any $\ABA$ ``black-box''}
		\label{fig:blackaba}
	\end{footnotesize}
	  \vspace{-0.2cm}
\end{figure}


\ignore{
\subsection{Analyzing $\tcvba$ construction}
\textbf{Details of the TCBA} The two consecutive value byzantine agreement algorithm is described in Algorithm \ref{fig:tcvba}.

If for any honest party, their input range is two consecutive value R and R-1, then, the TCBA guarantees achieve the following three properties:\:

\begin{lemma}
	\emph{\textbf{Agreement:} if any correct party outputs the value $\upsilon $, then every corrects party output $\upsilon $.}
\end{lemma}

\begin{proof}
	
	Suppose the  $P_i$ is the first correct party outputs the value $\upsilon $ which at round $r$. If the others correct nodes also have a output at round $r$, since all correct node have the same common coin at round $r$, hence, all correct nodes will output the same value $\upsilon $.
	
	Since $P_i$ have a output $\upsilon $, then $val_r = \{\upsilon\}$ at round $r$.  According to the protocol, we can know $P_i$ received the message $AUX[r](\upsilon^{})$ from at least $n-t$ different partys, sinces at most $f$ Byzantine party, it follows that $P_j$ has received $AUX[r](\upsilon^{})$ from at least $n -2f$ different correct party, but $n=3f+1$, hence $AUX[r](\upsilon^{})$ from at least $f+1$ correct party.
	
	Consider the other correct party $P_j$ have $val_r = \{\upsilon^{'}\}$ at round $r$, then $P_j$ received the message $AUX[r](\upsilon^{'})$ from at least $(n-f)$ different party. As $(n-f)+(f+1) > n$, it follows that one correct party sent $AUX[r](\upsilon^{'})$ to $P_j$ and $AUX[r](\upsilon^{})$ to $P_i$. But one correct party sent the same $AUX[r]$ message to all the party. Hence $\upsilon^{}=\upsilon^{'}$. 
	
	Hence, if the other correct party $P_j$ does not output at round $r$, then $P_j$ necessarily have $val_r = \{R, R-1\} $. It follows that such $P_j$ will execute line 17-23, and assigns the value of modular 2 equal the common coin to its estimate $est_{j+1}$.
	
	So, the estimates$_{r+1}$ of all the correct party  modular 2 are equal to the common coin$_r$, which is the ouput value $\upsilon$ at round $r$. In this case, all the correct party  have the same estimate value $\upsilon$ at beginning of round $r+1$, the set $values_{r+1}$ of every correct nodes $P$ only contains $\upsilon$ after execute line 1-6.
	
	Then, the other correct party $P_j$ continue looping until $random_{r^{'}}$ = $\upsilon\%2$ for some round $r^{'} > r$, in this time, $P_j$ also have a output $\upsilon$.$\hfill\blacksquare$ 
\end{proof}

\begin{lemma}
	\emph{\textbf{Validity:} if all correct party output the value $\upsilon $, then the value $\upsilon$ was input by at least one correct party.}
	\label{Validity}
\end{lemma}	

\begin{proof}
	Suppose the correct party $P_i$ have a output $\upsilon $ at round $r$ , then $val_r = \{\upsilon\}$.  According to the protocol line 7-9, we can know $P_i$ received the message $AUX[r](\upsilon^{})$ from at least $n-f$ different partys, hence, at least $n-f$ differents party have $\upsilon \in values_r$. In this case, according to the protocol line 1-6, the value $\upsilon$ was input by at least $f+1$ differents party. Sinces at most $t$ Byzantine party, then the value $\upsilon$ was input by at least one correct party.
	
	If the value $\upsilon$ nevers input by any correct party, sinces at most $t$ Byzantine party, it follows that $\upsilon \notin values_r$ for any correct nodes after execute line 1-6, hence, $\upsilon \notin val_r$. In this case, all correct party never output the value $\upsilon $. $\hfill\blacksquare$ 
	
\end{proof}

\begin{lemma}
	\emph{\textbf{Termination:} if all correct party have a input, then every corrects party output a value.}
\end{lemma}	

\begin{proof}
	Since for any honest party, their input range is two consecutive value R and R-1. Without loss of generality, $|R| >|R-1|$, in this case, $|R| >f+1$(maybe also have $|R-1| >f+1$ if the corrupted node also input value R-1), the line 3-5 will gurantee all correct node broadcast $Val_{r}(\upsilon)$. That is also gurantee all nodes can receive $Val_r(\upsilon )$ messages from $2f+1$ different nodes, hence, at least $2f+1$ differents node's $values_r\neq \emptyset$ is true. According the proof 3.1, we can know at the end of some round $r$, all correct node will have same estimate value, and $values_{r+1}$ only contain the same one value $\upsilon$. Then, all correct party $P$ will wait until $random_{r^{'}}$ = $\upsilon\%2$ for some round $r^{'} \geq r$, in this time, All correct party $P$ output a value $\upsilon$. $\hfill\blacksquare$ 
	
\end{proof}	

}



\section{$\sys$ framework}
\label{sec:bdt}


As Fig. \ref{fig:hybrid} outlines, the fastlane of $\mathsf{BDT}$   is a $\tobc$ instance wrapped by a timer.
If honest parties can   receive a new $\tobc$ block in time, they would restart the timer to wait for the next $\tobc$ block.
Otherwise, the timer   expires, and the honest parties   multicast a fallback request containing the latest $\tobc$ block's quorum proof that they can see. 

\begin{figure}[htbp]
	\vspace{-0.2cm}
	\begin{center}	
		\includegraphics[width=8cm]{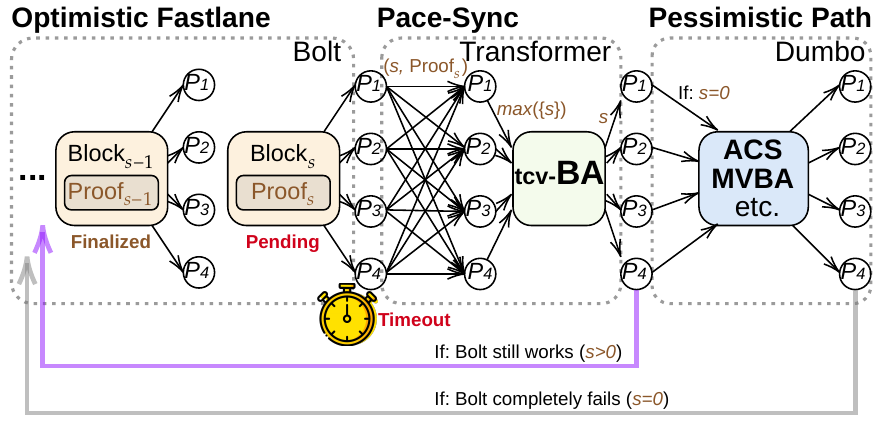}
	\end{center}
	\vspace{-0.25cm}
	\caption{The execution flow of $\sys$ 
	}	
	\label{fig:hybrid}
	\vspace{-0.2cm}
\end{figure}

After timeout, each party   waits for $n-f$   fallback requests with valid $\tobc$ block proofs, and enters pace-sync.
They invoke $\mathsf{tcv\textnormal{-}BA}$ with using the maximum block index (slot) in the received fallback requests as input.
Eventually, the honest parties enter the $\mathsf{tcv\textnormal{-}BA}$ and decide to either retry the fastlane or start the pessimistic path. As we briefly mentioned before, the reason we can use such a simple version of {\em binary} agreement is that via a careful analysis, we can find that \nwabc\/ prepares all honest parties will enter $\tcvba$ with one of neighboring indices as input.



The remaining non-triviality is that $\mathsf{tcv\textnormal{-}BA}$ cannot  ensure its output to always be the larger number out of the two possible inputs,
that means the globally latest $\tobc$ block can be revoked after pace-sync. Hence, the latest fastlane block is marked as ``pending'', 
and a ``pending'' block is finally output until the fastlane returns another new block.
This pending fastlane block ensures safety in $\mathsf{BDT}$.


\begin{figure*}[!b]
	\vspace{-0.2cm}
	\captionsetup{font={normalsize}}
	\begin{footnotesize}
		\centering
		\fbox{%
			\parbox{17.5cm}{%
				
				{\color{orange}// {\bf Optimistic Path} (also the BDT protocol's main entry)}					
				
				Every party $\node_i$ runs the protocol in  consecutive epoch numbered $e$ (initialized as 1) as follows:
				\begin{itemize}
					\item initialize: $p_e\leftarrow 0$, $\prf_{e}\leftarrow\bot$,  $\mathsf{Paces}_e\leftarrow\{\}$,	$\notarized_e\leftarrow\emptyset$ 
					\item activate $\tobc[e]$ instance, and start a timer that expires if not being restarted after $\tau$ clock ``ticks'' (i.e., invoke $timer(\tau).start()$)
					\item upon $\tobc[e]$ delivers a $\block$: 
					\begin{itemize}
						\ignore{
							\item parse $\block$:=$ \langle e, p, \payload_{p}, \prf_{p} \rangle $, where $p$ is the ``slot'' number
							\item $\out.append(\notarized_e)$, $\buf\leftarrow\buf \setminus \{\payload \textrm{ in } \notarized_e\}$ {\color{orange} ~~~//  the elder pending fastlane block becomes finalized, thus being output}
							\item 
							$\notarized_e\leftarrow \block$, $p_e\leftarrow p$, $\prf_{e}\leftarrow\prf_{p}$ {\color{orange} ~~~//  the newly output fastlane block becomes the pending block}
							\item $timer(\tau).restart$ {\color{orange} ~~~//   $\tobc[e]$ makes progress in time, so restart the ``heartbeat'' timer}
						}
						
						\item parse $\block$:=$ \langle e, p, \payload_{p}, \prf_{p} \rangle $, where $p$ is the ``slot'' number
						\item $\out.append(\notarized_e)$, $\buf\leftarrow\buf \setminus \{\payload \textrm{ in } \notarized_e\}$, $\notarized_e\leftarrow \block$ {\color{orange} ~~~//finalized  the elder pending block, pending the newly fastlane block}
						\item  $p_e\leftarrow p$, $\prf_{e}\leftarrow\prf_{p}$, $timer(\tau).restart$ {\color{orange} ~~~//   $\tobc[e]$ makes progress in time, so restart the ``heartbeat'' timer}
						
					\end{itemize}
					\item upon $timer(\tau)$ expires  or the front $\tx$ in the backlog $\buf$ was buffered   $T$ clock ``ticks'' ago:
					\begin{itemize}
						\item invoke  $\tobc[e]$.$\abandon()$  and multicast $\view(e, \pace_e, \prf_{e})$ {\color{orange} ~~~ // the fastlane is probably stucking or censoring certain transactions}
					\end{itemize}					
					
					\item upon receiving  message $\view(e, \pace_{e}^j, \prf_{e}^j)$ from  $\node_j$ for the first time:
					\begin{itemize}
						\item if $\LogVerify(e, \pace_{e}^j, \prf_{e}^j) = 1$: $\mathsf{Paces}_e \leftarrow \mathsf{Paces}_e\cup \pace_{e}^j$
					\end{itemize}
					
					\begin{itemize}
						\item if $|\mathsf{Paces}_e|=n-f$:  {\color{orange} ~~~// enough parties have already quitted the fastlane}
						
						\begin{itemize}
							\item invoke  $\mathsf{\textbf{Transformer(e)}}$ and wait for its return to continue {\color{orange} ~~~// enter into the pace-synchronization phase}
							\item proceed to the next epoch $e \leftarrow e + 1$  {\color{orange} ~~~// restart the fastlane of next epoch}
						\end{itemize}					
					\end{itemize}			
				\end{itemize}	
				
				\vspace{-0.1cm}
				\dotfill
				
				{\color{orange}// \bf Pace Synchronization}
				
				\textbf{internal function} $\trans$($e$): {~~~\color{orange}// {\bf internal} function shares   all internal states of the BDT protocol}
				
				\begin{itemize}
					\item let $\maxview_e \leftarrow \max(\mathsf{Paces}_e)$ and then $\mathsf{syncPace}_e  \leftarrow \tcvba[e]$($\maxview_e$)  {~~~\color{orange} // see Fig. \ref{fig:tcvba} or \ref{fig:blackaba} for concrete implementations of $\tcvba$}
					
					\item  if $\mathsf{syncPace}_e > 0$:  
					\begin{itemize}
						\item send $\view(e, \mathsf{syncPace}_e, \prf)$ to all if  $\mathsf{syncPace}_e\in\mathsf{Paces}_e$, where $\LogVerify(e, \mathsf{syncPace}_e, \prf) = 1$
						
						\item if $\mathsf{syncPace}_e = \pace_e$: $\out.append(\notarized_e)$ and $\buf=\buf \setminus \{\payload \textrm{ in } \notarized_e\}$ 
						
						\item if $\mathsf{syncPace}_e = \pace_e+1$:
						
						\begin{itemize}
							\item wait for a valid $\view(e, \mathsf{syncPace}_e, \prf)$, 
							then $\block'\leftarrow\Extract(e, \mathsf{syncPace}_e, \prf)$ {\color{orange} // try to extract the missing block}
							\item if $\block'$ is in form of $ ( e, \mathsf{syncPace}_e, \bot, \bot)$, then:  
							$\block'\leftarrow \gethelp(e, \pace_e, 1)$ {\color{orange} // failed to extract, have to rely on other   parties to fetch, cf. Fig. \ref{fig:help}}
							
							\item $\out.append(\notarized_e).append(\block')$ and $\buf=\buf \setminus \{\payload \textrm{ in } \notarized_e \textrm{ and } \block'\}$
							
						\end{itemize}
						
						\item if $\mathsf{syncPace}_e > \pace_e+1$:   $\blocks \leftarrow \gethelp(e, \pace_e, \mathsf{syncPace}_e - \pace_e)$      {\color{orange} // contact other parties to fetch   missing fastlane blocks, cf. Fig. \ref{fig:help}}
						\begin{itemize}
							\item $\out.append(\notarized_e).append(\blocks)$ and $\buf=\buf \setminus \{\payload \textrm{ in } \notarized_e \textrm{ and } \blocks\}$  
						\end{itemize}
						\item continue \textbf{Optimistic Path} with $e \leftarrow e+1$
					\end{itemize}
					
					\item if $\mathsf{syncPace}_e=0$: 
					invoke  $\mathsf{\textbf{Pessimistic(e)}}$  and wait for  its return, then continue \textbf{Optimistic Path} with $e \leftarrow e+1$
				\end{itemize}
				
				\vspace{-0.1cm}
				\dotfill
				
				{\color{orange}// {\bf Pessimistic Path}}
				
				\textbf{internal function} $\Pessimistic$($e$): {~~~\color{orange}// {\bf internal} function shares   all internal states of the BDT protocol}
				
				\begin{itemize}			
					\item $\txs_i$ $\leftarrow$  randomly select $\lfloor B/n \rfloor$-sized transactions from the first $B$-sized transactions at the top of $\buf$
					\item $x_i \leftarrow \TPKE.\Enc(epk, \txs_i)$, namely, encrypt $\txs_i$ to obtain $x_i$  
					\item $\{x_j\}_{j \in S } \leftarrow  \ACS[e](x_i)$, where $S \subset [n]$ and $|S|\ge n-f$ 
					\item 
					For each $j\in S$, jointly decrypt the ciphertext $x_j$ to obtain $\txs_j$, so the payload $\payload=\bigcup_{j \in S} \txs_j\leftarrow \{\TPKE.\Dec(epk, x_j)\}_{j \in S } $
					
					\item let $\block := \langle e, 1, \payload, \bot \rangle$, then $\out.append(\block)$ and $\buf=\buf \setminus \{\payload \textrm{ in } \block\}$ 
					
				\end{itemize} 
			}
		}
		\vspace{-0.4cm}
		\caption{The $\sys$ (BDT) protocol}
		
		\label{fig:mule}
	\end{footnotesize}
	  \vspace{-0.1cm}
\end{figure*}

\begin{figure*}[!b]
	\captionsetup{font={normalsize}}
 	\vspace{-0.1cm}
	\begin{footnotesize}
		\centering
		\fbox{%
			\parbox{17.5cm}{%
				
				{\color{orange}// {\bf The $\help$ daemon process}}\\
				$\help$: It is a daemon process that can read the finalized output $\out$ of $\sys$, 
				and it listens  to the down below event:
				\begin{itemize}
					\item  upon receiving message $\ghlp(e, \tip, \gap)$ from party $\node_j$ for the first time:
					\begin{itemize}
						\item assert $1 \leq \gap \le Esize$
						\item wait for $\out$ containing the $\block := \langle e, \tip + \gap, *, * \rangle$
						\item let $M \leftarrow$ retrieve all blocks in $\out$ from $\block := \langle e, \tip + 1, *, * \rangle$ to $\block := \langle e, \tip + \gap, *, * \rangle$

						\begin{itemize}
							\item let $\{m_k\}_{k \in [n]}$   be the fragements of a $(n-2f, n)$-erasure code   applied to $M$  and $h$   be a Merkle tree root computed over $\{m_k\}_{k \in [n]}$ 
							\item send $\hlp(e, \tip, \gap, h, m_i, b_i)$ to   $\node_j$ where $m_i$ is the $i$-th erasure-code fragement of $M$  and $b_i$ is the $i$-th Merkle tree branch 
						\end{itemize}
					\end{itemize}
				\end{itemize}
				
				\vspace{-0.2cm}
				\hrulefill
				
				{\color{orange}// {\bf The $\gethelp$ function}}
				
				{\textbf{external function} $\gethelp(e, \tip, \gap)$:} 
				\begin{itemize}
					\item let $F\leftarrow[\ ]$ to be a dictionary structure such that $F[h]$ can store all leaves committed to Merkle tree root $h$ 
					\item multicast message $\ghlp(e, \tip, \gap)$
					\item upon receiving the message $\hlp(e, \tip, \gap, h, m_j, b_j)$ from party $\node_j$ for the first time:
					\begin{itemize}
						\item if $b_j$ is a valid Merkle branch for root $h$ and leaf $m_j$ then: $F[h] \leftarrow F[h] \cup (j, m_j) $;
						otherwise discard the   message
						\item if $|F[h]| = n-2f$ then:
						\begin{itemize}
							\item interpolate the $n-2f$ leaves stored in $F[h]$ to reconstruct $M$, then parse $M$ as a sequence of $\block s$ and return $\block s$
						\end{itemize} 
					\end{itemize}
					
				\end{itemize}
			}
		}
	\vspace{-0.4cm}
		\caption{$\help$ and $\gethelp$. $\help$ is a daemon process having access to the output $\out$, and $\gethelp$ is a function   to call   $\help$}
		\label{fig:help}
	\end{footnotesize}
	 \vspace{-0.1cm}
\end{figure*}

\smallskip
\noindent
{\bf Protocol details.}
$\mathsf{BDT}$   is formally illustrated in Fig. \ref{fig:mule}.
It  employs a  reduction to $\mathsf{nw\textnormal{-}ABC}$, $\tcvba$, and some asynchronous consensus (e.g., $\ACS$). 
Informally, it proceeds as follows by successive epochs: 
\begin{enumerate}[leftmargin=0.6cm]
	\vspace{-0.1cm}
	\item
	{\em $\tobc$ phase.} 
	When an honest party enters an epoch $e$, 
	it   activates a $\tobc[e]$ instance, and locally starts an adversary-controlling  ``timer'' that expires after $\tau$ clock ``ticks'' and resets once hearing the ``heartbeat'' of $\tobc[e]$ (e.g., $\tobc[e]$ returns a new block). If one party receives a new   $\tobc[e]$ block in time without ``timeout'', it  temporarily records the block as  $\notarized$, finalizes the previous (non-empty) $\notarized$ block   as $\mathsf{BDT}$'s output, and sets its ``pace'' $p_e$ 	to the new block's slot number. Otherwise, the ``timeout'' mechanism interrupts, 
	and the party   abandons $\tobc[e]$. 
	%
	%
	Beside the above ``timeout'' mechanism to ensure $\tobc$ progress in time, we also consider   that some transactions are probably censored: if the oldest transaction (at the top of the input backlog) is not   output  for a  duration $T$, an interruption is also raised to abandon $\tobc[e]$.
	%
	Once a party  abandons $\tobc[e]$ for any    above reason, it immediately    multicasts   latest ``pace'' $p_e$   with  the corresponding block's proof  via  a $\view$ message. 

	\item
	 {\em $\trans$ phase.} 
	If an honest party receives   ($n-f$) valid $\view$ messages from distinct parties w.r.t.  $\tobc[e]$,    it enters   $\trans$. 
	In the   phase, the party chooses the maximum ``pace'' $\maxview$ out of the $n-f$  ``paces'' sent from distinct parties,
	and it would use this $\maxview$ as input to invoke the  $\tcvba[e]$ instance. 
	When $\tcvba[e]$ returns a value $\mathsf{syncPace}$,  all parties 
	agree to continue from the $\mathsf{syncPace}$-th block in $\tcvba[e]$.
	In some worse case that a party did not yet receive all blocks up to $\mathsf{syncPace}$,  it can fetch  the missing blocks from other  parties by calling  the
	 $\gethelp$ function    (cf. Fig. \ref{fig:help}).
	
	\item
	 {\em $\Pessimistic$ phase.} This phase may not be executed unless the optimistic fastlane of the current epoch $e$ makes no   progress at all, i.e,  $\mathsf{syncPace}=0$. 	In the worst case,  
	 $\Pessimistic$ is invoked to guarantee that some blocks (e.g., one) can be generated despite an adversarial   network   or   corrupt leaders, which becomes   the last line of defense   to ensure the critical liveness.
	
\end{enumerate}

\noindent
{\bf $\help$ and $\gethelp$.} 
%
	%
	 Besides the above main protocol procedures,
	 a party might  invoke the 	$\gethelp$ function to broadcast a $\ghlp$ message, when it realizes that some fastlane blocks are missing. 
	As Fig. \ref{fig:help} illustrates,   $\ghlp$ messages specify which blocks to retrieve,
	and every party also runs a $\help$ daemon   to  handle  $\ghlp$ messages.
	Actually, any honest party that invokes  $\gethelp$ can eventually retrieve the missing  blocks, because at least $f+1$ honest parties indeed   output   the  blocks under request. The $\help$ daemon can also use the  techniques of erasure-code and Merkle commitment tree in verifiable information dispersal \cite{cachin2005asynchronous,honeybadger},  such that it only  responds with a coded fragment of the requested blocks, thus saving the overall communication  cost by an $\bigO(n)$ order.

\smallskip
\noindent
{\bf Alternative pessimistic path.}
The exemplary $\Pessimistic$ path invokes $\dumbo$ to output one single block. 
Nonetheless, this is not the only design choice. First, \rev{$\BDT$ is a generic framework, and thus it is compatible with many recent asynchronous BFT protocols such as   DispersedLedger \cite{yang2021dispersedledger} and not restricted to $\dumbo$}. Second, there could be some global heuristics to estimate how many  blocks needed to generate during the pessimistic path according to some public   information (e.g., how many times the fastlane completely fails   in a stream). Designing such  heuristics to better fit   real-world Internet environments could be an interesting engineering question to explore in the future  but   does not impact any security analysis.




 \smallskip
\noindent
{\bf Security intuitions.} We brief the   security  intuitions of      $\BDT$   in the following, and defer   detailed proofs to Appendix \ref{append:bdt} for space limit. 

	 	\noindent
	 \underline{\smash{\em Safety}}. The core ideas of proving agreement and total-order are:
	 \vspace{-0.1cm}
	 \begin{itemize}[leftmargin=0.6cm]
	 	\item {\em $\trans$ returns a common index}. All honest parties must obtain the same block index from $\trans$, so they always agree the same fastlane block to continue the pessimistic path (or retry the fastlane). This is ensured by   $\tcvba$'s agreement.
	 	\item {\em $\trans$ returns an index   not ``too large''}. For the index returned from $\trans$, at least $f+1$ honest parties did receive all blocks (with valid proofs) up to this index. As such, if any party misses some blocks, it can easily fetch the correct blocks from these $f+1$ parties. 
	 	This is because the notarizability of $\tobc$ prevents the adversary from forging a proof for a fastlane block with an index higher than the actually delivered block. So no honest party would input some index of an irretrievable block to $\tcvba$, and then the validity of $\tcvba$ simply guarantees the claim.
	 	
	 	\item {\em $\trans$ returns an index   not ``too small''}. No honest party would revoke any fastlane block that was already committed as a finalized output. Since each honest party waits for $2f+1$  $\view$ messages   from distinct parties, then due to the  notarizability of $\tobc$, there is at least one $\view$ message contains $s-1$, where $s$ is the latest fastlane block (among all parties). So every honest party at least inputs $\tcvba$ with $s-1$. The validity of $\tcvba$ then ensures the output at least to be $s-1$ as well.
	 	Recall that there is   a ``safe buffer'' to hold the latest fastlane block as a pending one, the claim is then correct.
	 	
	 	\item {\em Pessimistic path and fastlane are safe}. Pessimistic path is trivially safe due to its agreement and total order. Fastlane  has total-order by definition, and its weaker agreement (notarizability) is complemented by $\trans$ as argued above.
	 \end{itemize}
	
	\noindent
	 \underline{\smash{\em Liveness}}. This stems from  the liveness of all three phases.	 
	 The liveness of fastlane is guaranteed by the ``timeout'' parameter $\tau$. That means, all honest parties can leave the fastlanes without  ``getting stuck''.  
	After that, all parties would invoke $\tcvba$  and   obtain $\mathsf{syncPace}$ as the $\tcvba$  output due to the termination   of   $\tcvba$; moreover, if any honest party realizes that it misses some fastlane blocks after obtaining $\mathsf{syncPace}$, it can   sync up to   $\mathsf{syncPace}$ within only two asynchronous rounds, because at least $f+1$ honest parties can help it to fetch the missing blocks. So no honest party would ``stuck'' during the $\trans$ phase.
	Finally, the honest parties would enter the $\Pessimistic$ phase if the fastlanes completely fail to output nothing. After that, the protocol must   output expected $\bigO(B)$-sized transactions, and ensures that  any transactions (at the $B$-top of all honest parties' backlogs) can output with a constant probability, 
	thus ensuring liveness even if in the worst case.

	\ignore{		
		\item {\em Liveness and safety of $\trans$}. It is easy to see that once an honest party switches to the $\trans$ phase of the epoch $e$, then all honest parties will do so, because   an honest party starts $\trans$ only if it has   received  at least $n-f$ $\view$ messages from distinct parties for epoch $e$, indicating that at least those $f+1$ honest parties abandon    $\tobc[e]$. As a result,
		the $\tobc[e]$ instance would no longer output any valid block to any honest parties, so all honest parties' timers must     interrupt and then abandon $\tobc[e]$ in at most $\bigO(\tau)$ asynchronous rounds.
		
		Clearly, this induces that all honest parties will mulitcast $\view$ messages  and  therefore   switch to the $\trans$ phase of the epoch $e$. 
		After that, all of them would invoke $\tcvba$  and then obtain $s$ as the $\tcvba[e]$  output due to the termination property of the $\tcvba$ primitive. As such, no honest party would ``stuck'' during the $\trans$ phase.
				
		\item  {\em Optimistic liveness and safety of  $\tobc$}. During the  $\tobc$ phase, there would be no violation of the total-order since the fastlane itself also enjoys the property.
		The probable disagreement during the phase would be eventually fixed by the $\trans$ phase, since the $\tcvba$ module in $\trans$ would make all parties reach an agreement to sync up to a common optimistic path block.

		\item  {\em Liveness and safety of  $\dumbo$}. This can be easily seen as  this phase is a fully-fledged asynchronous protocol.}


\smallskip
	\noindent{\bf Efficiency analysis.} The complexities can be analyzed by counting these of each underlying module.
	Overall, $\mathsf{BDT}$ would cost (expected) $\bigO(n)$ communicated  bits per output transaction, 
	and the  latency of each output block is of expected constant rounds. 
	These  complexities hold in all cases (no matter the network   is synchronous or asynchronous). 
	We defer such tedious counting on the number  of exchanged bits and  execution rounds to Section \ref{sec:complex}.

\smallskip
\noindent{\bf Optimistic conditions.} 
$\BDT$ has a simple and efficient deterministic fastlane that might  keep on progressing
under certain optimistic conditions, which intuitively are: (i) the actual network delay is smaller than some guessed timing parameter and (ii) the leader of fastlane is honest.
In Appendix \ref{app:optimistic}, we discuss why such optimistic conditions  can ensure  the progress of fastlanes.

\ignore{
	##############################################################################################################
	##############################################################################################################

\noindent
{\bf Safety proof}. We first prove the  total-order and agreement, assuming the underlying $\tobc$, $\tcvba$ and $\acs$ are secure.

\begin{claim}
	\label{aa}
	If an honest party activates $\tcvba[e]$,   then at least $n-2f$ honest parties have already invoked $\abandon(e)$, and from now on: suppose these same parties invoke $\abandon(\id)$ before they output $\block$:=$ \langle e, R, \cdot,\cdot \rangle $, then	any party (including the faulty ones) cannot receive (or forge) a valid $\block$:=$ \langle e, R+1, \cdot,\cdot  \rangle $, and all honest parties  would activate $ \tcvba[e]$. 
	
\end{claim}

{\em Proof:} When  an honest party $\node_i$ activates $ \tcvba[e]$, it must have received $n-f$ valid $\view$ messages from distinct parties, so there would be at least $n-2f$ honest parties multicast $\view$ messages. By the pseudocode, it also means that  at least $n-2f$ honest parties have invoked $\abandon(e)$. Note that $n-2f\ge f+1$ and these same parties invoke $\abandon(\id)$ before they output $\block$:=$ \langle e, R, \cdot,\cdot  \rangle $, so no party would deliver any valid $\block$:=$ \langle e, R+1, \cdot,\cdot  \rangle $ in this epoch's $\tobc$ phase due to the {\em abandonability} property of $\tobc$. It is also implies that any parties cannot from the $\tobc$ receive valid  $\block$:=$ \langle e, R+1, \cdot,\cdot  \rangle $ , then all honest parties will eventually be interrupted by the ``timeout'' mechanism after $\tau$ asynchronous rounds and then multicast $\view$ messages. This   ensures that all honest parties finally receive $n-f$ valid $\view$ messages from distinct parties, causing all honest parties to activate $ \tcvba[e]$. 
$\hfill\square$

\begin{claim}
	Suppose that  some party receives a valid $\block$:=$\langle e, R, \cdot , \cdot\rangle$ from $\tobc[e]$ when an honest party invokes $ \tcvba[e]$ s.t. this $\block$ is the one with largest slot number among all parties' valid $\notarized$ blocks (which means the union of the honest parties' actual $\notarized$ blocks and the malicious parties' arbitrary  valid  $\tobc[e]$ block), 	then all honest parties'     $\maxview_e$ must be same and be either $R$ or $R-1$.
	\label{dd}
\end{claim}	

{\em Proof:}
Following   Claim \ref{aa}, once an honest party invokes $ \tcvba[e]$, 
the $\tobc[e]$ $\block$ with the largest slot number $R_{max}$ would not change anymore.
Let us call this already fixed $\tobc[e]$ $\block$ with highest slot number as $\block_{max}$.
Since there is someone that receives $\block_{max}$:=$\langle e, R, \cdot , \cdot\rangle$,
at least $f+1$ honest parties (e.g., denoted by $Q$) have already received the block $\langle e, R-1, \cdot , \cdot\rangle$,
which is because of the {\em notarizability} property of $\tobc$. 
So these honest parties would broadcast a valid $\view(e, R-1,\cdot)$ message or a valid $\view(e, R,\cdot)$ message. 
According to the pseudocode in Figure \ref{fig:mule}, $\maxview_e$ is the maximum number in the set of $\mathsf{Paces}_e$, where       $\mathsf{Paces}_e$ contains the slot numbers encapsulated by $n-f$ valid $\view$ messages. Therefore, $\mathsf{Paces}_e$ must contain one $\view$ message's slot number from at least $n-2f \ge f+1$ honest parties (e.g., denoted by $\bar{Q}$). 
All honest parties' local $\mathsf{Paces}_e$ set must   contain $R-1$ and/or $R$, because $\bar{Q}$ and $Q$ contain at least one common honest party. Moreover, there is no valid $\view$ message containing any slot larger than $R$ since the proof for that is unforgeable, which means $R$ is the largest possible value in all honest parties' $\mathsf{Paces}_e$. So any honest party's $\maxview_e$ must be     $R$ or $R-1$.
$\hfill\square$

\begin{claim}
	No honest party would get a $\mathsf{syncPace}_e$ smaller than the slot number of it latest finalized block $\out[-1]$ (i.e., no block  finalized in some honest party's $\out$ can   be revoked).
	\label{norevoke}
\end{claim}

{\em Proof:} 
Suppose an honest party invokes $ \tcvba[e]$ and a valid $\tobc[e]$  block $\langle e, R, \cdot , \cdot\rangle$ is the one with largest slot number among  the union of the honest parties' actual $\notarized$ blocks and the malicious parties' arbitrary  valid  $\tobc[e]$ block. Because of Claim \ref{dd}, all honest parties will activate $\tcvba[e]$ with taking either $R$ or $R-1$ as input. According to the {\em strong validity} of  $\tcvba$, the output $\mathsf{syncPace}_e$ of $\tcvba[e]$  must be either $R$ or $R-1$. 
Then we consider the next two cases:

\begin{enumerate}
	\item Only malicious parties have this  $\langle e, R, \cdot , \cdot\rangle$ block;
	
	\item Some honest party $\node_{i}$ also has the $\langle e, R, \cdot , \cdot\rangle$ block.
\end{enumerate}	

{\em For Case 1)} Due to the {\em notarizability} property of $\tobc$ and this case's baseline, there exist $f+1$ honest parties (denoted by a set   $Q$) have  the block $\langle e, R-1, \cdot , \cdot\rangle$ as their local $\notarized$. 
Note that remaining honest parties  (denoted by a set   $\bar{Q}$) would have local $\notarized$ block not higher than $R-1$.  According to the algorithm in Figure \ref{fig:mule}, we can state that: 
(i) if the output is $R$, then all honest parties will sync their $\out$ up to the block  $\langle e, R, \cdot , \cdot\rangle$ (which include all honest parties' local $\notarized$); 
(ii) similarly, if the output is $R-1$, all honest parties will sync up till $\langle e, R-1, \cdot , \cdot\rangle$ (which also include all honest parties' local $\notarized$). So in this case, all honest parties (i.e., $\bar{Q} \cup Q$) will not discard their $\notarized$ $\block$, let alone discard some $\tobc$ that are already finalized to output into $\out$.

{\em For Case 2)} Let $Q$ denote the set of honest parties that have  the block $\langle e, R, \cdot , \cdot\rangle$ as their local $\notarized$. 
Note the remaining honest parties $\bar{Q}$ would have the $\notarized$ block not higher than $R$. In this case,  following the algorithm in Figure \ref{fig:mule}, we can see that: 
(i) if the output is $R$, then all honest parties will sync their $\out$ up to the block  $\langle e, R, \cdot , \cdot\rangle$ (which include all honest parties' local $\notarized$); 
(ii) similarly, if the output is $R-1$, all honest parties will sync up to $\langle e, R-1, \cdot , \cdot\rangle$ (which include $\bar{Q}$ parties' local $\notarized$ and ${Q}$ parties' finalized output $\out$). So in this case, all honest parties (i.e., $\bar{Q} \cup Q$) will not discard any block in their finalized  $\out$.
$\hfill\square$

\begin{claim}
	If $\tcvba[e]$ returns $\mathsf{syncPace}_e$, then at least $f+1$ honest parties  can append $\block$s with slot numbers from $1$ to $\mathsf{syncPace}_e$ from $\tobc[e]$  into the $\out$ without invoking $\gethelp$  function.
	\label{provable}
\end{claim}	

{\em Proof:}
Suppose $\tcvba[e]$ returns $\mathsf{syncPace}_e$, according to the {\em strong validity}  of $\tcvba$, then at least one honest party inputs the number $\mathsf{syncPace}_e$. The same honest party must receive a valid message 	$\view(e, \mathsf{syncPace}_e, \prf)$, 
which means there must exists a valid   $\tobc$ block  $\langle e, \mathsf{syncPace}_e, \cdot , \prf \rangle$. Following the pseudocode, the honest party will multicst $\view(e, \mathsf{syncPace}_e, \prf)$ if $\tcvba[e]$ returns $\mathsf{syncPace}_e$, then all honest parties can get the $\prf$. According to the {\em notarizability} and {\em total-order} properties of $\tobc$, at least $f+1$ honest parties  can append  $\block$s  from $\langle e, 1, \cdot , \cdot\rangle$ to $\langle e, \mathsf{syncPace}_e, \cdot , \cdot\rangle$ into the $\out$ without invoking $\gethelp$  function..
$\hfill\square$

\begin{claim}
	\label{callhelp}
	If an honest party invokes $\gethelp$  function to retrieve a block $\out[i]$, it eventually can get it;
	if another honest party   retrieves a block $\out[i]'$   at the same log position $i$ from the $\gethelp$ function, then $\out[i]= \out[i]'$.
\end{claim}	

{\em Proof:} Due to Claim \ref{provable} and {\em total-order} properties of $\tobc$, any block $\out[i]$ that an honest party is retrieving through $\gethelp$ function shall have been in the output $\log$ of at least $f+1$ honest parties, so it eventually can get $f+1$ correct $\hlp$ messages with the same Merkle tree root $h$ from distinct parties, then it can interpolate the $f+1$ leaves to reconstruct $\out[i]$ which is same to other honest parties' local $\out[i]$. 
We can argue the agreement by contradiction, in case the interpolation of honest party  fails or it recovers a block $\out'[i]$ different from the the honest party's local $\out[i]$, that means the Merkle tree with root $h$ commits some leaves that are not coded fragments of  $\out[i]$; nevertheless, there is at least one honest party encode $\out[i]$ and commits the block's erasure code to have a Merkle tree root $h$; so the adversary indeed breaks the collision resistance of Merkle tree, implying the break of the collision-resistance of hash function, which is computationally infeasible. So all honest parties that attempt to retrieve a missing block $\out[i]$ must   fetch the block consistent to other honest parties'.
$\hfill\square$

\begin{lemma}
	\label{agreement}
	If all honest parties enter the  $\tobc$ phase with the same $\out$, then they will always finish the  $\trans$ phase with still having the same   $\out'$. 
\end{lemma}

{\em Proof:} 
When all honest parties enter the epoch with the same $\out$, it is easy to see that they all will eventually raise some interruption to abandon the $\tobc$ phase.	
Due to Claim \ref{aa},  all honest parties  would activate $\tcvba[e]$. Following the  {\em agreement and termination} of $\tcvba[e]$, all parties would finish $\tcvba[e]$ to get a common $\mathsf{syncPace}_e$  and then sync up to the same $\out$ with the highest slot number $\mathsf{syncPace}_e$ (due to {\em total-order} properties of $\tobc$, Claim \ref{provable} and Claim \ref{callhelp}), and hence all parties will finish the  $\trans$ phase with   the same output $\out$.
$\hfill\square$

\begin{lemma}
	\label{order}
	For any two honest parties before finishing the  $\trans$ phase, then 
	there exists one party, such that its $\out$ is a prefix of (or equal to) the other's.
\end{lemma}

{\em Proof:} The blocks outputted before the completion of the  $\trans$ phase were originally generated from the $\tobc$ phase, then the Lemma holds immediately by following the {\em total-order} property of $\tobc$ and Claim \ref{norevoke}.
$\hfill\square$

\begin{lemma}
	\label{dumbo}
	If any honest party enters the  $\Pessimistic$ phase, then all honest parties will enters the phase and
	always leave the  phase with having the same   $\out$.
\end{lemma}

{\em Proof:} If any honest party enter the $\Pessimistic$ phase, all honest parties would enter this phase, which is due to Claim \ref{aa}, the {\em agreement and termination} property of $\tcvba$ and $\mathsf{syncPace}_e=0$.
Let us assume that all honest parties enter the $\Pessimistic$ phase with the same $\out$, it would be trivial too see the statement for the agreement and termination properties of $\acs$ and the correct and robustness properties of threshold public key encryption.
Then considering Lemma \ref{agreement} and the simple fact that all honest parties activate with the common empty $\out$,
we can inductively reason that all honest parties must enter any $\Pessimistic$ phase with the same $\out$. So the Lemma holds.
$\hfill\square$

\begin{lemma}
	\label{orderr}
	For any two honest parties   in the same epoch,  
	there exists one party, such that its $\out$ is a prefix of (or equal to) the other's. 
\end{lemma}

{\em Proof:} 
If two honest parties do not enter the $\Pessimistic$ phase during the epoch, 
both of them only participate in $\tobc$ or $\trans$, so this Lemma holds immediately by following Lemma \ref{order}.
For two honest parties that one enters the $\Pessimistic$ phase and one does not, this Lemma holds because the latter one's $\out$ is either a prefix of the former one's or equal to the former one's due to Lemma  \ref{agreement} and \ref{order}.
For the remaining case that both   honest parties   enter the $\Pessimistic$ phase, they must   initially have exactly same $\out$ (due to Lemma \ref{agreement}).
Moreover, in the phase, all honest parties would execute the $\acs$ instances in a sequential manner (e.g., there is only one $\acs$ instance in our exemplary pseudocode), 
so every honest party would output in one ACS instance only if it has already outputted in all earlier ACS instances.
Besides,  any two honest party would output the same transaction batch in every $\acs$ instance for the agreement property of $\acs$.
So this Lemma also holds for any two honest parties that are staying in the same epoch.
$\hfill\square$

\begin{theorem}
	\label{key1}
	The $\sys$ protocol satisfies the {\em agreement} and {\em total order} properties.
\end{theorem}	

{\em Proof:}
The {total order} be induced by Lemma \ref{orderr} along with the fact the protocol is executed epoch by epoch. 
The agreement  follows immediately from   Lemma \ref{agreement}  and \ref{dumbo} along with the fact that all honest parties initialize with the same empty $\out$ to enter the first epoch's $\tobc$ phase.
$\hfill\square$

\noindent
{\bf Liveness proof}. Then we prove the liveness property of $\sys$.

\begin{lemma}
	\label{boltlive} Once all honest parties enter the $\tobc$ phase, they must leave the phase in at most constant asynchronous rounds and then all enter the $\trans$ phase.
\end{lemma}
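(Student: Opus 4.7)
The plan is to show three things in sequence: (i) at least one honest party must trigger an abandonment within a constant number of asynchronous rounds, (ii) this triggers a cascade that forces every honest party to abandon shortly after, and (iii) once everyone has multicast $\view$, the $2f+1$ threshold for entering $\trans$ is met in one more round.

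For (i), I would enumerate the three interruption conditions in the pseudocode of Figure \ref{fig:mule}. Either the slot counter $s$ exceeds $Esize$, which must happen after at most $Esize$ successful $\tobc[e]$ heartbeats (each of which — by the concrete $\mathsf{Bolt}\text{-}\mathsf{HS}$ or $\mathsf{Bolt}\text{-}\mathsf{RBC}$ construction — completes in $\bigO(1)$ asynchronous rounds when the adversary chooses to progress the optimistic path); or the local $timer(\tau)$ fires after $\tau$ ticks during which $\tobc[e]$ produces no new block; or the oldest transaction in $\buf$ sits for $T$ ticks without being packaged, triggering the censorship interrupt. Since $Esize$, $\tau$, and $T$ are all fixed constants of the protocol, whichever condition fires first, at least one honest party is guaranteed to invoke $\abandon(e)$ and multicast $\view$ within $\bigO(Esize + \tau + T) = \bigO(1)$ rounds of entering the phase.

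For (ii), I would invoke the \emph{abandonability} property of $\tobc$ together with Lemma \ref{aa}'s style of reasoning. Once the first honest party has invoked $\abandon(e)$, its $\view$ message is delivered to every honest party in one more round. Because every honest party shares the same $\tau, T, Esize$ parameters and started its local clocks at roughly the same time (within constant network delay of each other), the remaining honest parties will themselves hit one of the three interrupt conditions within an additional $\bigO(1)$ rounds; moreover, as soon as $f+1$ honest parties have invoked $\abandon(e)$, the abandonability property ensures that $\tobc[e]$ delivers no further valid blocks, so any remaining honest party's $timer(\tau)$ must necessarily fire within $\tau$ more ticks. Hence every honest party abandons and multicasts a valid $\view$ message within $\bigO(\tau)$ additional rounds.

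For (iii), once all $n-f \geq 2f+1$ honest $\view$ messages have been multicast, each honest party receives the required $2f+1$ distinct valid $\view$ messages within one asynchronous round and enters the $\trans$ phase by calling $\trans(e, S, 2\text{-}\notarized, 1\text{-}\notarized, \pace_i)$. Summing the bounds from (i), (ii), and (iii) yields a total of $\bigO(1)$ asynchronous rounds, treating the fixed parameters $Esize, \tau, T$ as constants. The main obstacle is the middle step: cleanly arguing the cascade when the optimistic path is adversarially tuned to deliver blocks fast enough to keep the timer reset but slow enough to delay hitting $Esize$; the censorship interrupt on the top transaction of $\buf$ is the essential handle that rules this out, since as long as transactions arrive faster than they are drained, some honest party's oldest transaction must linger for $T$ ticks and trigger the abandonment.
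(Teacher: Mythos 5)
Your proof is correct and follows essentially the same route as the paper's own (much terser) argument: in the worst case every honest party's interrupt fires within $\bigO(\tau + \kappa\, Esize)$ asynchronous rounds, and then one more round of $\view$ multicasts lets every honest party collect the $2f+1$ valid $\view$ messages needed to call $\trans$; your cascade step via abandonability is a harmless extra borrowed from Lemma~\ref{aa}. One remark: the worry in your final paragraph is unfounded, and the censorship timeout $T$ is not actually needed here --- the adversary cannot keep timers reset while indefinitely delaying the $s > Esize$ interrupt, because each timer reset requires delivery of a fresh block, which advances the slot counter, so after at most $Esize$ resets (each at most $\tau$ ticks apart) that interrupt fires anyway; the parameter $T$ matters only for the censorship-resilience argument in Theorem~\ref{key2}.
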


{\em Proof:} At worst case, all honest parties' timeout  will  interrupt, causing them to abandon the $\tobc$ phase in at most $\tau+\kappa Esize$ asynchronous rounds,
where $\kappa$, according to the {\em optimistic liveness} property of $\tobc$, is a tiny constant  that represents the running time of generating a new optimistic $\tobc$ block and depends on the underlying $\tobc$ instantiation (e.g., it is two in the  $\mathsf{Bolt}$-$\mathsf{sCAST}$ instantiation). 
Upon timeout, the broadcast of $\view$ message will take one more asynchronous round. After that, all honest parties would receive enough $\view$ messages to enter the $\trans$ phase, which costs at most $\tau+\kappa Esize + 1$ asynchronous rounds. 
$\hfill\square$

\begin{lemma}
	\label{translive}  If all honest parties enter the $\trans$ phase, they all leave the phase in expected constant asynchronous rounds and then either enter the $\Pessimistic$ phase or enter the next epoch's $\tobc$ phase.
\end{lemma}

{\em Proof:} 
If all honest parties enter the $\trans$ phase, it
is trivial to see the Lemma since the underlying $\tcvba$ terminates in on-average constant asynchronous rounds. If the output of $\tcvba$ equal 0, then enter the $\Pessimistic$ phase, otherwise,   enter the next epoch's $\tobc$ phase.
$\hfill\square$

\begin{lemma}
	\label{dumbolive} If all honest parties enter the $\Pessimistic$ phase, all honest parties will leave this $\Pessimistic$ phase in on-average constant asynchronous rounds with outputting some blocks containing on-average $\bigO(B)$-sized transactions. 
\end{lemma}

{\em Proof:} 
Similar to \cite{honeybadger}'s analysis, $\Pessimistic$ at least outputs $\bigO(B)$-sized transactions (without worrying that the adversary can learn any bit about the transactions to be outputted before they are actually finalized as output) for each execution. Here we remark that the original analysis in \cite{honeybadger} only requires IND-CPA security of threshold  public key encryption might be not enough, since we need to simulate that the adversary can query decryption oracle by inserting her ciphertext into the ACS output. Moreover, the underlying $\dumbo$ ACS construction \cite{guo2020dumbo}   ensures all parties to leave the phase in on-average constant asynchronous rounds.
$\hfill\square$

\begin{theorem}
	\label{key2}	
	The $\sys$ protocol satisfies the {\em liveness} property.
\end{theorem}

{\em Proof:}
Due to Lemma \ref{boltlive} and \ref{translive}, the adversary would not be able to stuck the honest parties during the $\tobc$ and $\trans$ phases.
Even if in the worst cases, the two phases do not deliver any useful output and the adversary intends to prevent the parties from running the
$\Pessimistic$ phase (thus not eliminating any transactions from the honest parties' input buffer),
we still have a  timeout mechanism against censorship, which can ensure to execute the $\Pessimistic$ phase for every $\bigO(T)$ asynchronous rounds if there is a suspiciously censored $\tx$ in all honest parties' buffers due to some timeout parameter $T$. 
Recall Lemma \ref{dumbolive}, for each $\tx$ in all honest parties' buffers, it would take $\bigO(XT/B)$ asynchronous rounds at worst (i.e., we always rely on the timeout $T$ to invoke the $\Pessimistic$ phase) to make   $\tx$ be one of the top $B$ transactions in all parties' buffers, where $X$ is the bound of buffer size (e.g., an unfixed polynomial in $\lambda$).
After that, any luckily finalized optimistic phase block would output $\tx$ (in few more $\delta$), or still relying on the timeout to invoke the $\Pessimistic$ phase, causing the worst case latency $\bigO( (X/B + \lambda)\delta T)$, which is a function in the actual network delay $\delta$ factored by some  (unfixed) polynomial of security parameters. 
$\hfill\square$ 

##############################################################################################################
##############################################################################################################
}


 \vspace{-0.1cm}
\section{Performance Evaluation}
\label{evaluation}

\noindent
{\bf Implementation details}.
%
We   program the proof-of-concept implementations of $\mathsf{BDT}$, $\dumbo$ and 2-chained $\mathsf{HotStuff}$ in the same language (i.e. Python 3), with using the  same libraries and security parameters
for all   cryptographic implementations.
The   BFT protocols are implemented by single-process code.
Besides, a common     network layer is programmed   by using  unauthenticated TCP sockets.
The network layer is implemented as a separate   Python process to provide    non-blocking communication interface.

For common coin, it is realized by hashing   Boldyreva's pairing-based unique   threshold signature \cite{boldyreva2003threshold} (implemented over MNT224 curve).
For quorum proofs, we concatenate ECDSA signatures (implemented over secp256k1 curve).
%
For threshold public key encryption, the hybrid encryption approach implemented in HoneyBadger BFT is used  \cite{honeybadger}.
For  erasure coding, the Reed-Solomon implementation in the zfec library is adopted.
For timeout mechanism, we  use the    clock  in each EC2 instance to implement the local time in lieu of the adversary-controlling ``clock''  in our formal security model.
\rev{Our proof-of-concept codebase is available at https://github.com/yylluu/BDT}.

For notations,  $\mathsf{BDT}$-$\mathsf{sCAST}$ denotes $\mathsf{BDT}$     using       $\tobc$-$\mathsf{sCAST}$ as fastlane, 
while    $\mathsf{BDT}$-$\mathsf{sRBC}$ denotes the other instantiation   using $\tobc$-$s\RBC$. 
In addition, $\mathsf{BDT}$-$\mathsf{Timeout}$ denotes to use an idle fastlane that   just waits for timeout, which  can be used as benchmark to ``mimic'' the worst case that the fastlanes always  output  nothing due to constant denial-of-service attacks.

Due to space limitation, we only presented most intuitive experiments here, more evaluations can be found in Appendix \ref{app:data}.

\subsection{Evaluations in   wide-area network}

\noindent
{\bf Setup on Amazon EC2}. 
To demonstrate the practicability of $\BDT$ in realistic wide-area network (WAN),
we evaluate it among   Amazon EC2 c5.large instances (2 vCPUs and 4 GB   RAM) for $n$=64 and 100 parties, and also test $\dumbo$ and $\mathsf{HotStuff}$ in the same setting as reference points. 
All  EC2 instances  are evenly distributed in 16 AWS  regions, i.e., Virginia, Ohio, California, Oregon, Central Canada, S\~ao Paulo, Frankfurt, Ireland, London, Paris, Stockholm, Mubai, Seoul, Singapore, Tokyo and Sydney. 
\rev{All evaluation results in the WAN setting are measured back-to-back and averaged over two executions (each run for 5-10 minutes).} 

In the  WAN setting  tests, we might fix some parameters of $\mathsf{BDT}$ to intentionally amplify the fallback cost.
For example,   let each fastlane interrupt  after  output   only 50   blocks, so     $\trans$   is   frequently invoked.
We also set the fastlane's timeout parameter $\tau$ as large as 2.5 sec (nearly twenty times of the  one-way network latency in our test environment), so all fallbacks  triggered by    timeout would incur a 2.5-second overhead in addition to the $\trans$'s latency.

\smallskip
\noindent
{\bf Basic latency}. 
We firstly measure   the basic latency  to reflect how fast the protocols are  (in the good cases without faults or timeouts), if all blocks have nearly zero payload (cf. Fig. \ref{fig:blatency}). This     provides us the baseline   understanding about how fast BDT, $\mathsf{HotStuff}$ and $\dumbo$ can be to  handle the scenarios favoring low-latency.

\begin{figure}[h] 
	\captionsetup{font={normalsize}}
	\vspace{-0.2cm}
	\begin{center}
		\includegraphics[width=7.5cm]{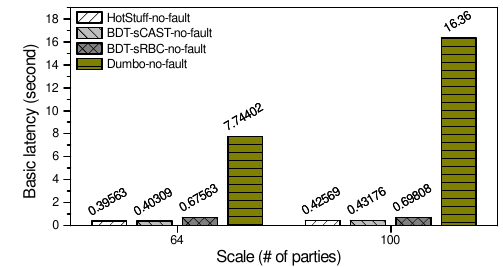}\\
		\vspace{-0.4cm}
		\caption{Basic latency   in experiments over WAN for two-chain $\mathsf{HotStuff}$, $\mathsf{BDT}$-$\mathsf{sCAST}$, $\mathsf{BDT}$-$\mathsf{sRBC}$ and $\dumbo$.}	
		\label{fig:blatency}
		\vspace{-0.2cm}
	\end{center}
\end{figure}

When $n=100$, $\mathsf{BDT}$-$\mathsf{sCAST}$ is 36x faster than $\dumbo$, and  $\mathsf{BDT}$-$\mathsf{sRBC}$ is 23x faster than $\dumbo$;  when $n=64$,
$\mathsf{BDT}$-$\mathsf{sCAST}$ is 18x faster than $\dumbo$, and  $\mathsf{BDT}$-$\mathsf{sRBC}$ is 10x faster than $\dumbo$; moreover, the execution speed of both $\mathsf{BDT}$-$\mathsf{sCAST}$ and  $\mathsf{BDT}$-$\mathsf{sRBC}$ are at the same magnitude of $\mathsf{HotStuff}$. In particular, the basic latency of $\mathsf{BDT}$-$\mathsf{sCAST}$ is almost as same as that of 2-chain  $\mathsf{Hotstuff}$, which is because the fastlane of $\mathsf{BDT}$-$\mathsf{sCAST}$ can be thought of a stable-leader 2-chain  $\mathsf{Hotstuff}$ and its optimistic latency has five rounds \footnote{The five-round latency of $\mathsf{BDT}$-$\mathsf{sCAST}$ in the best cases can be   counted as follows: one round for the leader to multicast the proposed batch of transactions, one round for the parties to vote (by signing), one round for the leader to multicast the quorum proof   (and thus all parties can get a   pending block), and finally two more rounds for every parties to receive one more   block and therefore output the earlier pending block. The concrete of rounds of $\mathsf{BDT}$-$\mathsf{sRBC}$ in the best cases  can be counted similarly.}, i.e., same to that of 2-chain $\mathsf{Hotstuff}$.

\smallskip
\noindent
{\bf Peak throughput}. We then measure throughput in unit of transactions per second (where each transaction is a 250 bytes string to  approximate the size of
a typical Bitcoin transaction).  
The peak throughput is depicted in Fig. \ref{fig:tps}, and 
gives us an insight how well $\mathsf{BDT}$, $\mathsf{HotStuff}$ and $\dumbo$ can handle transaction burst. \footnote{Note that we didn't implement an additional layer of mempool as in \cite{gao2022dumbong} and \cite{danezis2022narwhal}, and we can expect much higher throughout if we adopt their mempool techniques.}

\begin{figure}[h] 
	\vspace{-0.2cm}
	\captionsetup{font={normalsize}}
	\begin{center}
		\includegraphics[width=7.2cm]{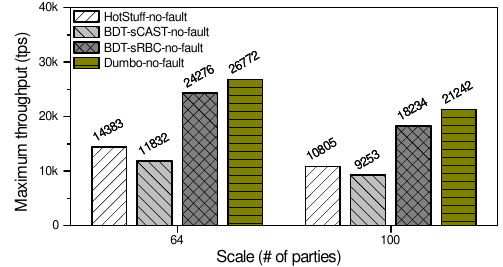}\\
		\vspace{-0.4cm}
		\caption{Peak throughput in experiments over WAN for two-chain $\mathsf{HotStuff}$, $\mathsf{BDT}$-$\mathsf{sCAST}$, $\mathsf{BDT}$-$\mathsf{sRBC}$ and $\dumbo$.}\label{fig:tps}	
	\end{center}
	\vspace{-0.2cm}
\end{figure}

$\mathsf{BDT}$-$\mathsf{sCAST}$ realizes a peak throughput   about 85\% of $\mathsf{HotStuff}$'s  when either $n$ is 100 or 64,  
$\mathsf{BDT}$-$\mathsf{sRBC}$ achieves a peak throughput that is as high as around 90\% of $\dumbo$'s   for  $n=64$ case and about 85\% of $\dumbo$'s    for  $n=100$ case. All these throughput numbers are achieved despite frequent $\trans$ occurrence, as we intend to let each fastlane to  fallback after  output mere   50 blocks.

\medskip
\noindent
{\bf Overhead of $\trans$}. It is critical for us to understand the  practical cost   of $\trans$.
%
We estimate such overhead from two different perspectives as shown in Fig. \ref{fig:vc} and \ref{fig:overhead}.

\begin{figure}[h] 
	\vspace{-0.2cm}
	\captionsetup{font={normalsize}}
	\begin{center}
		\includegraphics[width=7.5cm]{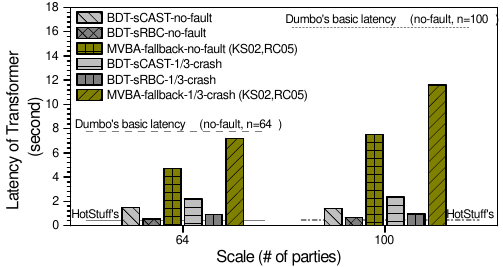}\\
		\vspace{-0.43cm}
		\caption{Latency of   $\trans$ for pace-sync  in $\mathsf{BDT}$-$\mathsf{sCAST}$ and $\mathsf{BDT}$-$\mathsf{sRBC}$ (when no fault and 1/3 crash, respectively). MVBA fallback in RC05 is also tested   as a reference point.}	
		\label{fig:vc}
	\end{center}
	\vspace{-0.2cm}
\end{figure}

As shown in Fig. 	\ref{fig:vc}, we   measure the execution time of   $\trans$   in various settings
by taking combinations of the following setups: (i)   $\mathsf{BDT}$-$\mathsf{sCAST}$ or $\mathsf{BDT}$-$\mathsf{sRBC}$; (ii)    $1/3$ crashes on or off; (iii) 64 EC2 instances or 100 EC2 instances. \rev{Moreover, in order to comprehensively compare $\trans$ with the prior art \cite{kursawe05,cachin05,700aublin},	
	we   also measure the latency of $\mathsf{MVBA}$   pace-sync (which instantiates the $\mathsf{Backup}$/$\mathsf{Abstract}$ primitive in \cite{700aublin} to combine the fastlane and $\mathsf{Dumbo}$ \footnote{\rev{Following \cite{700aublin} that used full-fledged $\mathsf{SMR}$ to instantiate Backup for fallback, 
			one can 
			combine stable-leader 2-chain $\mathsf{HotStuff}$ (the fastlane of $\mathsf{BDT}$-$\mathsf{sCAST}$) and $\mathsf{Dumbo}$ by a single block of asynchronous $\mathsf{SMR}$.
			This intuitive idea can be realized from $\mathsf{MVBA}$  \cite{cachin01,cachin05} as follows after the fastlane times out: each party   signs and multicasts the highest quorum proof  received from $\mathsf{HotStuff}$, then waits for $n-f$ such signed   proofs from distinct parties, and takes them as $\mathsf{MVBA}$ input;   $\mathsf{MVBA}$ thus would output $n-f$ valid $\mathsf{HotStuff}$ quorum proofs (signed by $n-f$   parties), and the highest quorum proof in the   $\mathsf{MVBA}$ output can represent the $\mathsf{HotStuff}$ block   to continue $\mathsf{Dumbo}$.}}) as a basic reference point, cf. Section 2 for the idea of using $\mathsf{Backup}$/$\mathsf{Abstract}$ for asynchronous fallback \cite{700aublin}.
}    
The comparison indicates that $\trans$ is much cheaper in contrast to the high cost of $\mathsf{MVBA}$ pace-sync.
For example,   $\trans$     always costs less than 1 second in $\mathsf{BDT}$-$\mathsf{sRBC}$, despite $n$ and on/off of crashes, while $\mathsf{MVBA}$ pace-sync is about 10 times slower.

\begin{figure}[h] 
	\captionsetup{font={normalsize}}
	\vspace{-0.3cm}
	\begin{center}
		\includegraphics[width=7.5cm]{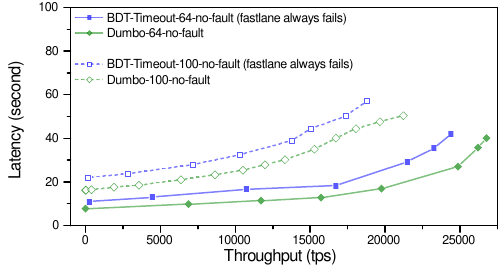}\\
		\vspace{-0.3cm}
		\caption{Latency v.s. throughput for experiments of $\mathsf{BDT}$ with   idling fastlane (i.e.,   fastlane just timeouts after 2.5 sec).}	
		\label{fig:overhead}
	\end{center}
	\vspace{-0.4cm}
\end{figure}

As illustrated in Fig. \ref{fig:overhead}, 
we measure the latency-throughput tradeoffs for $\mathsf{BDT}$-$\mathsf{Timeout}$, namely, to see how   $\mathsf{BDT}$   worse than $\dumbo$ when $\mathsf{BDT}$'s fastlane is under denial-of-service.
This is arguably the   worst-case  test vector for  $\mathsf{BDT}$, since relative to $\mathsf{Dumbo}$, it always costs extra 2.5 seconds to timeout and then executes the    $\trans$ subprotocol. Nevertheless,    the performance of $\mathsf{BDT}$  is still close to $\dumbo$. In particular, to realize the same throughput, $\mathsf{BDT}$    spends only a few additional seconds (which is  mostly caused by our conservation 2.5-second timeout parameter).

\medskip
\noindent
{\bf Latency-throughput trade-off}.  \rev{Figure \ref{fig:nofault} plots   latency-throughput trade-offs of $\mathsf{BDT}$-$\mathsf{sCAST}$, $\mathsf{BDT}$-$\mathsf{sRBC}$, $\mathsf{HotStuff}$ and $\mathsf{Dumbo}$ in the WAN setting for $n$= 64 and 100 parties. This illustrates that  $\mathsf{BDT}$ has low latency close to that of $\mathsf{HotStuff}$ under varying system load.}

\begin{figure}[h] 
	\vspace{-0.4cm}
	\captionsetup{font={normalsize}}
	\begin{center}
		\includegraphics[width=7.5cm]{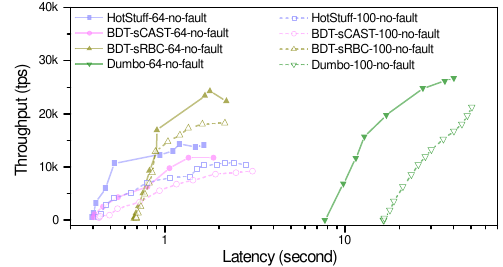}\\
		\vspace{-0.3cm}
		\caption{Throughput v.s. latency for experiments  over WAN  when $n=64$ and $100$, respectively (in case of periodically running pace-sync in $\mathsf{BDT}$ per only 50 fastlane blocks).}	
		\label{fig:nofault}
	\end{center}
	\vspace{-0.3cm}
\end{figure}

Either $\mathsf{BDT}$-$\mathsf{sCAST}$ or $\mathsf{BDT}$-$\mathsf{sRBC}$ is much faster than $\dumbo$ by several orders of magnitude in all cases, while the two $\mathsf{BDT}$ instantiations have different favors  towards distinct scenarios.
$\mathsf{BDT}$-$\mathsf{sCAST}$ has a latency-throughput trade-off similar to that of 2-chain $\mathsf{HotStuff}$, and their small variance in latency is because we intentionally trigger timeouts in $\mathsf{BDT}$-$\mathsf{sCAST}$ after each 50 fastlane blocks.
%
$\mathsf{BDT}$-$\mathsf{sRBC}$ has a latency-throughput trend quite different from $\mathsf{HotStuff}$ and $\mathsf{BDT}$-$\mathsf{sCAST}$. 
Namely, when fixing larger throughput, $\mathsf{BDT}$-$\mathsf{sRBC}$ has a latency less than $\mathsf{BDT}$-$\mathsf{sCAST}$'s;
when fixing   small throughput,  $\mathsf{BDT}$-$\mathsf{sRBC}$ could be slower.
This separates them clearly in terms of application scenarios, since $\mathsf{BDT}$-$\mathsf{sRBC}$ is   a better choice for large throughput-favoring cases and $\mathsf{BDT}$-$\mathsf{sCAST}$ 
is more suitable for latency-sensitive scenarios. 

\medskip
\noindent
{\bf Summary of evaluations in the  WAN setting}.
The above   results clearly demonstrate the   efficiency of our pace-synchronization---$\trans$.
And thanks to that, $\BDT$  in the WAN setting is:
\begin{itemize}[leftmargin=0.6cm]
	\item As fast as 2-chain $\mathsf{HotStuff}$ in the best case (i.e., synchronous network without faulty parties); \footnote{\rev{As discussed in Footnote 4, $\mathsf{BDT}$-$\mathsf{sCAST}$'s fastlane has a 5-round latency, which is same to that of 2-chain $\mathsf{HotStuff}$. The tiny difference between their evaluated latency is because we periodically trigger $\Trans$  in the experiments of $\mathsf{BDT}$-$\mathsf{sCAST}$.}}
	\item As robust as the underlying asynchronous pessimistic path  in the worst case (i.e., the fastlane always completely fails).
\end{itemize}


\subsection{Evaluation in  controlled  dynamic  network}

\noindent
{\bf Setup on the simulated fluctuating  network}. 
We also deploy our Python-written protocols for  $n$=64 parties in  a high-performance server
 having 4 28-core   Xeon Platinum 8280 CPUs and 1TB RAM. 
The   code is same to the earlier WAN experiments, except that we implement all   TCP sockets   with controllable bandwidth and delay. This allows us to  simulate a dynamic  communication network. 

In particular, we  interleave    ``good'' network (i.e., 50ms delay and 200Mbps bitrate) and   ``bad'' network  (i.e.,  300ms delay and 50Mbps bitrate) in the following experiments to reflect network fluctuation. \rev{Through  the subsection, $\BDT$ refers to $\BDT$-$\mathsf{sCAST}$, the   approach of using  Abstract  primitive \cite{700aublin}  to combine   stable-leader  2-chain $\mathsf{HotStuff}$ ($\BDT$-$\mathsf{sCAST}$'s fastlane)  and $\mathsf{Dumbo}$    is denoted by $\mathsf{HS}$+$\mathsf{Abstract}$+$\mathsf{Dumbo}$ (where   Backup/Abstract   is instantiated by $\mathsf{MVBA}$ as explained in Footnote 5).
	For experiment parameters, the fastlane's timeout   is  set as 1 second, the fastlane block and  pessimistic   block  contain  $10^4$ and  $10^6$ transactions respectively, and we would   report the number of confirmed transactions over time in   random sample executions.}

\smallskip
\noindent
{\bf Good network with  very short fluctuations}. We first examine in a network that mostly stays at good condition except interleaving some short-term bad network condition   that lasts only 2 seconds (which just triggers fastlane timeout). The sample executions in the setting are  plotted in Fig. \ref{fig:change-1}. 
The result indicates that the performance of $\BDT$ does not degrade due to the several short-term network fluctuation, 
and it remains as fast as 2-chain $\mathsf{HotStuff}$. This feature is because $\BDT$ adopts a two-level fallback mechanism, such that   it can just execute the light pace-sync and then immediately retry another fastlane. \rev{In contrast,   using $\mathsf{Backup}$/$\mathsf{Abstract}$ primitive (instantiated by $\mathsf{MVBA}$) as pace-sync     would encounter rather long latency ($\sim$25 sec) to run the heavy pace-sync and   pessimistic path after the short-term network fluctuations.} 


\begin{figure}[h] 
	\captionsetup{font={normalsize}}
	\begin{center}
		\includegraphics[width=7.2cm]{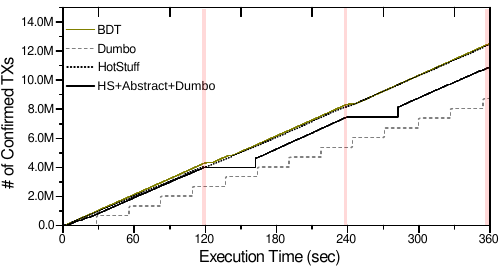}\\
		\vspace{-0.35cm}
		\caption{Sample executions of $\mathsf{BDT}$, 2-chain $\mathsf{HotStuff}$, $\dumbo$, \rev{and the  composition of $\mathsf{HotStuff}$+$\mathsf{Abstract}$+$\mathsf{Dumbo}$} for $n$=64, when facing a few 2-second bad periods. The red region represents the 2-second period of bad network.}	\label{fig:change-1}
	\end{center}
	\vspace{-0.4cm}
\end{figure}

\smallskip
\noindent
{\bf Intermittent network with long  bad time}. We then evaluate the effect of long-lasting bad network condition. We visualize such sample executions in Fig. \ref{fig:change-2}. Clearly, $\BDT$   can closely track the performance of its underlying pessimistic path during the long  periods of bad network condition. Again, this feature is a result of efficient pace-sync, as it adds minimal overhead to the fallback. \rev{In contrast,  using Backup/Abstract primitive (instantiated by $\mathsf{MVBA}$)   to compose stable-leader $\mathsf{HotStuff}$ and $\dumbo$  would incur a latency  $\sim$10 seconds larger than $\BDT$ during the bad network due to its cumbersome pace-sync.}

\begin{figure}[h] 
	\vspace{-0.4cm}
	\captionsetup{font={normalsize}}
	\begin{center}
		\includegraphics[width=7.2cm]{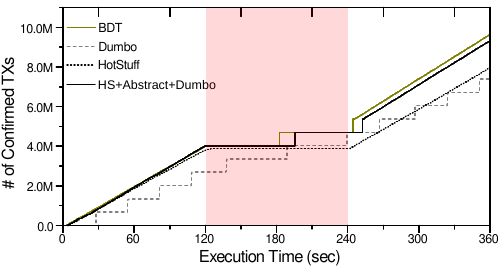}\\
		\vspace{-0.4cm}
		\caption{Sample executions of $\mathsf{BDT}$,  2-chain $\mathsf{HotStuff}$, $\dumbo$, \rev{and the composition of $\mathsf{HotStuff}$+$\mathsf{Abstract}$+$\mathsf{Dumbo}$} for $n$=64, when suffering from 120-second bad network. The red region represents the 120-second period of bad network.}	\label{fig:change-2}
	\end{center}
	\vspace{-0.4cm}
\end{figure}

\smallskip
\noindent
{\bf Summary of evaluations in   fluctuating network}. 
As expected by our efficient pace-sync subprotocol, $\BDT$ also performs well in the fluctuating network environment. Specifically,
\begin{itemize}[leftmargin=0.6cm]
	\item When   encountering short-term network fluctuations, $\BDT$ can quickly finish pace-sync and restart a new fastlane, thus progressing at a speed same to 2-chain $\mathsf{HotStuff}$.
	\item When the network   becomes slow for longer periods (and even $\mathsf{HotStuff}$   grinds to a halt), $\BDT$ still is robust to  progress    nearly as fast as the underlying   asynchronous    protocol.
\end{itemize}


\ignore{

	##############################################################################################################
	##############################################################################################################
	##############################################################################################################
	##############################################################################################################

\section{Complexity and Numerical Analyses} 
\label{sec:ca}

This section   discusses the critical complexity metrics   of  the $\sys$  framework and its key modules. 
It then  assigns   each module    a running time cost according to our real-world experimental data, 
thus enabling more precise numerical analysis to estimate the expected latency of $\mathsf{BDT}$ in various ``simulated'' unstable deployment environments.

\noindent
{\bf Complexity analysis}.
Here we analyze the complexities of $\sys$ regarding its performance metrics. 

For the optimistic path $\tobc$, we have two instantiations, namely $\mathsf{Bolt}$-$\mathsf{sCAST}$ and $\mathsf{Bolt}$-$\mathsf{sRBC}$. As shown in Table \ref{tab:tobc},
$\mathsf{Bolt}$-$\mathsf{sCAST}$ is with linear $\mathcal{O}(n)$ per-block message complexity, and the leader's per-block bandwidth usage$\mathcal{O}(nB)$ is also linear in $n$;
$\mathsf{Bolt}$-$\mathsf{sRBC}$ is with quadratic per-block message complexity as $\mathcal{O}(n^2)$, while the per-block bandwidth usage of every party is not larger than the batch size $\mathcal{O}(B)$; for both ``fastlane'' instantiations, though the generation of each $\mathsf{Bolt}$ block will cost exact $\mathcal{O}(1)$ rounds, $\mathsf{Bolt}$-$\mathsf{sCAST}$ uses slightly less concrete asynchronous rounds on average.

\begin{table}[h]
	\caption{Per-block Performance    of Different \Bolt\/ Instantiations (also Per-block Cost of  {\sf BDT}  in Optimistic Case)}
	\label{tab:tobc}
	\centering
	\begin{footnotesize}
		\begin{tabular}{c|c|c|c|c|c}
			\hline\hline\rule{0pt}{10pt}
			\multirow{2}{*}{} & \multirow{2}{*}{Msg.} & \multirow{2}{*}{Comm.} & \multirow{2}{*}{Round} & \multicolumn{2}{c}{Bandwidth Cost} \\ \cline{5-6} \rule{0pt}{10pt}
			&                       &                &        & Leader               & Others              \\ \hline\rule{0pt}{10pt}
			$\mathsf{Bolt}$-$\mathsf{sCAST}$               & $\mathcal{O}(n)$                   & $\mathcal{O}(nB)$  &   $\bigO(1)$               &  $\mathcal{O}(nB)$                 &  $\mathcal{O}(B)$                \\ \hline\rule{0pt}{10pt}
			$\mathsf{Bolt}$-$\mathsf{sRBC}$                   &$\mathcal{O}(n^2)$                   & $\mathcal{O}(nB)$      &   $\bigO(1)$          &$\mathcal{O}(B)$             & $\mathcal{O}(B)$              \\ \hline\hline
		\end{tabular}
	\end{footnotesize}
\end{table}

In addition, there is a worst-case overhead of $\bigO(\tau)$ asynchronous rounds to leave this phase without outputting any valid blocks.
%
During the $\trans$ phase, all parties would participate in $\tcvba$, in which the message complexity is  $\mathcal{O}(n^2)$,  the communication complexity is $\mathcal{O}(\lambda n^2)$, and the bandwidth cost of each parties is $\mathcal{O}(\lambda n)$. Besides, if the output value of $\tcvba$ is large than zero, then the $\gethelp$ subroutine could probably be invoked, this process will incur $\mathcal{O}(n^2)$ overall message complexity and $\mathcal{O}(nB)$ per-block communication complexity, and causes each party to spend $\mathcal{O}(B)$ bandwidth to fetch each block.  

Note that in the {\em optimistic case} when (i) the fastlane leaders are always honest and (ii) the network condition is benign such that the fastlanes never timeout, the $\Pessimistic$ phase is not executed, 
so the fastlane cost shown in Table \ref{tab:tobc} would also reflect the amortized complexities of the overall $\mathsf{BDT}$ protocol (in case that the epoch size $Esize$ is   large enough, e.g., $n$).

\begin{table}[h]
	\caption{Per-block Performance of $\mathsf{BDT}$ in Worst Case}
	\label{tab:bolt}
	\centering
	\begin{footnotesize}
		\begin{tabular}{c|c|c|c|c|c}
			\hline\hline\rule{0pt}{10pt}
			\multirow{2}{*}{} & \multirow{2}{*}{Msg.} & \multirow{2}{*}{Comm.} & \multirow{2}{*}{Round} & \multicolumn{2}{c}{Bandwidth Cost} \\ \cline{5-6} \rule{0pt}{10pt}
			&                       &                 &       & Leader               & Others              \\ \hline\rule{0pt}{10pt}
			$\mathsf{BDT}$-$\mathsf{sCAST}$           & $\mathcal{O}(n^3)$                   & $\mathcal{O}(nB)$        &    $\bigO(\tau)$$^*$      &  $\mathcal{O}(nB)$                 &  $\mathcal{O}(B)$                \\ \hline\rule{0pt}{10pt}
			$\mathsf{BDT}$-$\mathsf{sRBC}$                 &$\mathcal{O}(n^3)$                   & $\mathcal{O}(nB)$      &     $\bigO(\tau)$$^*$     & $\mathcal{O}(B)$               & $\mathcal{O}(B)$                \\ \hline\hline
		\end{tabular}
	\end{footnotesize}
	\\{\ }
	{\footnotesize
		\begin{spacing}{1.0}
			$^*$ Note that  $\bigO(\tau)$ can be close to $\bigO(1)$, since we essentially let $\tau$ ``clock ticks'' to approximate  the   period of ``heartbeats''  (i.e., generate a new block) in   the $\mathsf{Bolt}$ path. 
		\end{spacing}
	}
\end{table}

In the pessimistic phase, all parties participate in $\dumbo$ BFT protocol, which incurs   $\mathcal{O}(n^3)$ overall message complexity,  $\mathcal{O}(nB)$  overall communication complexity,  and $\mathcal{O}(B)$ bandwidth usage for each party. The latency of generating a block during this phase is of $\bigO(1)$ rounds on average. 
To summarize these, we can have the worst-case per-block performance illustrated in Table \ref{tab:bolt}.
Note that 
the $\bigO(\tau)$ worst-case running time of generating a block reflects two possible worst cases: (i) the fastlane finalizes few blocks (e.g., one) and then times out, so $\bigO(\tau)$ has to be added to the few fastlane blocks; or (ii) the fastlane finalizes nothing and then times out, and thus   $\bigO(\tau)$ must be added to the pessimistic $\dumbo$ blocks. 
Nevertheless, the timeout parameter $\tau$ in our system is not necessarily close to the network delay upper bound $\Delta$, 
and it can represent  some adversary-controlling ``clock ticks''  to approximate   the number of asynchronous rounds spent to generate each fastlane block, i.e., $\bigO(\tau) = \bigO(1)$. 

\begin{table}[h]
	\begin{center}
		\centering
		
		\begin{footnotesize}
			\newcommand\xrowht[2][0]{\addstackgap[.5\dimexpr#2\relax]{\vphantom{#1}}}
			\caption{Complexities of BFT protocols in various settings
				(where $B$ is sufficiently large s.t. all $\lambda$ terms are omitted, and $\alpha + \beta =1$)}  
			\label{tab:performance_comparison}  
			\resizebox{0.49\textwidth}{!}{%
				\begin{tabular}{c|c|c|c}  
					\hline 	\hline   \rule{0pt}{10pt}
					\multirow{2}{*}{Protocol}&\multicolumn{2}{c|}{Per-block Com. Compl.}&\multicolumn{1}{c}{Normalized average latency}\cr
					\cline{2-3}  \rule{0pt}{10pt}
					& Optim. &Worst& considering fastlane latency as 1\cr  
					
					\hline  \xrowht{8pt}
					PBFT \cite{pbft}  &$\mathcal{O}(nB)$&$\infty$&  $1/\alpha$ \cr
					\hline \rule{0pt}{8pt}
					HotStuff \cite{hotstuff} &$\mathcal{O}(nB)$&$\infty$&$1/\alpha$ \cr
					\hline \rule{0pt}{8pt}
					HBBFT \cite{honeybadger} &$\mathcal{O}(nB)$&$\mathcal{O}(nB)$&  $C$ \cr
					\hline \rule{0pt}{8pt}
					Dumbo \cite{guo2020dumbo} &$\mathcal{O}(nB)$&$\mathcal{O}(nB)$& $c$ \cr
					\hline \xrowht{9pt}
					KS02 \cite{kursawe05}&$\mathcal{O}(n^2B)$&$\mathcal{O}(n^3B)$& $(\alpha+\frac{\beta}{C+kc})^{-1}$  $^\star$\cr
					\hline \xrowht{9pt}
					RC05 \cite{cachin05} &$\mathcal{O}(nB)$&$\mathcal{O}(n^3B)$& $(\alpha+\frac{\beta}{C+c})^{-1}$\cr
					
					\hline \xrowht{9pt}
					BDT  (ours) &$\mathcal{O}(nB)$&$\mathcal{O}(nB)$& $(\alpha+\frac{\beta}{c+1})^{-1}$\cr				
					\hline  \hline
				\end{tabular}  
			}
		\end{footnotesize}
	\end{center}
	{
		\footnotesize		
		$^\star$ There is an integer parameter $k$ in \cite{kursawe05} to specify the degree of parallelism for the fastlane, thus probably incurring extra cost of fallback.
	}
	
\end{table} 

Here we also summarize the communication complexities of $\mathsf{BDT}$ and some known BFT protocols in the optimistic case and the worst case, respectively.
To quantify   the latency of those protocols in unstable network environment, 
Table \ref{tab:performance_comparison} also  lists each protocol's (normalized) average latency. 
This metric considers that the fastlane   has a probability $\alpha\in [0,1]$ to output blocks in time, 
and also has a chance of $\beta = 1-\alpha$ that  falls back and then executes the pessimistic asynchronous protocol. 
Let  $C$ be the latency of using earlier asynchronous protocols \cite{cachin01,honeybadger} directly as the pessimistic path
and $c$ be that of the state-of-the-art  Dumbo BFT \cite{guo2020dumbo} and that of using MVBA  for  synchronization during fallback.
Both $C$ and $c$ are  represented in the unit of  fastlane latency. According to the experimental data \cite{honeybadger,guo2020dumbo}, $C$ is normally at hundreds and the latter $c$ is typically dozens (\cite{honeybadger} runs $n$ instances of $\ABA$, thus rounds depend on number of parties, while \cite{guo2020dumbo} reduces it to constant).   
Our fallback  is almost as fast as the fastlane, so it is one.
We can do a simple calculation as in Table  \ref{tab:performance_comparison} to make a {\em rough estimation} for the (relative) latency of all those protocols deployed in the realistic fluctuating network. 

##############################################################################################################
##############################################################################################################
##############################################################################################################
##############################################################################################################
}

\section{Conclusions}
In this paper, we proposed the first {\em practical} and {\em generic} framework for optimistic asynchronous atomic broadcast $\mathsf{BDT}$, in which we abstract a new and simple deterministic fastlane that enables us to reduce the asynchronous pace-synchronization to the conceptually minimum binary agreement. Several interesting questions remain: theoretically, formally demonstrating efficiency gap between asynchronous consensus and deterministic consensus, e.g., a better lower bound would be very interesting; practically, our current pessimistic path requires asynchronous common subset (ACS), building on top an ABC directly may need some further care; also, the paradigm of adding an optimistic path could be further generalized to provide not only efficiency, but also better resilience, or flexibility.

\begin{acks}
We would like to thank Vincent Gramoli and the anonymous reviewers for their valuable comments.
Yuan is supported in part by   NSFC under Grant  62102404 and the Youth Innovation Promotion Association CAS.
Qiang and Zhenliang are supported in part by research gifts from Ethereum Foundation, Stellar Foundation, Protocol Labs, Algorand Foundation and The University of Sydney.
\end{acks}

\appendix

\bibliographystyle{ACM-Reference-Format}
\bibliography{references}






  

  %
  


\section{Deferred   Proofs for $\tobc$ Constructions}\label{append:bolt}

\begin{lemma}	
	The algorithm in Figure \ref{fig:multicast} satisfies the {\em total-order}, {\em notarizability}, {\em abandonability} and {\em optimistic liveness} properties   of $\tobc$  except with negligible probability.
	\label{abc1}
\end{lemma}

{\em Proof}: Here we prove the four properties one by one:

{\em For total-order}: First, we prove at same position, for any two honest parties $\node_i$ and $\node_j$ return $\block_i$ and $\block_j$, respectively, then $\block_i = \block_j$. It is clear that if the honest party $\node_i$ outputs $\block_i$, then at least $f+1$ honest parties did vote for $\block_i$ because $\TSIG.\Vrfy_{2f+1}$ passes verification. So did  $f+1$ honest parties vote for $\block_j$. 
That means at least one honest party votes for both blocks, so $\block_i = \block_j$.

{\em For notarizability}: Suppose a party $\node_i$ outputs   $\blocks[j]:=\langle \id, j, $ $\payload_j, \prf_j \rangle$, it means at least $f+1$ honest parties   vote  for $\block[j]$, according to  the pseudocode,  at least those same $f+1$ honest parties already output $\blocks[j-1]$ and  received the $\payload_j$, hence,  those honest parties can further use the valid $\prf_j$ to extract $\blocks[j]$ from the receivied protocol messages.

{\em For abandonability}: it is immediate to see from the pseudocode of the abandon interface.

{\em For optimistic liveness}: suppose that the optimistic condition is that the leader is honest, then any honest party would output $\block[1]$ in three asynchronous rounds after entering the protocol
and   would output     $\out[j+1]$ within two asynchronous rounds after outputting $\out[j]$ (for all $j\ge1$).
$\hfill\square$

\begin{lemma}	
	The algorithm in Figure \ref{fig:prbc} satisfies the {\em total-order}, {\em notarizability}, {\em abandonability} and {\em optimistic liveness} properties  of $\tobc$  except with negligible probability.
	\label{abc2}
\end{lemma}

{\em Proof:} It is clear that the {\em total-order}, {\em notarizability} and {\em abandonability} follow immediately from the properties of $\rbc$ and the pseudocode of $\mathsf{Bolt}$-$\mathsf{sRBC}$, since the {agreement} of $\rbc$ guarantees that the output $\payload$ by any parties is the same and a valid proof along with the sequentially executing nature of all $\rbc$ instances would ensure  total-order and notarizability. {\em For optimistic liveness}, the optimistic condition remains to be that the leader is honest, and  $\kappa$ is  4 due to the $\rbc$ construction in \cite{honeybadger} and an extra vote step.
$\hfill\square$


\section{Deferred   Proofs for    $\tcvba$}\label{append:tcvba}

\begin{lemma}	
	The algorithm in Figure \ref{fig:tcvba} satisfies the {\em termination}, {\em validity} and {\em agreement} properties  of $\tcvba$  except with negligible probability.
	\label{tcvba1}
\end{lemma}

{\em Proof:} First, from \cite{MMR15} we know: for any $v\in S_r$, then $v$ was the input of at least one honest party, then in next round $r+1$, every honest party's input  $\est_{r+1}$ will always from at least one honest party's input of round $r$ by the code. Again, according to the pseudocode, the output is the element of $S$, hence, validity is satisfied.

Second, suppose one honest party $\node_{i}$ is the first party to output and $\node_{i}$ outputs $v$ in round $r$. Then for any other honest parties,  either output the same $v$, or have $S_r=\{v,v+1\}$, hence, all honest parties will have same input $\est_{r+1}=v$ ($v\%2=c_r\%2$) in next round $r+1$, then, $S_{r+1}=\{v\}$, and $\est_{r+2}=v$ for round $r+2$. Once in some round $r'$, the  $v\%2 = c_{r'}\%2$, all honest parties output the same $v$. So the agreement is met.

Third, the termination analysis is similar to that in \cite{ababug,MMR15}.
$\hfill\square$

\begin{lemma}	
	The algorithm in Figure \ref{fig:blackaba} satisfies the {\em termination}, {\em validity} and {\em agreement} properties  of $\tcvba$  except with negligible probability.
	\label{tcvba2}
\end{lemma}

{\em Proof}: 
{\em For termination}: Since all honest parties input a value in $\{v, v+1\}$ where $v \in \N$, without loss of generality, suppose value $v$ was input by at least $f+1$ honest parties, then after amplifying, every honest parties can receive $2f+1$ $\val(\id,v)$ messages from distinct parties carrying the same $v$. Hence, all honest parties can activate $\ABA$ with one input $v'\%2$. Then, the $\ABA$ guarantees that all honest parties return $b$. Since the validity of $\ABA$ guarantees the output of $\ABA$ is at least one honest party's input, according to the code, if one honest party input $b$ into $\ABA$, then the party has received at least $2f+1$ $\val(\id,v')$ messages from distinct parties carrying the same $v'$ and $v'\%2 = b$, hence, all honest parties can receive $f+1$ $\val(\id,v')$ messages from distinct honest parties containing the same $v'$ such that $v'\%2=b$.

{\em For validity}: Since the validity of $\ABA$ guarantees the output $b$ of $\ABA$ is at least one honest party's input, then according to the code, the party has received at least $2f+1$ $\val(\id,v')$ messages from distinct parties carrying the same $v'$, where $v'\%2 = b$, hence, at least one honest party with taking $v'$ as input and multicast $\val(\id,v')$.

{\em For agreement}: Since the agreement of $\ABA$ guarantees all honest parties have the same output $b$. Hence, all honest parties will output value $v'$, where $v'\%2=b$. Without loss of generality, suppose honest party $\node_{i}$ output $v$ and honest party $\node_j$ output $v+2k$ ($k\neq 0$), then following the validity proof, both $v$ and $v+2k$ are honest party's input, then it is a contradiction with the $\tcvba$ input assumption.
$\hfill\square$


\section{Deferred   Proofs for    $\sys$}\label{append:bdt}


\noindent
{\bf Safety proof}. We first prove the  total-order and agreement, assuming the underlying $\tobc$, $\tcvba$ and $\acs$ are secure.

\begin{claim}
	\label{aa}
	If an honest party activates $\tcvba[e]$,   then at least $n-2f$ honest parties have already invoked $\abandon(e)$, and from now on: suppose these same parties invoke $\abandon(e)$ before they output $\block$:=$ \langle e, R, \cdot,\cdot \rangle $, then	any party (including the faulty ones) cannot receive (or forge) a valid $\block$:=$ \langle e, R+1, \cdot,\cdot  \rangle $, and all honest parties  would activate $ \tcvba[e]$. 
	
\end{claim}

{\em Proof:} When  an honest party $\node_i$ activates $ \tcvba[e]$, it must have received $n-f$ valid $\view$ messages from distinct parties, so there would be at least $n-2f$ honest parties multicast $\view$ messages. By the pseudocode, it also means that  at least $n-2f$ honest parties have invoked $\abandon(e)$. Note that $n-2f\ge f+1$ and these same parties invoke $\abandon(e)$ before they output $\block$:=$ \langle e, R, \cdot,\cdot  \rangle $, so no party would deliver any valid $\block$:=$ \langle e, R+1, \cdot,\cdot  \rangle $ in this epoch's $\tobc$ phase due to the {\em abandonability} property of $\tobc$. It is also implies that any parties cannot from the $\tobc$ receive valid  $\block$:=$ \langle e, R+1, \cdot,\cdot  \rangle $ , then all honest parties will eventually be interrupted by the ``timeout'' mechanism after $\tau$ asynchronous rounds and then multicast $\view$ messages. This   ensures that all honest parties finally receive $n-f$ valid $\view$ messages from distinct parties, causing all honest parties to activate $ \tcvba[e]$. 
$\hfill\square$

\begin{claim}\label{dd}
	Suppose that  some party receives a valid $\block$:=$\langle e, R, \cdot , \cdot\rangle$ from $\tobc[e]$ when an honest party invokes $ \tcvba[e]$ s.t. this $\block$ is the one with largest slot number among all parties' valid $\notarized$ blocks (which means the union of the honest parties' actual $\notarized$ blocks and the malicious parties' arbitrary  valid  $\tobc[e]$ block), 	then all honest parties'     $\maxview_e$ must be either $R$ or $R-1$.
\end{claim}	

{\em Proof:}
Following   Claim \ref{aa}, once an honest party invokes $ \tcvba[e]$, 
the $\tobc[e]$ $\block$ with the largest slot number $R_{max}$ would not change anymore.
Let us call this already fixed $\tobc[e]$ $\block$ with highest slot number as $\block_{max}$.
Since there is someone that receives $\block_{max}$:=$\langle e, R, \cdot , \cdot\rangle$,
at least $f+1$ honest parties (e.g., denoted by $Q$) have already received the block $\langle e, R-1, \cdot , \cdot\rangle$,
which is because of the {\em notarizability} property of $\tobc$. 
So these honest parties would broadcast a valid $\view(e, R-1,\cdot)$ message or a valid $\view(e, R,\cdot)$ message. 
According to the pseudocode in Figure \ref{fig:mule}, $\maxview_e$ is the maximum number in the set of $\mathsf{Paces}_e$, where       $\mathsf{Paces}_e$ contains the slot numbers encapsulated by $n-f$ valid $\view$ messages. Therefore, $\mathsf{Paces}_e$ must contain one $\view$ message's slot number from at least $n-2f \ge f+1$ honest parties (e.g., denoted by $\bar{Q}$). 
All honest parties' local $\mathsf{Paces}_e$ set must   contain $R-1$ and/or $R$, because $\bar{Q}$ and $Q$ contain at least one common honest party. Moreover, there is no valid $\view$ message containing any slot larger than $R$ since the proof for that is unforgeable, which means $R$ is the largest possible value in all honest parties' $\mathsf{Paces}_e$. So any honest party's $\maxview_e$ must be     $R$ or $R-1$.
$\hfill\square$

\begin{claim}\label{norevoke}
	No honest party would get a $\mathsf{syncPace}_e$ smaller than the slot number of it latest finalized block $\out[-1]$ (i.e., no block  finalized in some honest party's $\out$ can   be revoked).
\end{claim}
{\em Proof:} 
Suppose an honest party invokes $ \tcvba[e]$ and a valid $\tobc[e]$  block $\langle e, R, \cdot , \cdot\rangle$ is the one with largest slot number among  the union of the honest parties' actual $\notarized$ blocks and the malicious parties' arbitrary  valid  $\tobc[e]$ block. Because of Claim \ref{dd}, all honest parties will activate $\tcvba[e]$ with taking either $R$ or $R-1$ as input. According to the {\em strong validity} of  $\tcvba$, the output $\mathsf{syncPace}_e$ of $\tcvba[e]$  must be either $R$ or $R-1$. 
Then we consider the next two cases:

\begin{enumerate}
	\item Only malicious parties have this  $\langle e, R, \cdot , \cdot\rangle$ block;
	
	\item Some honest party $\node_{i}$ also has the $\langle e, R, \cdot , \cdot\rangle$ block.
\end{enumerate}	

{\em For Case 1)} Due to the {\em notarizability} property of $\tobc$ and this case's baseline, there exist $f+1$ honest parties (denoted by a set   $Q$) have  the block $\langle e, R-1, \cdot , \cdot\rangle$ as their local $\notarized$. 
Note that remaining honest parties  (denoted by a set   $\bar{Q}$) would have local $\notarized$ block not higher than $R-1$.  According to the algorithm in Figure \ref{fig:mule}, we can state that: 
(i) if the output is $R$, then all honest parties will sync their $\out$ up to the block  $\langle e, R, \cdot , \cdot\rangle$ (which include all honest parties' local $\notarized$); 
(ii) similarly, if the output is $R-1$, all honest parties will sync up till $\langle e, R-1, \cdot , \cdot\rangle$ (which also include all honest parties' local $\notarized$). So in this case, all honest parties (i.e., $\bar{Q} \cup Q$) will not discard their $\notarized$ $\block$, let alone discard some $\tobc$ that are already finalized to output into $\out$.

{\em For Case 2)} Let $Q$ denote the set of honest parties that have  the block $\langle e, R, \cdot , \cdot\rangle$ as their local $\notarized$. 
Note the remaining honest parties $\bar{Q}$ would have the $\notarized$ block not higher than $R$. In this case,  following the algorithm in Figure \ref{fig:mule}, we can see that: 
(i) if the output is $R$, then all honest parties will sync their $\out$ up to the block  $\langle e, R, \cdot , \cdot\rangle$ (which include all honest parties' local $\notarized$); 
(ii) similarly, if the output is $R-1$, all honest parties will sync up to $\langle e, R-1, \cdot , \cdot\rangle$ (which include $\bar{Q}$ parties' local $\notarized$ and ${Q}$ parties' finalized output $\out$). So in this case, all honest parties (i.e., $\bar{Q} \cup Q$) will not discard any block in their finalized  $\out$.
$\hfill\square$

\begin{claim}
	If $\tcvba[e]$ returns $\mathsf{syncPace}_e$, then at least $f+1$ honest parties  can append $\block$s with slot numbers from $1$ to $\mathsf{syncPace}_e$ that all received from $\tobc[e]$  into the $\out$ without invoking $\gethelp$  function.
	\label{provable}
\end{claim}	

{\em Proof:}
Suppose $\tcvba[e]$ returns $\mathsf{syncPace}_e$, then from the {\em strong validity}  of $\tcvba$, at least one honest party inputs the number $\mathsf{syncPace}_e$. The same honest party must receive a valid message 	$\view(e, \mathsf{syncPace}_e, \prf)$, 
which means there must exists a valid   $\tobc$ block  $\langle e, \mathsf{syncPace}_e, \cdot , \prf \rangle$. By the code, the honest party will multicst $\view(e, \mathsf{syncPace}_e, \prf)$ if $\tcvba[e]$ returns $\mathsf{syncPace}_e$, then all honest parties can get the $\prf$. Following the {\em notarizability} and {\em total-order} properties of $\tobc$, at least $f+1$ honest parties  can append  $\block$s  from $\langle e, 1, \cdot , \cdot\rangle$ to $\langle e, \mathsf{syncPace}_e, \cdot , \cdot\rangle$ into the $\out$ without invoking $\gethelp$  function.
$\hfill\square$

\begin{claim}
	\label{callhelp}
	If an honest party invokes $\gethelp$  function to retrieve a block $\out[i]$, it eventually can get it;
	if another honest party   retrieves a block $\out[i]'$   at the same log position $i$ from the $\gethelp$ function, then $\out[i]= \out[i]'$.
\end{claim}	

{\em Proof:} Due to Claim \ref{provable} and {\em total-order} properties of $\tobc$, any block $\out[i]$ that an honest party is retrieving through $\gethelp$ function shall have been in the output $\log$ of at least $f+1$ honest parties, so it eventually can get $f+1$ correct $\hlp$ messages with the same Merkle tree root $h$ from distinct parties, then it can interpolate the $f+1$ leaves to reconstruct $\out[i]$ which is same to other honest parties' local $\out[i]$. 
We can argue the agreement by contradiction, in case the interpolation of honest party  fails or it recovers a block $\out'[i]$ different from the the honest party's local $\out[i]$, that means the Merkle tree with root $h$ commits some leaves that are not coded fragments of  $\out[i]$; nevertheless, there is at least one honest party encode $\out[i]$ and commits the block's erasure code to have a Merkle tree root $h$; so the adversary indeed breaks the collision resistance of Merkle tree, implying the break of the collision-resistance of hash function, which is computationally infeasible. So all honest parties that attempt to retrieve a missing block $\out[i]$ must   fetch the block consistent to other honest parties'.
$\hfill\square$

\begin{lemma}
	\label{agreement}
	If all honest parties enter the  $\tobc$ phase with the same $\out$, then they will always finish the  $\trans$ phase with still having the same   $\out'$. 
\end{lemma}

{\em Proof:} 
If all honest parties enter the epoch with the same $\out$, it is easy to see that they all will eventually   interrupt  to abandon the $\tobc$ phase.	
Due to Claim \ref{aa},  all honest parties  would activate $\tcvba[e]$. Following the  {\em agreement and termination} of $\tcvba[e]$, all parties would finish $\tcvba[e]$ to get a common $\mathsf{syncPace}_e$,  and then by the pseudocode, all honest parties will sync up to the same $\out$, and the last block of $\out$ with slot number $\mathsf{syncPace}_e$ (due to {\em total-order} properties of $\tobc$, Claim \ref{provable} and Claim \ref{callhelp}), hence, all parties will finish the  $\trans$ phase with   the same output $\out$.
$\hfill\square$

\begin{lemma}
	\label{order}
	For any two honest parties before finishing the  Transformer phase, then 
	there exists one party, such that its $\out$ is a prefix of (or equal to) the other's.
\end{lemma}

{\em Proof:} The blocks outputted before the completion of the  Transformer phase were originally generated from the $\tobc$ phase, then the Lemma holds immediately by following the {\em total-order} property of $\tobc$ and Claim \ref{norevoke}.
$\hfill\square$

\begin{lemma}
	\label{dumbo}
	If any honest party enters the  $\Pessimistic$ phase, then all honest parties will enters the phase and
	always leave the  phase with having the same   $\out$.
\end{lemma}

{\em Proof:} If any honest party enter the $\Pessimistic$ phase, all honest parties would enter this phase, which is due to Claim \ref{aa}, the {\em agreement and termination} property of $\tcvba$ and $\mathsf{syncPace}_e=0$.
Let us assume that all honest parties enter the $\Pessimistic$ phase with the same $\out$, it would be trivial too see the statement for the {\em agreement and termination} properties of $\acs$ and the {\em correct and robustness} properties of threshold public key encryption.
Then considering Lemma \ref{agreement} and the simple fact that all honest parties activate with the common empty $\out$,
we can inductively reason that all honest parties must enter any $\Pessimistic$ phase with the same $\out$. So the Lemma holds.
$\hfill\square$

\begin{lemma}
	\label{orderr}
	For any two honest parties   in the same epoch,  
	there exists one party, such that its $\out$ is a prefix of (or equal to) the other's. 
\end{lemma}

{\em Proof:} 
If two honest parties do not enter the $\Pessimistic$ phase during the epoch, 
both of them only participate in $\tobc$ or $\trans$, so this Lemma holds immediately by following Lemma \ref{order}.
For two honest parties that one enters the $\Pessimistic$ phase and one does not, this Lemma holds because the latter one's $\out$ is either a prefix of the former one's or equal to the former one's due to Lemma  \ref{agreement} and \ref{order}.
For the remaining case that both   honest parties   enter the $\Pessimistic$ phase, they must   initially have exactly same $\out$ (due to Lemma \ref{agreement}).
Moreover, in the phase, all honest parties would execute the $\acs$ instances in a sequential manner (e.g., there is only one $\acs$ instance in our exemplary pseudocode), 
so every honest party would output in one ACS instance only if it has already outputted in all earlier ACS instances.
Besides,  any two honest party would output the same transaction batch in every $\acs$ instance for the agreement property of $\acs$.
So this Lemma also holds for any two honest parties that are staying in the same epoch.
$\hfill\square$

\begin{theorem}
	\label{key1}
	The $\sys$ protocol satisfies the {\em agreement} and {\em total order} properties.
\end{theorem}	

{\em Proof:}
The {total order} be induced by Lemma \ref{orderr} along with the fact the protocol is executed epoch by epoch. 
The agreement  follows immediately from   Lemma \ref{agreement}  and \ref{dumbo} along with the fact that all honest parties initialize with the same empty $\out$ to enter the first epoch's $\tobc$ phase.
$\hfill\square$

\medskip
\noindent
{\bf Liveness proof}. Then we prove the liveness property of $\BDT$.

\begin{lemma}
	\label{boltlive} If all honest parties enter the $\tobc$ phase, once the liveness failed, then they will leave the phase in at most polynomial number of asynchronous rounds and also all enter the $\trans$ phase.
\end{lemma}

{\em Proof:} The liveness failed in the $\tobc$ phase, it could be either (1). no progress within $\tau$ time or (2). some oldest transactions is not output within $T$ time.
For (1), at worst case, all honest parties' timeout  will  interrupt, causing them to abandon the $\tobc$ phase in at most $\bigO(\tau)$ asynchronous rounds.
For (2), it will take at most  $\bigO(T)$ asynchronous rounds to leave the $\tobc$ phase if there is a suspiciously censored $\tx$ in all honest parties' buffers due to some timeout parameter $T$. Hence, once the liveness failed, all honest parties will leave the phase in at most $\bigO(\tau+T)$  asynchronous rounds. 
After that, the broadcast of $\view$ message will take one more asynchronous round. After that, all honest parties would receive enough $\view$ messages to enter the $\trans$ phase, which costs at most $\bigO(\tau+T+1)$ asynchronous rounds. 
$\hfill\square$

\begin{lemma}
	\label{translive}  If all honest parties enter the $\trans$ phase, they all leave the phase in expected constant asynchronous rounds and then either enter the $\Pessimistic$ phase or enter the next epoch's $\tobc$ phase.
\end{lemma}

{\em Proof:} 
If all honest parties enter the $\trans$ phase, it
is trivial to see the Lemma since the underlying $\tcvba$ terminates in on-average constant asynchronous rounds. If the output of $\tcvba$ equal 0, then enter the $\Pessimistic$ phase, otherwise,   enter the next epoch's $\tobc$ phase.
$\hfill\square$

\begin{lemma}
	\label{dumbolive} If all honest parties enter the $\Pessimistic$ phase, all honest parties will leave this $\Pessimistic$ phase in on-average constant asynchronous rounds with outputting some blocks containing on-average $\bigO(B)$-sized transactions. 
\end{lemma}

{\em Proof: } 
 Similar to \cite{honeybadger}'s analysis, $\Pessimistic$ phase at least outputs $\bigO(B)$-sized transactions (without worrying that the adversary can learn any bit about the transactions to be outputted before they are actually finalized as output) for each execution. Here we remark that the original analysis in \cite{honeybadger} only requires IND-CPA security of threshold  public key encryption might be not enough, since we need to simulate that the adversary can query decryption oracle by inserting her ciphertext into the ACS output. Moreover, the underlying $\dumbo$ ACS construction \cite{guo2020dumbo}   ensures all parties to leave the phase in on-average constant asynchronous rounds.
$\hfill\square$

\begin{theorem}
	\label{key2}	
	The $\sys$ protocol satisfies the {\em liveness} property.
\end{theorem}

{\em Proof:}
Due to Lemma \ref{boltlive} and \ref{translive}, the adversary would not be able to stuck the honest parties during the $\tobc$ and $\trans$ phases.
Even if in the worst cases, the two phases do not deliver any useful output and the adversary intends to prevent the parties from running the
$\Pessimistic$ phase (thus not eliminating any transactions from the honest parties' input buffer),
we still have a  timeout mechanism against censorship, which can ensure to execute the $\Pessimistic$ phase for every $\bigO(T)$ asynchronous rounds if there is a suspiciously censored $\tx$ in all honest parties' buffers due to some timeout parameter $T$. 
Recall Lemma \ref{dumbolive}, for each $\tx$ in all honest parties' buffers, it would take $\bigO(XT/B)$ asynchronous rounds at worst (i.e., we always rely on the timeout $T$ to invoke the $\Pessimistic$ phase) to make   $\tx$ be one of the top $B$ transactions in all parties' buffers, where $X$ is the bound of buffer size (e.g., an unfixed polynomial in $\lambda$).
After that, any luckily finalized optimistic phase block would output $\tx$ (in few more $\delta$), or still relying on the timeout to invoke the $\Pessimistic$ phase, causing the worst case latency $\bigO( (X/B + \lambda)\delta T)$, which is a function in the actual network delay $\delta$ factored by some  (unfixed) polynomial of security parameters. 
$\hfill\square$ 


\section{Optimistic conditions of fastlane}\label{app:optimistic}

Here we discuss that the fastlane can successfully execute without invoking pace-synchronization, if the following two optimistic conditions hold:
(i) the network stays in synchrony, such that the guessed timeout parameter is larger than the ``heartbeat'' period of the underlying fastlane;
(ii)  the optimistic liveness condition of fastlane is satisfied, e.g. the leader is honest.

First, it is clear to see: if all honest parties have already enter   the  same epoch's fastlane at the same time,  then the fastlane must successfully progress
in the presence of   above  optimistic conditions.
Actually, the above argument   still holds,  even if the honest parties enter the fastlane with minor difference in time, because one can slightly tune  up the guessed timing parameter.

Then, let us briefly argue that when the network is synchronous, 
$\BDT$ can ensure   all honest parties to  enter the same fastlane within a bounded period.
This actually reduces to the next question: when some honest party first outputs and halts in the asynchronous pessimistic path, would all honest parties output soon (if the network is synchronous)?
Fortunately, the answer is yes if we check the detailed construction of asynchronous protocols (such as $\dumbo$). This indicates that the asynchronous pessimistic path itself can work as a clock  synchronizer to ensure that all honest parties   restart the fastlane nearly at the same time (when network synchrony holds).


\section{More Experiments in WAN setting}
\label{app:data}

For sake of completeness, we also measure (i)    latency and throughput on varying batch sizes and (ii) the      latency-throughput trade-off (with $n/3$ faults) in the WAN experiment setting, and plot the results as follows. 

\smallskip
\noindent
{\bf Varying batch sizes}.
For understanding to what an extent the batch size   matters,
we   report how   throughput and latency depend on varying   batch sizes   in Figure \ref{fig:batch_latency} and 	\ref{fig:batch_tps}, respectively.

\begin{figure}[h] 
	\vspace{-0.3cm}
	\captionsetup{font={normalsize}}
	\begin{center}
		\includegraphics[width=7.5cm]{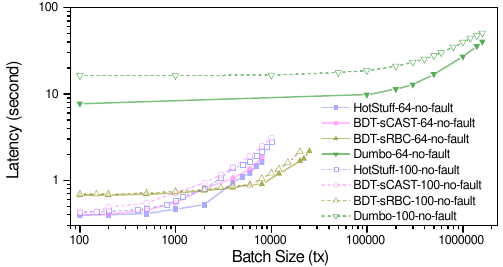}\\
		\vspace{-0.3cm}
		\caption{Latency v.s. batch size for experiments  over wide-area network when $n=64$ and $n=100$, respectively.}	
		\label{fig:batch_latency}
	\end{center}
	\vspace{-0.3cm}
\end{figure}

Figure \ref{fig:batch_latency} illustrates how latency increases with larger batch size in  $\mathsf{BDT}$-$\mathsf{sCAST}$, $\mathsf{BDT}$-$\mathsf{sRBC}$, $\mathsf{HotStuff}$ and $\dumbo$ when $n=64$ and $n=100$, respectively. It clearly states that: 
Dumbo always takes a latency much larger than $\mathsf{BDT}$-$\mathsf{sCAST}$, $\mathsf{BDT}$-$\mathsf{sRBC}$ and $\mathsf{HotStuff}$;
for $\mathsf{BDT}$-$\mathsf{sRBC}$, its latency increases much slower than $\mathsf{BDT}$-$\mathsf{sCAST}$ and $\mathsf{HotStuff}$, in particular when $B=10000$, the latency of $\mathsf{BDT}$-$\mathsf{sRBC}$ is around one second only, while these of $\mathsf{BDT}$-$\mathsf{sCAST}$ and $\mathsf{HotStuff}$ have been more than 2 seconds. The slow increasing of $\mathsf{BDT}$-$\mathsf{sRBC}$'s latency is mainly because its better balanced network load pattern.

\begin{figure}[h] 
	\vspace{-0.3cm}
	\captionsetup{font={normalsize}}
	\begin{center}
		\includegraphics[width=7.5cm]{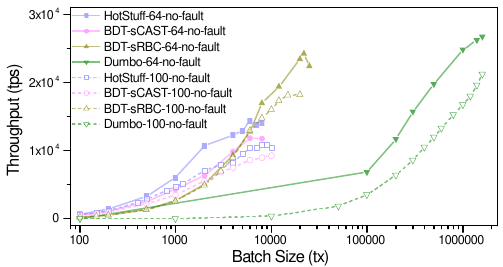}\\
		\vspace{-0.3cm}
		\caption{Throughput v.s. batch size for experiments over wide-area network when $n=64$ and $n=100$, respectively.}	
		\label{fig:batch_tps}
	\end{center}
	\vspace{-0.3cm}
\end{figure}

Figure \ref{fig:batch_tps} illustrates how throughput increases with larger batch size in  $\mathsf{BDT}$-$\mathsf{sCAST}$, $\mathsf{BDT}$-$\mathsf{sRBC}$, $\mathsf{HotStuff}$ and $\dumbo$ when $n=64$ and $n=100$, respectively.
$\dumbo$ really needs very large batch size to have   acceptable throughput; $\mathsf{BDT}$-$\mathsf{sCAST}$ and $\mathsf{HotStuff}$   have a similar trend that the throughput would stop to increase soon after the   batch sizes become larger (e.g., $10000$ transactions per block); in contrast,
the throughput of $\mathsf{BDT}$-$\mathsf{sRBC}$ is increasing faster than those of $\mathsf{BDT}$-$\mathsf{sCAST}$ and $\mathsf{HotStuff}$, 
because larger batch sizes in $\mathsf{BDT}$-$\mathsf{sRBC}$  would not place much worse bandwidth load on the leader,  and thus can raise more significant increment in the throughput. 

\smallskip
\noindent
{\bf Latency-throughput trade-off (1/3 crashes)}. 
We also report the latency-throughput trade-off  in the presence of $1/3$ crashes. The crashes   not only lag the execution of all protocols, but also mimic that a portion of $\mathsf{Bolt}$ instances are under denial-of-services. 
{We might  fix the  batch size of the pessimistic path  in $\mathsf{BDT}$-$\mathsf{sCAST}$ and $\mathsf{BDT}$-$\mathsf{sRBC}$ as $10^6$ transactions in these tests,
	because this batch size parameter brings reasonable throughput-latency trade-off in $\dumbo$.}
Shown in Figure \ref{fig:fault}, both $\mathsf{BDT}$-$\mathsf{sCAST}$ and $\mathsf{BDT}$-$\mathsf{sRBC}$ have some design spaces that show a  latency   better than $\dumbo$'s  and presents a throughput always better than $\mathsf{HotStuff}$'s, despite that on average $1/3$ instances of $\mathsf{Bolt}$ are unluckily stuck to wait for 2.5 sec to timeout without returning any optimistic output.
That means  our practical $\mathsf{BDT}$ framework does create new design space to harvest the best of both paths, resulting in that we can achieve reasonable throughput and latency simultaneously in fluctuating deployment environments.
\footnote{We would like to note that here we did a very pessimistic evaluation for BDT while optimistic evaluation for HotStuff in the sense that we manually trigger $\Trans$  by manually muting a leader for 2.5s once in 50 blocks, while for HotStuff we did a stable leader version (with honest leader). In reality, the performance curves for BDT might be a bit more to the left, while HotStuff will surely be more to the right/bottom.}

\begin{figure}[h] 
	\vspace{-0.3cm}
	\captionsetup{font={normalsize}}
	\begin{center}
		\includegraphics[width=7.5cm]{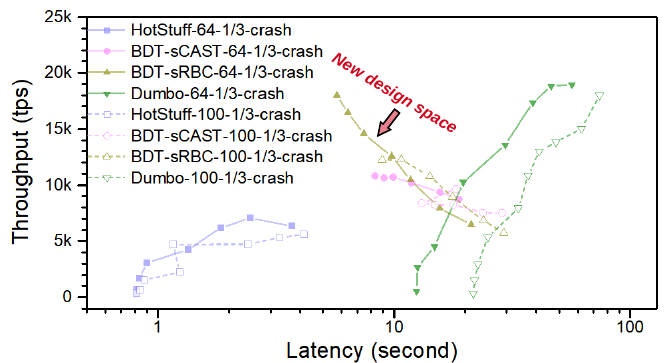}\\
		\vspace{-0.3cm}
		\caption{Throughput v.s. latency for experiments  over wide-area network  when $n=64$ and $n=100$, respectively (in case of 1/3 crash fault). We fix the fallback batch size of $\mathsf{BDT}$ instances to $10^6$ transactions in all tests.}	
		\label{fig:fault}
	\end{center}
	\vspace{-0.3cm}
\end{figure}


\section{Complexity and Numerical Analyses}\label{sec:complex}

This section discusses the critical complexity metrics of the
$\mathsf{BDT}$ framework and those of its major  modules. We
then assign  each   module a running time cost according to
our real-world experimental data, thus enabling more precise
numerical analysis to estimate the expected latency of $\mathsf{BDT}$
in various ``simulated'' unstable deployment environments.

\smallskip
\noindent
{\bf Complexity analysis}.
Here we analyze $\BDT$ regarding its complexities. 
Recall that we assume the batch size $B$ sufficiently large, e.g., $\Omega (\lambda n^2\log n)$, throughout the paper.

\smallskip
\noindent\underline{\smash{\em Complexities of the fastlane (also of the optismtic cases).}} 
For the optimistic fastlane, we have two instantiations, namely $\mathsf{Bolt}$-$\mathsf{sCAST}$ and $\mathsf{Bolt}$-$\mathsf{sRBC}$. As shown in Table \ref{tab:tobc},
$\mathsf{Bolt}$-$\mathsf{sCAST}$ is with linear $\mathcal{O}(n)$ per-block message complexity, and the leader's per-block bandwidth usage $\mathcal{O}(nB)$ is also linear in $n$;
$\mathsf{Bolt}$-$\mathsf{sRBC}$ is with quadratic per-block message complexity as $\mathcal{O}(n^2)$, while the per-block bandwidth usage of every party is not larger than the batch size $\mathcal{O}(B)$. We can also consider their latency in term of ``rounds'' to generate a block, i.e., the time elapsed between when a block's transaction is first multicasted and when the honest parties output this block with valid proof. The latency of generating two successive blocks can also be considered to reflect the confirmation latency of $\BDT$'s fastlane. \rev{Though both ``fastlane'' instantiations  will cost   $\mathcal{O}(1)$ rounds to generate  fastlane blocks, $\mathsf{Bolt}$-$\mathsf{sCAST}$ has slightly less concrete rounds: $\mathsf{Bolt}$-$\mathsf{sCAST}$ can use at most 3 rounds to generate one (pending) block and can use 5 rounds to output two successive blocks (thus the former block can be finalized in $\BDT$ framework); $\mathsf{Bolt}$-$\mathsf{sRBC}$ would cost 4 rounds to generate one (pending) block and use 8 rounds to output two   successive blocks.}

Note that in the {\em optimistic case} when (i) the fastlane leaders are always honest and (ii) the network condition is benign such that the fastlanes never timeout, the $\Pessimistic$ phase is not executed, 
so the fastlane cost shown in Table \ref{tab:tobc} would also reflect the amortized complexities of the overall $\mathsf{BDT}$ protocol (in case that the epoch size $Esize$ is   large enough, e.g., $n$).

\begin{table}[h]
		\vspace{-0.2cm}
	\captionsetup{font={normalsize}}
	\caption{Per-block performance    of different $\Bolt$ instantiations (which is also per-block cost of  {$\BDT$}  in the good cases)}
	\vspace{-0.3cm}
	\label{tab:tobc}
	\centering
	\begin{footnotesize}
		\begin{tabular}{c|c|c|c|c|c|c}
			 \hline\rule{0pt}{10pt}
			\multirow{2}{*}{} & \multirow{2}{*}{Msg.} & \multirow{2}{*}{Comm.} & \multirowcell{2}{Per-block\\ latency}& \multirowcell{2}{Two blocks\\ latency} & \multicolumn{2}{c}{Bandwidth Cost} \\ \cline{6-7} \rule{0pt}{10pt}
			&                       &                &    &    & Leader               & Others              \\ \hline\rule{0pt}{10pt}
			$\mathsf{Bolt}$-$\mathsf{sCAST}$               & $\mathcal{O}(n)$                   & $\mathcal{O}(nB)$  &   \rev{$3$ rounds} & \rev{$5$ rounds}            &  $\mathcal{O}(nB)$                 &  $\mathcal{O}(B)$                \\ \hline\rule{0pt}{10pt}
			$\mathsf{Bolt}$-$\mathsf{sRBC}$                   &$\mathcal{O}(n^2)$                   & $\mathcal{O}(nB)$      &   \rev{$4$ rounds}   & \rev{$8$ rounds}      &$\mathcal{O}(B)$             & $\mathcal{O}(B)$              \\ \hline 
		\end{tabular}
	\end{footnotesize}
		\vspace{-0.2cm}
\end{table}

\smallskip
\noindent\underline{\smash{\em Complexities of  the  worst cases (disregarding the adversary)}}. 
In the optimistic fastlane, there is a worst-case overhead of using  $\bigO(\tau)$ asynchronous rounds to leave the tentatively optimistic execution without outputting any valid blocks.
After the stop of fastlane, all parties enter the $\trans$ phase, and would participate in $\tcvba$, in which the expected message complexity is  $\mathcal{O}(n^2)$,  the expected communication complexity is $\mathcal{O}(\lambda n^2)$, and the expected bandwidth cost of each parties is $\mathcal{O}(\lambda n)$. Besides, if the output value of $\tcvba$ is large than zero, then the $\gethelp$ subroutine could probably be invoked, this process will incur $\mathcal{O}(n^2)$ overall message complexity and $\mathcal{O}(nB)$ per-block communication complexity and causes each party to spend $\mathcal{O}(B)$ bandwidth to fetch each block  on average.  
In the worst case, $\Dumbo$ is executed after $\trans$,   which on average costs  overall $\mathcal{O}(n^3)$   messages,\footnote{Note that if instantiating the pessimistic path by more recent asynchronous BFT consensus protocols (e.g., Speeding Dumbo) instead of Dumbo-BFT, the $\mathcal{O}(n^3)$  per-block messages can be reduced to $\mathcal{O}(n^2)$.} overall $\mathcal{O}(nB)$    communicated bits,  and $\mathcal{O}(B)$ bandwidth per party for each block if batch size $B$ is sufficiently large. The latency of generating a block in the pessimistic path is of $\bigO(1)$ rounds on average. 

\begin{table}[h]
	\vspace{-0.2cm}
	\captionsetup{font={normalsize}}
	\caption{Per-block performance of $\mathsf{BDT}$ in the worst cases}
	\vspace{-0.3cm}
	\label{tab:bolt}
	\centering
	\begin{footnotesize}
		\begin{tabular}{c|c|c|c|c|c}
			 \hline\rule{0pt}{10pt}
			\multirow{2}{*}{} & \multirow{2}{*}{Msg.} & \multirow{2}{*}{Comm.} & \multirow{2}{*}{Block latency (rounds)} & \multicolumn{2}{c}{Bandwidth Cost} \\ \cline{5-6} \rule{0pt}{10pt}
			&                       &                 &       & Leader               & Others              \\ \hline\rule{0pt}{10pt}
			$\mathsf{BDT}$-$\mathsf{sCAST}$           & $\mathcal{O}(n^3)$                   & $\mathcal{O}(nB)$        &    \rev{3+$1$+$T_{\tcvba}$+$T_{\mathsf{Dumbo}}$}      &  $\mathcal{O}(nB)$                 &  $\mathcal{O}(B)$                \\ \hline\rule{0pt}{10pt}
			$\mathsf{BDT}$-$\mathsf{sRBC}$                 &$\mathcal{O}(n^3)$                   & $\mathcal{O}(nB)$      &     \rev{4+$1$+$T_{\tcvba}$+$T_{\mathsf{Dumbo}}$}     & $\mathcal{O}(B)$               & $\mathcal{O}(B)$                \\ \hline 
		\end{tabular}
	\end{footnotesize}
	\\{\ }
	{\footnotesize
		\begin{itemize}[leftmargin=0.8cm]
			\item[$^*$]			 Note that the worst-case block latency reflects the case of turning off the level-1 fallback. 
		\end{itemize}

	}
	\vspace{-0.3cm}
\end{table}

To summarize these, we can have the worst-case   performance illustrated in Table \ref{tab:bolt}.
Note that 
the     latency of generating a block shall consider the following possible worst case: 
the fastlane  times out to run pace-sync, but pace-sync   finalizes no fastlane block, and all parties have to start the pessimistic path to generate a block.
\rev{Thus, to count the worst-case latency,
we need to include:
(i) the timeout parameter $\tau$; (ii) the latency of fallback (including 1 round for multicast $\view$  message and the expected latency $T_{\tcvba}$ of ${\tcvba}$), and (iii) the expected pessimistic path latency $T_{\mathsf{Dumbo}}$. 
Here the timeout parameter $\tau$ in our system is not necessarily close to the network delay upper bound $\Delta$, 
and it can represent  some adversary-controlling ``clock ticks''  to approximate   the number of asynchronous rounds spent to generate each fastlane block, i.e., $\bigO(\tau) = \bigO(1)$. 
For example, in $\mathsf{BDT}$-$\mathsf{sCAST}$, $\tau$ can approximate   3  rounds because in $\mathsf{Bolt}$-$\mathsf{sCAST}$, the first fastlane block (i.e., the first ``heartbeat'') needs 3 rounds to deliver  and the interval of two successive fastlane blocks (i.e., the interval of two ``heartbeats'')  is 2 rounds; similarly, $\tau$ can approximate 4 rounds in $\mathsf{BDT}$-$\mathsf{sRBC}$.}



\smallskip
\noindent\underline{\smash{\em Complexities in comparision to other BFT consensuses.}} 
Here we also summarize the communication complexities of $\mathsf{BDT}$ and some known BFT protocols in the optimistic case and the worst case, respectively.
To quantify   the latency of those protocols in unstable network environment, 
Table \ref{tab:performance_comparison} also  lists each protocol's  average latency (in ``unit'' of fastlane's good-case latency). 

This metric considers that the fastlane   has a probability $\alpha\in [0,1]$ to output blocks in time, 
and also has a chance of $\beta = 1-\alpha$ that  falls back and then executes the pessimistic asynchronous protocol. 
Let  $C$ be the latency of using earlier asynchronous protocols \cite{cachin01,honeybadger} directly as the pessimistic path
and $c$ be that of the state-of-the-art  $\dumbo$ BFT \cite{guo2020dumbo} and that of using MVBA  for  synchronization during fallback.
Both $C$ and $c$ are  represented in the unit of fastlane latency. According to the experimental data \cite{honeybadger,guo2020dumbo}, $C$ is normally at hundreds and the latter $c$ is typically dozens (\cite{honeybadger} runs $n$ instances of $\ABA$, thus rounds depend on number of parties, while \cite{guo2020dumbo} reduces it to constant).   
Our fallback  is almost as fast as the fastlane, so its magnitude around one.
As such, we can do a simple calculation as shown in Table  \ref{tab:performance_comparison} to    {\em roughly estimate}   the latency of all those protocols deployed in the realistic fluctuating network.

\begin{table}[h]
	\vspace{-0.2cm}
	\captionsetup{font={normalsize}}
	\begin{center}
		\centering
		
		\begin{footnotesize}
			\newcommand\xrowht[2][0]{\addstackgap[.5\dimexpr#2\relax]{\vphantom{#1}}}
			\caption{Complexities of BFT protocols in various settings
				(where $B$ is sufficiently large s.t. all $\lambda$ terms are omitted, and $\alpha + \beta =1$)}  
			\vspace{-0.3cm}
			\label{tab:performance_comparison}  
			\resizebox{0.49\textwidth}{!}{%
				\begin{tabular}{c|c|c|c}  
					\hline 	    \rule{0pt}{10pt}
					\multirow{2}{*}{Protocol}&\multicolumn{2}{c|}{Per-block Com. Compl.}&\multicolumn{1}{c}{Normalized average latency}\cr
					\cline{2-3}  \rule{0pt}{10pt}
					& Optim. &Worst& considering fastlane latency as ``unit''\cr  
					
					\hline  \xrowht{8pt}
					PBFT \cite{pbft}  &$\mathcal{O}(nB)$&$\infty$&  $1/\alpha$ \cr
					\hline \rule{0pt}{8pt}
					HotStuff \cite{yin2018hotstuff-full} &$\mathcal{O}(nB)$&$\infty$&$1/\alpha$ \cr
					\hline \rule{0pt}{8pt}
					HBBFT \cite{honeybadger} &$\mathcal{O}(nB)$&$\mathcal{O}(nB)$&  $C$ \cr
					\hline \rule{0pt}{8pt}
					Dumbo \cite{guo2020dumbo} &$\mathcal{O}(nB)$&$\mathcal{O}(nB)$& $c$ \cr
					\hline \xrowht{9pt}
					KS02 \cite{kursawe05}&$\mathcal{O}(n^2B)$&$\mathcal{O}(n^3B)$& $(\alpha+\frac{\beta}{C+kc})^{-1}$  $^\star$\cr
					\hline \xrowht{9pt}
					RC05 \cite{cachin05} &$\mathcal{O}(nB)$&$\mathcal{O}(n^3B)$& $(\alpha+\frac{\beta}{C+c})^{-1}$\cr
					
					\hline \xrowht{9pt}
					BDT  (ours) &$\mathcal{O}(nB)$&$\mathcal{O}(nB)$& $(\alpha+\frac{\beta}{c+1})^{-1}$\cr				
					\hline   
				\end{tabular}  
			}
		\end{footnotesize}
	\end{center}
	{
		\footnotesize	
		\begin{itemize}[leftmargin=0.4cm]
			\item[$^\star$]		 There is an integer parameter $k$ in \cite{kursawe05} to specify the degree of parallelism for the fastlane, thus probably incurring extra cost of fallback.
		\end{itemize}	

	}
	\vspace{-0.3cm}
\end{table}

\smallskip
\noindent
{\bf Numerical analysis on latency in unstable network}.
To understand the applicability level of $\BDT$ framework, we further conduct more precise numerical estimations  to visualize the average latency of $\mathsf{BDT}$ and prior art (e.g. RC05) in   the  unstable Internet deployment environment, in particular for some typical scenarios between the best   and the worst cases.

\begin{figure}[h] 
	\vspace{-0.1cm}
	\captionsetup{font={normalsize}}
	\begin{center}
		\includegraphics[width=7.8cm]{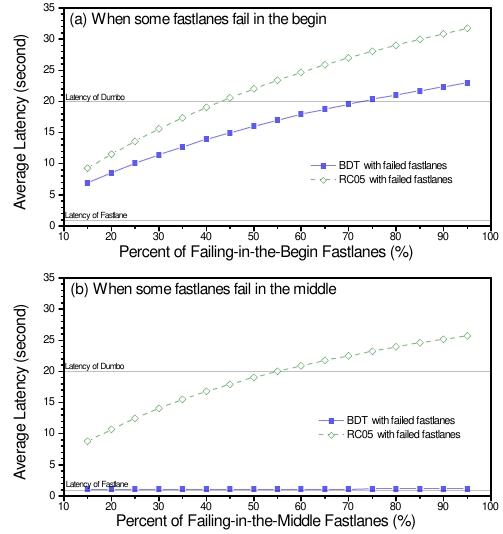}\\
		\vspace{-0.2cm}
		\caption{Numerical analysis to reflect the average latency of BDT and RC05 \cite{cachin05} in fluctuating deployment environment. The analysis methodology   is similar to the formulas in Table  \ref{tab:performance_comparison} except that here consider more protocol parameters such as batch size, epoch size, timeout, etc.
		}	
		\label{fig:numerical}
		\vspace{-0.3cm}
	\end{center}
\end{figure}

The real-world experiment data shown in Section \ref{evaluation} is considered to specify the cost of each protocol module in the estimations. In particular,
we use our experimental results over the globe when $n=100$ to specify the parameters used in the numerical estimations
regarding both RC05 and $\mathsf{BDT}$: we set the latency of fastlane as 1 second (to reflect the actual latency of $\mathsf{Bolt}$) and set the latency of pessimistic path  as 20 seconds (according to the measured latency of $\dumbo$); the fastlane block and the pessimistic   block are set to contain  $10^4$ and  $10^6$ transactions, respectively;  
the MVBA fallback in RC05 is set to use 10 seconds and our $\trans$ is set to cost 1 second (cf. Figure \ref{fig:overhead}); for fair comparison, we let RC05 to use the state-of-the-art $\dumbo$ protocol as its pessimistic path;
other protocols parameters (e.g., epoch size and timeout) are also taken into the consideration and are set as same as those  in the experiments. Note that a  ``second'' in the simulations is  a measurement of virtual time (normalized by the fastlane latency)  rather than a second in the real world.

We   consider two simulated scenarios. 
One is illustrated in Figure \ref{fig:numerical} (a), in which there are some portion of fastlane instances that completely fail and output nothing but just timeout and fallback after idling for 2.5 seconds, while the else fastlane instances successfully output  all optimistic blocks. In the case,  $\mathsf{BDT}$ can save up to almost 10 seconds on average latency relative to RC05. This is a result of the much more efficient fallback mechanism;
more importantly, the efficient fallback brings much robuster performance against unstable network environment, for example, RC05 starts to perform worse than $\dumbo$ once more than 45\% fastlane instances completely fail in the beginning of their executions, while $\mathsf{BDT}$ can be faster than $\dumbo$ until more than 75\% fastlane instances completely fail.
The other   case is shown in Figure \ref{fig:numerical} (b), where some fastlane instances stop to progress in the middle of their executions (e.g., stop to progress after 25 optimistic blocks are finalized) and then wait for 2.5 seconds to timeout and fallback. In the case, $\mathsf{BDT}$ performs {\em almost as fast as its underlying fastlane} (i.e., the average delay is really close to 1 second) despite the overheads of   timeout and $\trans$ in this fluctuating network condition; in contrast, RC05 can be an order of magnitude slower than $\mathsf{BDT}$, and it would be even slower than $\dumbo$ if more than 55\% fastlane instances fail to progress in the middle of their optimistic executions.

\ignore{

\section{Detailed comparison with Abstract}\label{sec:700}

\rev{This section discusses the differences between $\mathsf{nw\textnormal{-}ABC}$ and Abstract \cite{700aublin}. Then, we analyze  that the Abstract is suboptimal compared to $\mathsf{nw\textnormal{-}ABC}$.}

\rev{First, the notarizability and total order  in BDT require that if any party outputs a valid block[k], then at least $f+1$ honest parties output a valid same block[k-1]. While  the output (called abort history) of Abstract just has the property of abort order that it has the commit history as a prefix, and different honest parties could have different abort histories even if the length of all abort histories is the same. It's clear the output of Abstract doesn't satisfy the total order. From the properties of Abstract start, it's also unclear how to extract the commit history from the abort history unless some specific conditions are met. It is worth noting that our nw-ABC is executed in such a way that when two consecutive blocks [i] and [i+1] are output, we commit block [i] and leave block [i+1] pending.}

\rev{Second, optimistic liveness in BDT explicitly requires that a block can be output within constant rounds in optimistic conditions. We stress that the property is important to fastlane. With optimistic liveness at hand, we will discard some sub-optimal protocols when instantiating fastlane. While Abstract only requires each tx to have eventual output, such as the chain substantiation in ALIPH \cite{700aublin} needs at least $n$ rounds, under the optimistic condition, $\mathsf{nw\textnormal{-}ABC}$ requires any installation' round  never appear the same case.}

\rev{Third, abandonability  in BDT requires that (1) an honest party will not output any blocks in Bolt[id] after invoking abandon(id). (2) if $f + 1$ honest parties call abandon(id) before outputting block[j], then no party can output a valid block[j + 1].  While in Abstract, from the definition point of view, after Abort, Abstract can’t prevent a party from outputting a valid block[j + 1] unless some specific function is added for some concrete instantiation.}

\rev{Without notarizability and total order in Abstract, which are the main differences between it and $\mathsf{nw\textnormal{-}ABC}$. Suppose we borrow the Abstract to our BDT, then it is difficult to decide the input of $\tcvba$ if we replace  $\mathsf{nw}$-$\mathsf{ABC}$ with Abstract, according to the properties of Abstract. So we think Abstract is unsuitable for $\tcvba$. As stated previously,  \cite{700aublin} has to  leverage full-fledged $\mathsf{SMR}$ to realize pace-sync, but $\mathsf{SMR}$ is at least as heavy as $\mathsf{MVBA}$. Hence, the composition of Abstract and full-fledged $\mathsf{SMR}$ is more expensive than the composition of $\mathsf{nw}$-$\mathsf{ABC}$ and $\tcvba$.	What is more, our BDT decouples pace-sync from the pessimistic path. If there is some progress in an epoch, it could directly switch back to the fastlane after pace-sync, thus completely avoiding the pessimistic path. The same idea is difficult to achieve in \cite{700aublin}.  However, it is still possible to suffer from large latency on the normal Internet where there are small network fluctuations.}

}

\end{document}